\newtheorem{theorem}{Theorem}
\newtheorem{lemma}{Lemma}
\theoremstyle{definition}
\newtheorem{definition}{Definition}
\theoremstyle{remark}
\renewcommand{\Re}{\mathrm{Re}}
\renewcommand{\Im}{\mathrm{Im}}
\newcommand{\norm}[1]{\Vert #1 \Vert}
\newcommand{\abs}[1]{\vert #1 \vert}
\newcommand{\absLR}[1]{\left\vert #1 \right\vert}
\newcommand{\ket}[1]{\vert{ #1 }\rangle}
\newcommand{\bra}[1]{\langle{ #1 }\vert}
\newcommand{\ketbra}[2]{\vert #1 \rangle \langle #2 \vert}
\newcommand{\braket}[2]{\langle #1 \vert #2 \rangle}
\newcommand{\mean}[1]{\langle #1 \rangle}
\newcommand{\sgn}{\mathrm{sgn}}
\newcommand{\bfH}{\mathbf{H}}
\newcommand{\bfS}{\mathbf{S}}
\newcommand{\bfa}{\mathbf{a}}
\newcommand{\BLUE}[1]{{\color{blue} #1}}
\renewcommand{\leq}{\leqslant}
\renewcommand{\geq}{\geqslant}
\begin{document}

\title{Measurement-efficient quantum Krylov subspace diagonalisation}

\author{Zongkang Zhang}
\thanks{These two authors contributed equally.}
\affiliation{Graduate School of China Academy of Engineering Physics, Beijing 100193, China}
\orcid{0000-0002-9605-9968}

\author{Anbang Wang}
\thanks{These two authors contributed equally.}
\affiliation{Graduate School of China Academy of Engineering Physics, Beijing 100193, China}
\orcid{0000-0002-3493-5053}

\author{Xiaosi Xu}
\affiliation{Graduate School of China Academy of Engineering Physics, Beijing 100193, China}
\orcid{0000-0002-4894-8322}

\author{Ying Li}
\email{yli@gscaep.ac.cn}
\affiliation{Graduate School of China Academy of Engineering Physics, Beijing 100193, China}
\orcid{0000-0002-1705-2494}

\begin{abstract}
The Krylov subspace methods, being one category of the most important classical numerical methods for linear algebra problems, can be much more powerful when generalised to quantum computing. However, quantum Krylov subspace algorithms are prone to errors due to inevitable statistical fluctuations in quantum measurements. To address this problem, we develop a general theoretical framework to analyse the statistical error and measurement cost. Based on the framework, we propose a quantum algorithm to construct the Hamiltonian-power Krylov subspace that can minimise the measurement cost. In our algorithm, the product of power and Gaussian functions of the Hamiltonian is expressed as an integral of the real-time evolution, such that it can be evaluated on a quantum computer. We compare our algorithm with other established quantum Krylov subspace algorithms in solving two prominent examples. To achieve an error comparable to that of the classical Lanczos algorithm at the same subspace dimension, our algorithm typically requires orders of magnitude fewer measurements than others. Such an improvement can be attributed to the reduced cost of composing projectors onto the ground state. These results show that our algorithm is exceptionally robust to statistical fluctuations and promising for practical applications. 
\end{abstract}


\maketitle

\section{Introduction}

Finding the ground-state energy of a quantum system is of vital importance in many fields of physics~\cite{dagotto1994,wall1995,caurier2005}. The Lanczos algorithm~\cite{lanczos1950,saad1992} is one of the most widely used algorithms to solve this problem. It belongs to the Krylov subspace methods~\cite{bjorck2015}, in which the solution usually converges to the true answer with an increasing subspace dimension. However, such methods are unscalable for many-body systems because of the exponentially-growing Hilbert space dimension~\cite{avella2012}. In quantum computing, there is a family of hybrid quantum-classical algorithms that can be regarded as a quantum generalisation of the classical Lanczos algorithm~\cite{motta2019,yeter2020,parrish2019,stair2020,bespalova2021,cohn2021,klymko2022,cortes2022,epperly2022,shen2022,kyriienko2020,seki2021,kirby2022}. Following Refs.~\cite{stair2020,cortes2022,epperly2022}, we call them quantum Krylov subspace diagonalisation (KSD). These algorithms are scalable with the system size by carrying out the classically-intractable vector and matrix arithmetic on the quantum computer. 
They possess potential advantages as the conventional quantum algorithm to solve the ground-state problem, quantum phase estimation, requires considerable quantum resources~\cite{babbush2018,lee2021}, while the variational quantum eigensolver is limited by the ansatz and classical optimisation bottlenecks~\cite{peruzzo2014, mcclean2018}. 

However, Krylov subspace methods are often confronted with the obstacle that small errors can cause large deviations in the ground-state energy. This issue is rooted in the Krylov subspace that is spanned by an almost linearly dependent basis~\cite{parlett1998}. Contrary to classical computing, in which one can exponentially suppress rounding errors by increasing the number of bits, quantum computing is inherently subject to statistical error. 
Since statistical error decreases slowly with the measurement number $M$ as $\propto 1/\sqrt{M}$, an extremely large $M$ can be required to reach an acceptable error. In this case, although quantum KSD algorithms perform well in principle, the measurement cost has to be assessed and optimised for realistic implementations~\cite{epperly2022,kirby2022}. 

In this work, we present a general and rigorous analysis of the measurement cost in quantum KSD algorithms. Specifically, we obtain an upper bound formula of the measurement number that is applicable to all quantum KSD algorithms. Then, we propose an algorithm to construct the Hamiltonian-power Krylov subspace~\cite{bjorck2015}. In our algorithm, we express the product of Hamiltonian power and a Gaussian function of Hamiltonian as an integral of real-time evolution. In this way, the statistical error decreases exponentially with the power, which makes our algorithm measurement-efficient. We benchmark quantum KSD algorithms by estimating their measurement costs in solving the anti-ferromagnetic Heisenberg model and the Hubbard model. Various lattices of each model are taken in the benchmarking. It is shown that the measurement cost in our algorithm is typically orders of magnitude fewer than other algorithms to reach an error comparable to the classical Lanczos algorithm. We also demonstrate the measurement efficiency of our algorithm by composing a ground-state projector.

\section{Krylov subspace diagonalisation}

First, we introduce the KSD algorithm and some relevant notations. The algorithm starts with a reference state $\ket{\varphi}$ (or a set of reference states \cite{parrish2019,stair2020}; we focus on the single-reference case in this work). Then we generate a set of basis states 
\begin{eqnarray}\label{eq:basis}
\ket{\phi_k} = f_k(H)\ket{\varphi},
\end{eqnarray}
where $H$ is the Hamiltonian, and $f_1,f_2,\ldots,f_d$ are linearly-independent $(d-1)$-degree polynomials (to generate a $d$-dimensional Krylov subspace). For example, it is conventional to take the power (P) function $f_k(H) = H^{k-1}$ in the Lanczos algorithm. These states span a subspace called the Krylov subspace. We compute the ground-state energy by solving the generalised eigenvalue problem 
\begin{eqnarray}\label{eq:eig}
\bfH\bfa = E\bfS\bfa,
\end{eqnarray}
where $\bfH_{k,q} = \bra{\phi_k}H\ket{\phi_q}$ and $\bfS_{k,q} = \braket{\phi_k}{\phi_q}$. Let $E_{min}$ be the minimum eigenvalue of the generalised eigenvalue problem. The error in the ground-state energy is \begin{eqnarray}\label{eq:eK}
\epsilon_K = E_{min} - E_g,
\end{eqnarray}
where $E_g$ is the true ground-state energy. We call $\epsilon_K$ the subspace error. One can also construct generalised Krylov subspaces in which $f_k$ are functions of $H$ other than polynomials; see Table~\ref{table}. Appendix~\ref{app:intro} provides a more detailed introduction to the KSD algorithm.

In the literature, subspaces generated by variational quantum circuits~\cite{parrish2019a,nakanishi2019,huggins2020} and stochastic time evolution~\cite{stair2022} are also proposed.
Quantum KSD techniques can also be used to mitigate errors caused by imperfect gates~\cite{mcclean2017,mcclean2020,yoshioka2022}.
In this work, we focus on Hamiltonian functions because of their similarities to conventional Krylov subspace methods.

\begin{table}[tbhp]
\centering
\begin{tabularx}{\linewidth}{ 
   >{\raggedright\arraybackslash}X 
   >{\centering\arraybackslash}c 
   >{\raggedleft\arraybackslash}X} 
\hline\hline
Abbr. & $f_k(x)$ & Refs. \\ 
\hline
P & $x^{k-1}$ & \cite{seki2021,bespalova2021} \\
CP & $T_{k-1}(x/h_{tot})$ & \cite{kirby2022} \\
GP & $x^{k-1} e^{-\frac{1}{2}x^2\tau^2}$ & This work \\
IP & $x^{-(k-1)}$ & \cite{kyriienko2020} \\
ITE & $e^{-\tau (k-1)x}$ & \cite{motta2019,yeter2020} \\
RTE & $e^{-ix\Delta t\left(k-\frac{d+1}{2}\right)}$ & \cite{parrish2019,stair2020,bespalova2021,cohn2021,klymko2022,cortes2022,epperly2022,shen2022} \\
F & \makebox[1cm][c]{$L^{-1}\sum_{l=1}^L e^{-i\left[x-\Delta E\left(k-1\right)\right]\Delta t\left(l-\frac{L+1}{2}\right)}$} & \cite{cortes2022} \\
\hline\hline
\end{tabularx}
\caption{Operators $f_k(x)$ generating the basis of a (generalised) Krylov subspace. Here $x=H-E_0$, where $E_0$ is a constant. In different algorithms, $f_k$ can be power (P), Chebyshev polynomial (CP), Gaussian-power (GP), inverse power (IP) or exponential [i.e.~imaginary-time evolution (ITE), real-time evolution (RTE) and filter (F)] functions of the Hamiltonian. $T_n$ is the $n$-th Chebyshev polynomial of the first kind, and $E_0 = 0$ in the CP basis. For a Hamiltonian expressed in the form $H = \sum_j h_j \sigma_j$, where $\sigma_j$ are Pauli operators, $h_{tot} = \sum_j \abs{h_j}$ is the 1-norm of coefficients. $\tau$, $\Delta t$ and $\Delta E$ are some real parameters. }
\label{table}
\end{table}

\section{Statistical error and regularisation}

In addition to $\epsilon_K$, the other error source is the statistical error. In quantum KSD algorithms, matrices $\bfH$ and $\bfS$ are obtained by measuring qubits at the end of certain quantum circuits. Measurements yield estimators $\hat{\bfH}$ and $\hat{\bfS}$ of the two matrices, respectively. Errors in $\hat{\bfH}$ and $\hat{\bfS}$ depend on the measurement number. Suppose the budget for each complex matrix entry is $2M$ measurements (the budget for each part is $M$). Variances of matrix entries have upper bounds in the form 
\begin{eqnarray}\label{eq:varH1}
\mathrm{Var}(\hat{\bfH}_{k,q})\leq 2C_{\bfH}^2/M
\end{eqnarray}
and 
\begin{eqnarray}\label{eq:varS1}
\mathrm{Var}(\hat{\bfS}_{k,q})\leq 2C_{\bfS}^2/M,
\end{eqnarray}
where $C_{\bfH}$ and $C_{\bfS}$ are some factors depending on the measurement protocol. For example, suppose each entry of $\bfH$ and $\bfS$ can be expressed in the form $\sum_s q_s\bra{\varphi}U_s\ket{\varphi}$, where $U_s$ are unitary operators, and we measure each term using the Hadamard test~\cite{ekert2002}, then the variance of the entry has the upper bound $2C^2/M$, where $C = \sum_s\abs{q_s}$. 


We quantify the error in a matrix with the spectral norm. The distributions of $\norm{\hat{\bfH}-\bfH}_2$ and $\norm{\hat{\bfS}-\bfS}_2$ depend on not only the measurement number but also the correlations between matrix entries. We can measure matrix entries independently, then their distributions are uncorrelated. However, certain matrix entries take the same value, and we can measure them collectively. For example, for the P basis, all entries $\bfH_{k-q,q}$ take the same value as $\bfH_{k-1,1}$. We only need to measure one of them on the quantum computer, and then these entries become correlated. In Appendix~\ref{app:norm_distribution}, we analyse the distributions of spectral norms for each basis in Table~\ref{table}. 

Errors in matrices cause an error in the ground-state energy in addition to $\epsilon_K$. It is common that the subspace basis is nearly linearly dependent, which makes the overlap matrix $\bfS$ ill-conditioned. Then, even small errors in matrices may cause a serious error in the ground-state energy. Thresholding is a method to overcome this problem~\cite{motta2019,epperly2022}. In this work, we consider the regularisation method; see Algorithm~\ref{alg:QKSD}. An advantage of the regularisation method is that by taking a proper regularisation parameter $\eta$, the resulting energy $\hat{E}_{min}$ is variational, i.e. $\hat{E}_{min}\geq E_g$ up to a controllable failure probability. 

\begin{table*}[tbhp]
\centering
\begin{tblr}{|c|c|c|c|c|c|}
\hline
Protocol & $\eta$  & $\alpha$ & $\beta$ & Basis \\ \hline
IM (Chebyshev) & $\frac{2d^2}{\sqrt{M\kappa}}$ & $\frac{256}{\kappa}$ & $d^6$ & \SetCell[r=2]{c}All \\ \cline{1-4}
IM (Hoeffding) & $\sqrt{\frac{2d^2}{M}\ln{\frac{8d^2}{\kappa}}}$ & $128\ln\frac{1}{\kappa}$ & $d^4$ &   \\ \hline
CM (Real Hankel) & \SetCell[r=2]{c}$\sqrt{\frac{2d}{M}\ln{\frac{4d}{\kappa}}}$  & $64\ln\frac{1}{\kappa}$ & $d(2d-1)$ &  P, GP, IP, ITE \\ \cline{1-1} \cline{3-5} 
CM (Real symmetric)	&  & $32\ln\frac{1}{\kappa}$ & $d^2(d+1)$ & CP, F \\ \hline
CM (Complex Toeplitz)	& $\sqrt{\frac{2(2d-1)}{M}\ln{\frac{4d}{\kappa}}}$ & $64\ln\frac{1}{\kappa}$ & $(2d-1)^2$ &   RTE \\ \hline
\end{tblr}
\caption{The regularisation parameter $\eta$, and measurement cost factors $\alpha(\kappa)$ and $\beta(d)$. These quantities depend on the protocol for measuring matrix entries. Two types of protocols are considered: independent measurement (IM) and collective measurement (CM). Using the Chebyshev and Hoeffding inequalities, we obtain two different results (upper bounds) for the IM protocol. These factors are based on the protocols introduced in Appendices \ref{app:norm_distribution} and \ref{app:optimisation}. We remark that for CP, leveraging the characteristic of Chebyshev polynomials, the scaling with $d$ can be reduced to $O(d^2)$ \cite{kirby2022}.}
\label{tab:eta}
\end{table*}

We propose to choose the regularisation parameter $\eta$ according to the distributions of $\norm{\hat{\bfH}-\bfH}_2$ and $\norm{\hat{\bfS}-\bfS}_2$. Let $\kappa$ be the permissible failure probability. We take $\eta$ such that 
\begin{eqnarray}
&& \Pr\left(\norm{ \hat \bfH - \bfH}_2 \leq C_{\bfH}\eta \ \rm{and}\ \norm{ \hat \bfS - \bfS}_2 \leq C_{\bfS}\eta \right) \notag \\
&\geq & 1-\kappa.
\label{eq:condition}
\end{eqnarray}
In Table~\ref{tab:eta}, we list $\eta$ satisfying the above condition for each basis. The value of $\eta$ depends on the failure probability $\kappa$ and the measurement number $M$. Taking $\eta$ in this way, an upper bound of the energy error is given in the following lemma. 

\begin{lemma}\label{lem}
Let 
\begin{eqnarray}
E'(\eta,\bfa) = \frac{\bfa^\dag (\bfH+2C_{\bfH}\eta) \bfa}{\bfa^\dag (\bfS+2C_{\bfS}\eta) \bfa}.
\label{eq:Ep}
\end{eqnarray}
Under conditions $E_g<0$ and $\min_{\bfa\neq \bf{0}} E'(\eta,\bfa)<0$, 
the following inequality holds, 
\begin{eqnarray}\label{eq:hatE_bound}
\Pr\left(E_g \leq \hat{E}_{min} \leq \min_{\bfa\neq \bf{0}} E'(\eta,\bfa)\right) \geq 1-\kappa. 
\end{eqnarray}
\end{lemma}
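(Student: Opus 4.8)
\emph{Proof sketch.} The plan is to condition on the ``good event''
\[
\mathcal{E} = \left\{ \norm{\hat\bfH-\bfH}_2 \leq C_\bfH\eta \right\} \cap \left\{ \norm{\hat\bfS-\bfS}_2 \leq C_\bfS\eta \right\},
\]
which has $\Pr(\mathcal{E})\geq 1-\kappa$ by the defining property~\eqref{eq:condition} of $\eta$. I take Algorithm~\ref{alg:QKSD} to return the smallest eigenvalue of the regularised pencil $(\tilde\bfH,\tilde\bfS)$, where $\tilde\bfH=\hat\bfH+C_\bfH\eta\mathds{1}$ and $\tilde\bfS=\hat\bfS+C_\bfS\eta\mathds{1}$; since $\tilde\bfS$ is positive definite on $\mathcal{E}$, this equals $\hat E_{min}=\min_{\bfa\neq\bf{0}}\bfa^\dag\tilde\bfH\bfa/\bfa^\dag\tilde\bfS\bfa$. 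It then suffices to show that $\mathcal{E}$ is contained in the event of \eqref{eq:hatE_bound}, i.e. that on $\mathcal{E}$ both $E_g\leq\hat E_{min}$ and $\hat E_{min}\leq\min_{\bfa\neq\bf{0}}E'(\eta,\bfa)$ hold. The first move is to rewrite the two norm bounds in $\mathcal{E}$ as Loewner inequalities for the (Hermitian) error matrices, which gives, on $\mathcal{E}$,
\[
\bfH \preceq \tilde\bfH \preceq \bfH+2C_\bfH\eta\mathds{1}, \qquad \bfS \preceq \tilde\bfS \preceq \bfS+2C_\bfS\eta\mathds{1},
\]
and in particular $\tilde\bfS\succeq\bfS\succ 0$ (using that the basis states are linearly independent).

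For the lower bound I would fix $\bfa\neq\bf{0}$, set $\ket\psi=\sum_k a_k\ket{\phi_k}$, and chain
\[
\bfa^\dag\tilde\bfH\bfa \;\geq\; \bfa^\dag\bfH\bfa \;=\; \bra\psi H\ket\psi \;\geq\; E_g\braket\psi\psi \;=\; E_g\,\bfa^\dag\bfS\bfa \;\geq\; E_g\,\bfa^\dag\tilde\bfS\bfa ,
\]
the middle inequality being the variational principle and the last one using $E_g<0$ together with $\bfS\preceq\tilde\bfS$. Dividing by $\bfa^\dag\tilde\bfS\bfa>0$ and minimising over $\bfa$ yields $\hat E_{min}\geq E_g$.

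For the upper bound, let $\bfa^{\ast}$ attain the minimum of $E'(\eta,\cdot)$. Since this minimum is negative and its denominator $\bfa^{\ast\dag}(\bfS+2C_\bfS\eta\mathds{1})\bfa^{\ast}$ is positive, the numerator $\bfa^{\ast\dag}(\bfH+2C_\bfH\eta\mathds{1})\bfa^{\ast}$ must be strictly negative. Using $\tilde\bfH\preceq\bfH+2C_\bfH\eta\mathds{1}$ and $0<\bfa^{\ast\dag}\tilde\bfS\bfa^{\ast}\leq\bfa^{\ast\dag}(\bfS+2C_\bfS\eta\mathds{1})\bfa^{\ast}$, I would estimate
\begin{align*}
\hat E_{min} &\leq \frac{\bfa^{\ast\dag}\tilde\bfH\bfa^{\ast}}{\bfa^{\ast\dag}\tilde\bfS\bfa^{\ast}} \leq \frac{\bfa^{\ast\dag}(\bfH+2C_\bfH\eta\mathds{1})\bfa^{\ast}}{\bfa^{\ast\dag}\tilde\bfS\bfa^{\ast}} \\
&\leq \frac{\bfa^{\ast\dag}(\bfH+2C_\bfH\eta\mathds{1})\bfa^{\ast}}{\bfa^{\ast\dag}(\bfS+2C_\bfS\eta\mathds{1})\bfa^{\ast}} = \min_{\bfa\neq\bf{0}}E'(\eta,\bfa),
\end{align*}
where the second step shrinks the (negative) numerator over a fixed positive denominator, and the third enlarges the positive denominator of a negative fraction.

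I expect the only genuinely delicate point to be this last string of sign manipulations: the bound $\tilde\bfS\preceq\bfS+2C_\bfS\eta\mathds{1}$ only pushes the Rayleigh quotient in the desired direction when its numerator is already negative, which is exactly why the hypotheses $E_g<0$ and $\min_{\bfa\neq\bf{0}}E'(\eta,\bfa)<0$ cannot be dropped. Everything else — the norm-to-Loewner conversion, the variational principle, and deducing the probability bound from $\mathcal{E}\subseteq\{E_g\leq\hat E_{min}\leq\min_{\bfa\neq\bf{0}}E'(\eta,\bfa)\}$ — is routine, given that $\eta$ has been chosen so that \eqref{eq:condition} holds.
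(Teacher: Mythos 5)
Your proof is correct and follows essentially the same route as the paper's: condition on the event that both spectral-norm bounds hold (probability at least $1-\kappa$ by the choice of $\eta$), sandwich the regularised matrices between $\bfH,\bfS$ and $\bfH+2C_{\bfH}\eta\,,\bfS+2C_{\bfS}\eta$, get the lower bound from the variational principle together with $E_g<0$, and get the upper bound by evaluating at the minimiser of $E'$ and using $\min_{\bfa\neq \bf{0}}E'(\eta,\bfa)<0$ to justify the sign-sensitive denominator step. The paper merely packages these same manipulations into its intermediate Lemmas~\ref{lem:statements} and~\ref{lem:minE}; the content is identical.
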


See Appendix~\ref{app:proof} for the proof. Notice that we can always subtract a sufficiently large positive constant from the Hamiltonian to satisfy the conditions. The upper bound of the energy error in the regularisation method is $\min_{\bfa\neq \bf{0}} E'(\eta,\bfa) - E_g$, up to the failure probability.

\begin{figure}
\begin{minipage}{\linewidth}
\begin{algorithm}[H]
{\small
\begin{algorithmic}[1]
\caption{{\small Regularised quantum Krylov-subspace diagonalisation algorithm.}}
\label{alg:QKSD}
\Statex
\State Input $\hat{\bfH}$, $\hat{\bfS}$, $C_{\bfH}$, $C_{\bfS}$ and $\eta>0$. 
\State Solve the generalised eigenvalue problem 
$$(\hat{\bfH}+C_{\bfH}\eta)\bfa = E(\hat{\bfS}+C_{\bfS}\eta)\bfa$$
\State Output the minimum eigenvalue $\hat{E}_{min}$. 
\end{algorithmic}
}
\end{algorithm}
\end{minipage}
\end{figure}

\section{Analysis of the measurement cost}

In this work, we rigorously analyse the number of measurements. Let $\epsilon$ be the target error in the ground-state energy. We will give a measurement number sufficient (i.e. an upper bound of the measurement number necessary) for achieving the target. For the regularisation method, we already have an upper bound of the error $\min_{\bfa\neq \bf{0}} E'(\eta,\bfa) - E_g$. Therefore, we can derive such a measurement number by solving the equation 
\begin{eqnarray}
\min_{\bfa\neq \bf{0}} E'(\eta,\bfa) - E_g = \epsilon,
\label{eq:equation}
\end{eqnarray}
in which $\eta$ is the unknown. With the solution, we can work out the corresponding measurement number $M$ according to Table~\ref{tab:eta}. If matrix entries are measured independently, the total measurement number is $M_{tot} = 4d^2M$; we have the following theorem.

\begin{theorem}
Suppose $\epsilon>\epsilon_K$ and $E_g+\epsilon<0$. The total measurement number 
\begin{eqnarray}
M_{tot} = \frac{\alpha(\kappa)\beta(d)}{16\eta^2}
\label{eq:Mtot}
\end{eqnarray}
is sufficient for achieving the permissible error $\epsilon$ and failure probability $\kappa$, i.e.~$\hat{E}_{min}\in [E_g,E_g+\epsilon]$ with a probability of at least $1-\kappa$. Here, $\alpha(\kappa) = 256/\kappa$, $\beta(d) = d^6$, and $\eta$ is the solution to Eq. (\ref{eq:equation}). 
\label{the}
\end{theorem}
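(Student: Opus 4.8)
The plan is to read Theorem~\ref{the} as a two-step chain. First, fix the regularisation parameter $\eta$ to be the solution of Eq.~\eqref{eq:equation}; Lemma~\ref{lem} then converts the probabilistic premise~\eqref{eq:condition}---that $\norm{\hat{\bfH}-\bfH}_2$ and $\norm{\hat{\bfS}-\bfS}_2$ are simultaneously small with probability at least $1-\kappa$---directly into the desired statement $\hat E_{min}\in[E_g,E_g+\epsilon]$ with probability at least $1-\kappa$. Second, show that a total budget of $M_{tot}$ measurements, distributed as $M=M_{tot}/(4d^2)$ per real or imaginary part of each matrix entry (recall $M_{tot}=4d^2M$ for independent measurements), suffices to make~\eqref{eq:condition} hold at failure probability $\kappa$ for that same $\eta$; the variance bounds~\eqref{eq:varH1}--\eqref{eq:varS1} then let us read off the claimed formula for $M_{tot}$.

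For the first step I would verify that Eq.~\eqref{eq:equation} is solvable and that its solution lies where Lemma~\ref{lem} applies. Put $g(\eta)=\min_{\bfa\neq\mathbf{0}}E'(\eta,\bfa)$ (well defined for $\eta\geq 0$ once the overlap matrix $\bfS$ is nonsingular). Then $g$ is continuous on $[0,\infty)$, with $g(0)=E_{min}=E_g+\epsilon_K$, while dividing numerator and denominator of $E'(\eta,\bfa)$ by $\eta$ shows $g(\eta)\to C_{\bfH}/C_{\bfS}>0$ as $\eta\to\infty$. Since $\epsilon_K\geq 0$ by the Rayleigh--Ritz variational principle, the hypotheses $\epsilon>\epsilon_K$ and $E_g+\epsilon<0$ yield $g(0)=E_g+\epsilon_K<E_g+\epsilon<0<\lim_{\eta\to\infty}g(\eta)$, so the intermediate value theorem provides an $\eta>0$ with $g(\eta)=E_g+\epsilon$. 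For this $\eta$ we have $\min_{\bfa\neq\mathbf{0}}E'(\eta,\bfa)=E_g+\epsilon<0$ and $E_g<E_g+\epsilon<0$, which are exactly the two conditions of Lemma~\ref{lem}; provided $\eta$ has also been chosen---via the measurement budget---so that~\eqref{eq:condition} holds at this $\kappa$, the lemma gives $\Pr\big(E_g\leq\hat E_{min}\leq E_g+\epsilon\big)\geq 1-\kappa$.

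For the second step I would use the elementary bound $\norm{A}_2\leq\norm{A}_F\leq d\max_{k,q}\abs{A_{k,q}}$, valid for any $d\times d$ matrix $A$, so that $\norm{\hat{\bfH}-\bfH}_2>C_{\bfH}\eta$ forces $\abs{\hat{\bfH}_{k,q}-\bfH_{k,q}}>C_{\bfH}\eta/d$ for some $(k,q)$, and likewise for $\bfS$. With $M$ shots per part, \eqref{eq:varH1}--\eqref{eq:varS1} apply, and Chebyshev's inequality on $\abs{\hat{\bfH}_{k,q}-\bfH_{k,q}}^2$ bounds the per-entry failure probability by $(2C_{\bfH}^2/M)/(C_{\bfH}\eta/d)^2=2d^2/(M\eta^2)$. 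A union bound over the $d^2$ entries of each of the two matrices then shows that~\eqref{eq:condition} fails with probability at most $4d^4/(M\eta^2)$; requiring this to be $\leq\kappa$ reproduces the value $\eta=2d^2/\sqrt{M\kappa}$ listed in Table~\ref{tab:eta}, i.e.\ $M=4d^4/(\kappa\eta^2)$, whence $M_{tot}=4d^2M=16d^6/(\kappa\eta^2)=\alpha(\kappa)\beta(d)/(16\eta^2)$ with $\alpha(\kappa)=256/\kappa$ and $\beta(d)=d^6$. Together with the first step this proves the theorem.

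I expect the main obstacle to be the first step rather than the concentration estimate: one must be careful that Eq.~\eqref{eq:equation} genuinely has a root and that the root lands in the regime where Lemma~\ref{lem} is valid, which is exactly what the hypotheses $\epsilon>\epsilon_K$ and $E_g+\epsilon<0$ are designed to ensure. Since $g(\eta)$ is a pointwise minimum over $\bfa$ of ratios that individually need not be monotone in $\eta$, the clean route is via continuity together with the two limiting values $g(0)$ and $\lim_{\eta\to\infty}g(\eta)$, rather than through a derivative or monotonicity argument. Everything afterwards is routine---Chebyshev plus a union bound over matrix entries---followed by arithmetic matched against Table~\ref{tab:eta}; I note in passing that one could improve the $d$-scaling by applying Markov's inequality directly to $\norm{\cdot}_F^2$ instead of controlling entries one at a time, but the entry-wise argument is what reproduces the $\beta(d)=d^6$ appearing in the statement.
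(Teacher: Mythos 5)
Your proposal is correct and follows essentially the same route as the paper: establish that Eq.~(\ref{eq:equation}) has a positive root $\eta$, apply Chebyshev's inequality entry-wise together with $\norm{A}_2\leq\norm{A}_F\leq d\max_{k,q}\abs{A_{k,q}}$ to guarantee condition~(\ref{eq:condition}) at $M=4d^4/(\kappa\eta^2)$, and then invoke Lemma~\ref{lem} (the paper uses its ingredient Lemma~\ref{lem:minE}) to conclude, with the same arithmetic $M_{tot}=4d^2M=16d^6/(\kappa\eta^2)$. The one local difference is in the root-finding step: you use continuity plus the intermediate value theorem with the limit $\min_{\bfa\neq\mathbf{0}}E'(\eta,\bfa)\to C_{\bfH}/C_{\bfS}>0$, which yields existence but not uniqueness, whereas the paper's Lemma~\ref{lem:solution} proves strict monotonicity of $\eta\mapsto\min_{\bfa\neq\mathbf{0}}E'(\eta,\bfa)$ on the region where it is negative, which is what justifies the theorem's reference to \emph{the} solution; this is a cosmetic gap since sufficiency holds for any root, but worth noting. (Your union bound also quietly drops the paper's independence assumption used in its Bernoulli-inequality step, which is a mild strengthening.)
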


See Appendix~\ref{app:proof} for the proof. Similar to Lemma \ref{lem}, we can always satisfy the condition $E_g+\epsilon<0$ by subtracting a sufficiently large positive constant from the Hamiltonian. The above theorem also holds (up to some approximations) for collective measurements if we replace functions $\alpha(\kappa)$ and $\beta(d)$ according to Table~\ref{tab:eta}.

\subsection{Factoring the measurement cost}

To compare different bases, we would like to divide the measurement number into three factors. If matrix entries are measured independently, the first two factors are the same for different bases of Krylov subspaces, and the third factor depends on the basis. We find that the third factor is related to the ill-conditioned problem, and we can drastically reduce it with our measurement-efficient algorithm.

We can always write the total measurement number Eq.~(\ref{eq:Mtot}) as a product of three factors, 
\begin{eqnarray}
M_{tot} = \frac{\alpha(\kappa)\norm{H}_2^2}{p_g^2\epsilon^2}
\times \beta(d) \times \gamma.
\label{eq:Mtot_factoring}
\end{eqnarray}
The first factor is the cost of measuring the energy. Roughly speaking, it is the cost in the ideal case that one can prepare the ground state by an ideal projection; see Appendix~\ref{app:projection}. The spectral norm $\norm{H}_2$ characterises the range of the energy. Assume that the statistical error and $\epsilon$ are comparable, $M_{tot}\propto \norm{H}_2^2/\epsilon^2$ according to the standard scaling of the statistical error. $\alpha(\kappa)$ is the overhead for achieving the permissible failure probability $\kappa$. The success of KSD algorithms depends on a finite overlap between the reference state and the true ground state $\ket{\psi_g}$~\cite{bjorck2015,epperly2022,kirby2022}. This requirement is reflected in $M_{tot}\propto 1/p_g^2$, where $p_g = \abs{\braket{\psi_g}{\varphi}}^2$. The second factor $\beta(d)$ is the overhead due to measuring $d\times d$ matrices. There are more sources of statistical errors (matrix entries) influencing the eventual result when $d$ is larger. The remaining factor 
\begin{eqnarray}
\gamma = \frac{p_g^2\epsilon^2}{16\norm{H}_2^2\eta^2}
\label{eq:gamma}
\end{eqnarray}
is related to the spectrum of $\bfS$ (i.e.~the basis), as we will show in Section~\ref{sec:projector}. Notice that this expression of $\gamma$ holds for all the bases. In the numerical result, we find that the typical value of $\gamma$ for certain bases can be as large as $10^{12}$ to achieve certain $\epsilon$. We propose the Gaussian-power (GP) basis (see Table \ref{table}), and it can reduce $\gamma$ to about $4$.

\section{Measurement-efficient algorithm}

As we have shown, the measurement overhead $\gamma$ depends on the overlap matrix $\bfS$, i.e.~how we choose the basis of Krylov subspace. The main advantage of our algorithm is that we utilise a basis resulting in a small $\gamma$. 

To generate a standard Krylov subspace, we choose operators 
\begin{eqnarray}
f_k = (H-E_0)^{k-1}e^{-\frac{1}{2}(H-E_0)^2\tau^2},
\label{eq:fk}
\end{eqnarray}
where $E_0$ is a constant up to choice. We call it the GP basis. The corresponding subspace is a conventional Hamiltonian-power Krylov subspace, but the reference state $\ket{\varphi}$ has been {\it effectively} replaced by $e^{-\frac{1}{2}(H-E_0)^2\tau^2}\ket{\varphi}$. 

The spectral norm of $f_k$ operators for the GP basis [defined in Eq.~(\ref{eq:fk})] has the upper bound 
\begin{eqnarray}\label{eq:norm_bound}
\norm{f_k}_2\leq (\frac{k-1}{e\tau^2})^{\frac{k-1}{2}},     
\end{eqnarray}
where $e$ is Euler's constant. This upper bound is proven in Lemma~\ref{lem:norm} in Appendix~\ref{app:GPbasis}. Under the condition $e\tau^2>d-1$, the norm upper bound $(\frac{k-1}{e\tau^2})^{\frac{k-1}{2}}$ decreasing exponentially with $k$. This property is essential for the measurement efficiency of our algorithm.

We propose to realise the Gaussian-power basis by expressing the operator of interest as a linear combination of unitaries (LCU). The same approach has been taken to realise other bases like power~\cite{seki2021,bespalova2021}, inverse power~\cite{kyriienko2020}, imaginary-time evolution~\cite{huo2021}, filter~\cite{cortes2022} and Gaussian function of the Hamiltonian~\cite{zeng2021}.

Suppose we want to measure the quantity $\bra{\varphi}A\ket{\varphi}$. Here $A = f_k^\dag Hf_q$ and $A = f_k^\dag f_q$ for $\bfH_{k,q}$ and $\bfS_{k,q}$, respectively. First, we work out an expression in the from $A = \sum_s q_s U_s$, where $U_s$ are unitary operators, and $q_s$ are complex coefficients. Then, there are two ways to measure $\bra{\varphi}A\ket{\varphi}$. When the expression has finite terms, we can apply $A$ on the state $\ket{\varphi}$ by using a set of ancilla qubits; In the literature, LCU usually refers to this approach~\cite{childs2012}. Alternatively, we can evaluate the summation $\bra{\varphi}A\ket{\varphi}$ using the Monte Carlo method~\cite{faehrmann2022} by averaging Hadamard-test~\cite{ekert2002} estimates of randomly selected terms $\bra{\varphi}U_s\ket{\varphi}$. The second way works for an expression with even infinite terms and requires only one or zero ancilla qubits~\cite{o2021,lu2021}. We focus on the Monte Carlo method in this work. 

Now we give our LCU expression. Suppose we already express $H$ as a linear combination of Pauli operators, it is straightforward to work out the LCU expression of $A$ given the expression of $f_k$. Therefore, we only give the expression of $f_k$. Utilising the Fourier transformation of a modified Hermite function (which is proved in Appendix~\ref{app:GPbasis}), we can express $f_k$ in Eq.~(\ref{eq:fk}) as 
\begin{eqnarray}
f_k &=& \frac{i^{k-1}}{2^{\frac{k-1}{2}}\tau^{k-1}}
\int_{-\infty}^{+\infty}dt H_{k-1}\left(\frac{t}{\sqrt{2}\tau}\right)g_\tau(t) \notag\\
&&\times e^{-ixt},
\label{eq:LCU}
\end{eqnarray}
where $H_n(u)$ denotes Hermite polynomials~\cite{arfken2013}, $g_\tau(t) = \frac{1}{\tau\sqrt{2\pi}}e^{-\frac{t^2}{2\tau^2}}$ is the normalised Gaussian function, $x = H-E_0$, and $e^{-ixt}$ is the real-time evolution (RTE) operator. Notice that in Eq.~(\ref{eq:LCU}) $f_k$ is expressed as a linear combination of RTE operators $e^{-iHt}$, and the summation in conventional LCU is replaced with an integral over $t$. See Appendix~\ref{app:algorithm} for how to evaluate this integral using the Monte Carlo method. 

\subsection{Real-time evolution}
In order to implement the LCU expression Eq.~(\ref{eq:LCU}), we need to implement the RTE $e^{-iHt}\ket{\varphi}$.
There are various protocols for implementing RTE, including Trotterisation~\cite{lloyd1996,berry2007,wiebe2010}, LCU~\cite{childs2012,berry2015,meister2022,faehrmann2022} and qDRIFT~\cite{campbell2019}, etc. In most of the protocols, RTE is inexact but approaches the exact one when the circuit depth increases. Therefore, all these protocols work for our algorithm as long as the circuit is sufficiently deep. In this work, we specifically consider an LCU-type protocol, the zeroth-order leading-order-rotation formula~\cite{yang2021}. 

In the leading-order-rotation protocol, RTE is {\it exact} even when the circuit depth is finite. In this protocol, we express RTE in the LCU form 
\begin{eqnarray}\label{eq:LOR}
e^{-iHt} = \left[\sum_r v_r(t/N) V_r(t/N)\right]^N,     
\end{eqnarray}
where $V_r(t) = e^{-i\phi(t)\sigma},\sigma$ is a rotation or Pauli operator, $v_r(t)$ are complex coefficients, and $N$ is the time step number. This exact expression of RTE is worked out in the spirit of Taylor expansion and contains infinite terms. Notice that $N$ determines the circuit depth. For details of the zeroth-order leading-order-rotation formula, see Ref.~\cite{yang2021} or Appendix~\ref{app:algorithm} in this paper, where Eq.~(\ref{eq:LOR2}) gives the specific form of Eq.~(\ref{eq:LOR}). Substituting Eq.~(\ref{eq:LOR}) into Eq.~(\ref{eq:LCU}), we obtain the eventual LCU expression of $f_k$, a concatenation of two LCU expressions, i.e.
\begin{eqnarray}\label{eq:LCU2}
f_k &=& \frac{i^{k-1}}{2^{\frac{k-1}{2}}\tau^{k-1}}
\int_{-\infty}^{+\infty}dt H_{k-1}\left(\frac{t}{\sqrt{2}\tau}\right)g_\tau(t) e^{i E_0 t} \notag\\
&&\times \left[\sum_r v_r(t/N) V_r(t/N)\right]^N.
\end{eqnarray}
With the above expression, we have written $f_k$ as the linear combination of $V_r(t/N)$. Therefore, we can use the Monte Carlo method to realise $f_k$: We sample $t$ and $r$ according to the coefficients, then we implement the corresponding $V_r(t/N)$ on the quantum computer.

\subsection{Bias and variance}

Let $\hat{A}$ be the estimator of $\bra{\varphi}A\ket{\varphi}$. There are two types of errors in $\hat{A}$, bias and variance. Because RTE is exact in the leading-order-rotation protocol, the estimator $\hat{A}$ is unbiased. Therefore, we can focus on the variance from now on. 

If we use the Monte Carlo method and the one-ancilla Hadamard test to evaluate the LCU expression of $A$, 
the variance has the upper bound 
\begin{eqnarray}\label{eq:variance_bound}
\mathrm{Var}(\hat{A}) \leq \frac{2C_A^2}{M},
\end{eqnarray}
where $2M$ is the measurement number, and $C_A = \sum_s \abs{q_s}$ is the 1-norm of coefficients in $A=\sum_s q_s U_s$. Here, $A$ is a summation of unitary operators. The generalisation to integral is straightforward. We call $C_A$ the cost of the LCU expression. We remark that the variance in this form is universal and valid for all algorithms with some proper factor $C_A$. 

The cost of an LCU expression is related to the spectral norm. Notice $\norm{U_s}_2 = 1$, we immediately conclude that $C_A\geq \norm{A}_2$. Therefore, the best we can achieve is a carefully designed LCU expression that $C_A$ is as close as possible to $\norm{A}_2$. Only when $\norm{f_k}_2$ decreases exponentially with $k$, it is possible to measure $\bra{\varphi}A\ket{\varphi}$ with an error decreasing exponentially with $k$.

In our algorithm, we find that the cost has the form 
\begin{eqnarray}
C_A = h_{tot}c_kc_q\text{ and }c_kc_q
\end{eqnarray}
for $\bfH_{k,q}$ and $\bfS_{k,q}$, respectively. $c_k$ is the cost due to $f_k$, and 
\begin{eqnarray}
c_k &=& \frac{1}{2^{\frac{k-1}{2}}\tau^{k-1}}
\int_{-\infty}^{+\infty}dt \absLR{H_{k-1}\left(\frac{t}{\sqrt{2}\tau}\right)}g_\tau(t) \notag \\
&&\times \left[\sum_r \absLR{v_r\left(\frac{t}{N}\right)}\right]^N
\leq 2 \left(\frac{k-1}{e\tau^2}\right)^{\frac{k-1}{2}}.
\label{eq:ck}
\end{eqnarray}
This upper bound holds when $N \geq 4eh_{tot}^2\tau^2$, which is a requirement on the circuit depth $N$. The proof is in Appendix~\ref{app:algorithm}. With the upper bound, we can find that we already approach the lower bound of the variance given by the spectral norm (The difference is a factor of $2$). As a result, the variance of $\hat{A}$ decreases exponentially with $k$ and $q$.


Detailed pseudocodes and analysis are given in Appendix~\ref{app:algorithm}. There are two ways to apply Theorem~\ref{the}: First, we can take $C_{\bfH} = h_{tot}C_{\bfS}$ and $C_{\bfS} = \max\{c_kc_q\}$; Second, we can rescale the basis by taking $f'_k = f_k/c_k$. Matrix entries $\hat{\bfH}'_{k,q}$ and $\hat{\bfS}'_{k,q}$ of the rescaled basis $\{f'_k\ket{\varphi}\}$ have variance upper bounds $2h_{tot}^2/M$ and $2/M$, respectively. Therefore, $C_{\bfH} = h_{tot}$ and $C_{\bfS} = 1$ for the rescaled basis. We take the rescaled basis in the numerical study. 

\section{Measurement overhead benchmarking}\label{sec:benchmarking}

With Theorem~\ref{the}, we can benchmark the measurement cost in quantum KSD algorithms listed in Table~\ref{table}. We focus on the overhead factor $\gamma$, while ignoring the slight differences in the other two factors in different algorithms (See Table~\ref{tab:eta}). Given a target error $\epsilon$, first we work out the solution $\eta$ to Eq.~(\ref{eq:equation}), then $\gamma$ is calculated according to Eq.~(\ref{eq:gamma}).

Two models of strongly correlated systems, namely, the anti-ferromagnetic Heisenberg model and Hubbard model~\cite{avella2012,anderson1987,lee2006,seki2020}, are used in benchmarking. For each model, the lattice is taken from two categories: regular lattices, including chain and ladder, and randomly generated graphs. We fix the lattice size such that the Hamiltonian can be encoded into ten qubits. For each Hamiltonian specified by the model and lattice, we evaluate all quantum KSD algorithms with the same subspace dimension $d$ taken from $2,3,\ldots,30$. In total, $233$ instances of $(model,lattice,d)$ are generated to benchmark quantum KSD algorithms. The details of the numerical calculations including parameters taken in different algorithms are given in Appendix~\ref{app:details}.

Fig.~\ref{fig:error_vs_overhead} illustrates the result of the instance $(\mathrm{Heisenberg},\mathrm{chain},d=5)$, for example. We can find that the error decreases with the cost overhead. When the target error is relatively high, algorithms have similar performance, e.g. when $\epsilon \approx 0.01$, the cost $\gamma \approx 10$ is sufficient in most algorithms. However, in comparison with others, the error in the GP algorithm decreases much faster.
When $d$ is sufficiently large, $M$ required to achieve the error $\epsilon$ always follows a scaling of $1/\epsilon^2$. However, the size of $d$ at which different algorithms converge to this behavior varies; see Fig.~\ref{fig:scaling}. The GP algorithm converges with a smaller $d$ compared to other algorithms. In this instance, the GP algorithm has already converged when $d=5$, whereas the other algorithms have not yet converged, resulting in poorer performance.

\begin{figure}[htbp]
\centering
\includegraphics[width=\linewidth]{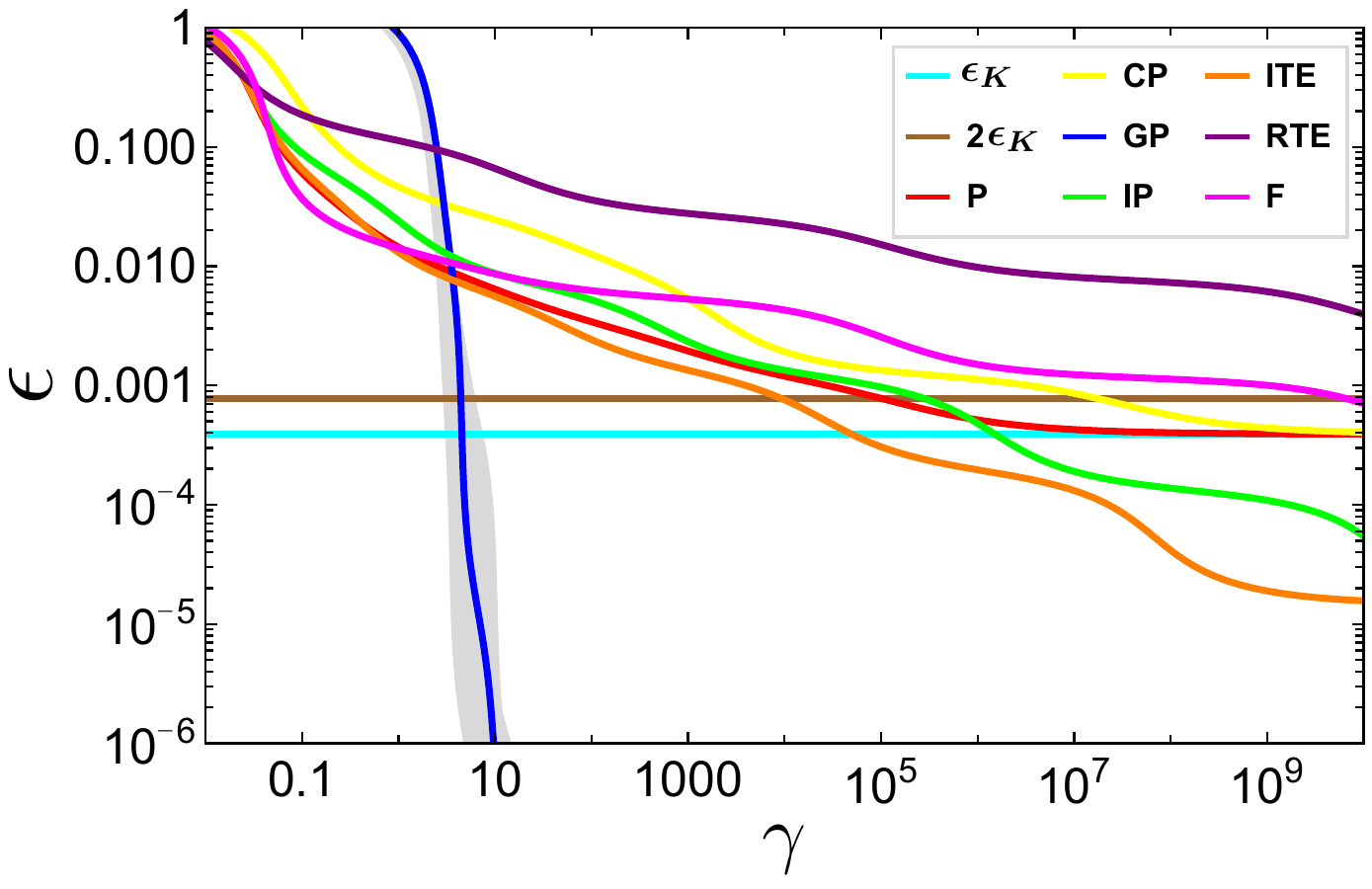}
\caption{Comparison between quantum KSD algorithms listed in Table~\ref{table}. Here, we take the instance $(\mathrm{Heisenberg},\mathrm{chain},d=5)$ as an example. 
$\epsilon_K$ is the subspace error of the P algorithm. The grey area illustrates the range of $\gamma$ when we take $E_0\in [E_g-0.1,E_g+0.1]$ in the GP algorithm. Notice that $\norm{H}_2 = 1$. The blue curve represents the result of the GP algorithm with $E_0=E_g$. Similar results are obtained when other measures are used for benchmarking, including the measurement number $M$ computed with the optimal $\alpha$ and $\beta$ factors, the necessary measurement cost (instead of its upper bound) estimated numerically and the necessary cost when using the thresholding method; see Appendix~\ref{app:practice}.}
\label{fig:error_vs_overhead}
\end{figure}

To confirm this advantage of the GP algorithm, we implement the simulation for a number of instances. We choose the power algorithm as the standard for comparison. We remark that the power algorithm and the classical Lanczos algorithm yield the same result when the implementation is error-free (i.e.~without statistical error and rounding error), and the eventual error in this ideal case is the subspace error $\epsilon_K$. Given the model, lattice and subspace dimension, we compute $\epsilon_K$ of the power algorithm. Then, we take the permissible error $\epsilon = 2\epsilon_K$ and compute the overhead factor $\gamma$ for each algorithm. Then, the empirical distribution of $\gamma$ is shown in Fig.~\ref{fig:distribution}.

\begin{figure}[htbp]
\centering
\includegraphics[width=\linewidth]{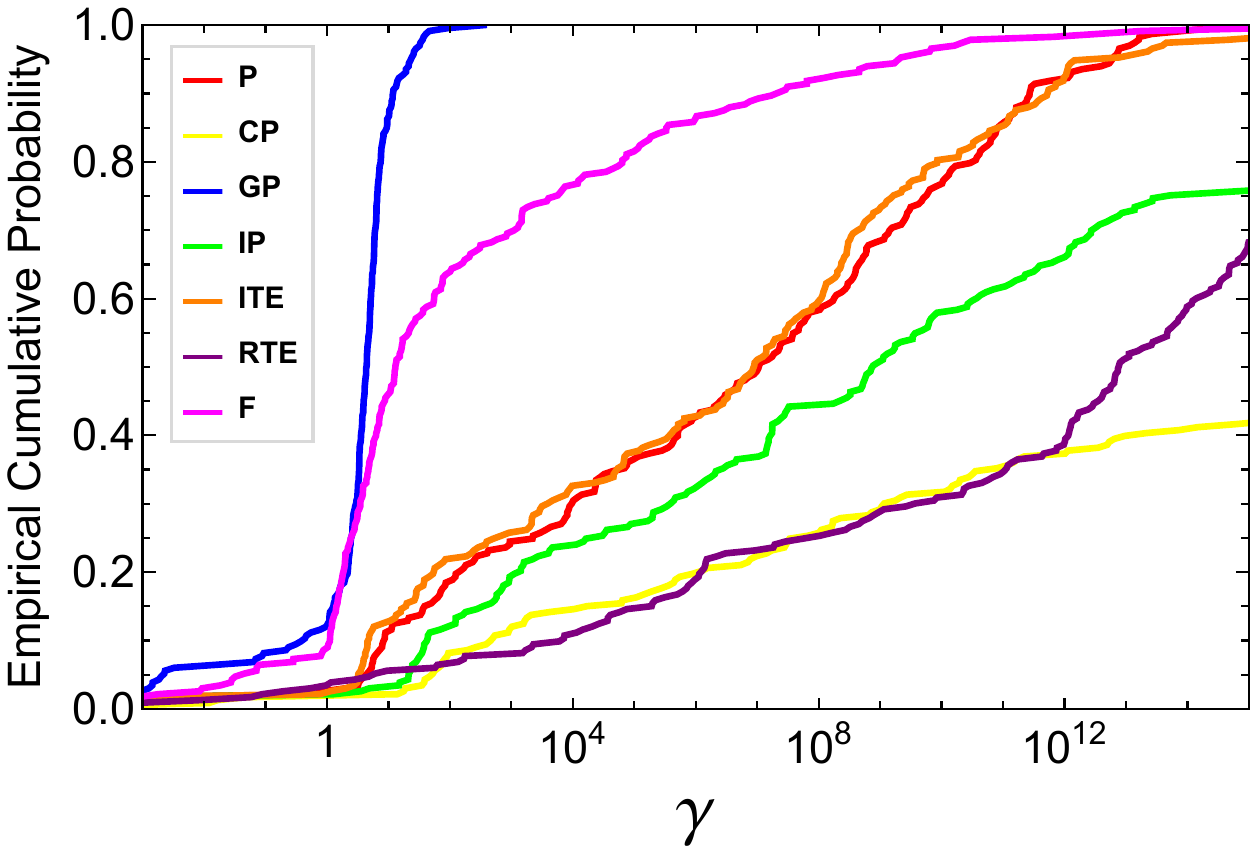}
\caption{Empirical distribution of the measurement overhead $\gamma$ for algorithms listed in Table~\ref{table}. In the Gaussian-power algorithm, we take a random $E_0$ in the interval $[E_g-0.1\norm{H}_2,E_g+0.1\norm{H}_2]$, i.e.~we assume that we have a preliminary estimation of the ground-state energy with an uncertainty as large as $10\%$ of the entire spectrum. }
\label{fig:distribution}
\end{figure}

From the distribution, we can find that our Gaussian-power algorithm has the smallest measurement overhead, and $\gamma$ is smaller than $10^2$ for almost all instances. The filter (F) algorithm is the most well-performed one among others. Part of the reason is that we have taken the optimal parameter found by grid search, of which the details are discussed in Appendix~\ref{app:details}. The median value of $\gamma$ is about $4$ for the GP algorithm, $13$ for the F algorithm and as large as $10^{12}$ for some other algorithms. Comparing the GP and F algorithms, although the median values are similar, the cost overhead is larger than $10^4$ in more than $20\%$ instances in the F algorithm, which never happens in the GP algorithm. 

The performance of the GP algorithm depends on the choice of parameters: $\tau$, $E_0$ and $\eta$. The regularisation parameter $\eta$ is taken according to Table~\ref{tab:eta}. To choose the value of $E_0$, we need prior knowledge about the true ground-state energy $E_g$, which can be obtained from certain classical algorithms computing the ground state or the quantum KSD algorithm taking a smaller subspace dimension. Although we prefer an $E_0$ close to $E_g$, the numerical result suggests that a relatively large difference $\abs{E_0-E_g}$ ($10\%$ of the entire spectrum) is tolerable. The Gaussian factor in the GP basis acts as a filter on the spectrum. Therefore, if $\abs{E_0-E_g}$ is finite and we take an over large $\tau$, we may have a vanishing overlap between the effective reference state $e^{-\frac{1}{2}(H-E_0)^2\tau^2}\ket{\varphi}$ and the ground state. To avoid this problem, we need to assume an upper bound $\abs{E_0-E_g} \leq \epsilon_0$, then we suggest taking $\tau$ in the range $\sqrt{\frac{d-1}{e}} < \tau \leq \frac{\sqrt{d-1}}{\epsilon_0}$, see Appendix~\ref{app:tau}. Although we take the same $\tau$ for all $f_k$ in each instance in the benchmarking, one can flexibly choose different $\tau$ in practice.


\section{Composing a projector}\label{sec:projector}
We show that $\gamma$ is related to the spectrum of $\bfS$. Let's consider a small $\eta$ and the Taylor expansion of $E'$. The minimum value of the zeroth-order term $\frac{\bfa^\dag \bfH \bfa}{\bfa^\dag \bfS \bfa}$ is $E_{min}$ according to the Rayleigh quotient theorem~\cite{horn2012}. Take this minimum value, we have $E' \simeq E_{min} + s\eta$, where $s\propto\frac{\bfa^\dag \bfa}{\bfa^\dag \bfS \bfa}$ is the first derivative. The solution to $E' = E_g+\epsilon$ is $\eta \simeq (E_g+\epsilon-E_{min})/s$, then 
\begin{eqnarray}\label{eq:gamma_S}
\gamma \simeq u^2\left(\frac{p_g\bfa^\dag \bfa}{\bfa^\dag \bfS \bfa}\right)^2, 
\end{eqnarray}
where $u\leq \frac{(C_{\bfH}+\norm{H}_2C_{\bfS})\epsilon}{2\norm{H}_2(\epsilon-\epsilon_K)}\approx C_{\bfS}$ under assumptions $C_{\bfH}\approx\norm{H}_2C_{\bfS}$ and $\epsilon-\epsilon_K\approx\epsilon$. Eq.~(\ref{eq:gamma_S}) is derived in Appendix~\ref{app:gamma}. Therefore, when $\bfS$ is close to singular, the overhead can be large. 

We can explain the measurement efficiency of our algorithm by composing a projector. In our algorithm with the rescaled basis, the solution is a state in the form $\ket{\Psi(\bfa)} = \sum_{k=1}^d a_kc_k^{-1}f_k\ket{\varphi}$. Ideally, the linear combination of $f_k$ realises a projector onto the ground state, i.e.~$\sum_{k=1}^d a_kc_k^{-1}f_k = \ketbra{\psi_g}{\psi_g}$. Then, $\ket{\Psi(\bfa)} = \sqrt{p_g}\ket{\psi_g}$ up to a phase and $\bfa^\dag \bfS \bfa = \braket{\Psi(\bfa)}{\Psi(\bfa)} = p_g$. Then $\gamma\approx \left(\frac{p_g\bfa^\dag \bfa}{\bfa^\dag \bfS \bfa}\right)^2 = \left(\bfa^\dag \bfa\right)^2$ (notice that $C_\bfS = 1$), we can find that $\bfa^\dag \bfa$ determines $\gamma$. When $c_k$ is smaller, $\abs{a_k}$ required for composing the projector is smaller. In our algorithm, $c_k$ decreases exponentially with $k$, and the speed of decreasing is controllable via the parameter $\tau$. In this way, our algorithm results in a small $\gamma$. 

To be concrete, let's consider a classic way of composing a projector from Hamiltonian powers using Chebyshev polynomials (CPs) $T_n$, which has been used for proving the convergence of the Lanczos algorithm~\cite{bjorck2008}, RTE and CP quantum KSD algorithms~\cite{epperly2022,kirby2022}. Without loss of generality, we suppose $\norm{H}_2 = 1$. An approximate projector in the form of Chebyshev polynomials is 
\begin{eqnarray}\label{eq:projector}
\frac{T_n(Z)}{T_n(z_1)} = \ketbra{\psi_g}{\psi_g} + \Omega,
\end{eqnarray}
where $Z = 1-(H-E_g-\Delta)$, $\Delta$ is the energy gap between the ground state and the first excited state, $z_1 = 1+\Delta$, and $\Omega$ is an operator with the upper bound $\norm{\Omega}_2\leq 2/(z_1^n+z_1^{-n})$. Notice that the error $\Omega$ depends on $n$, and its upper bound decreases exponentially with $n$ if $\Delta$ is finite. For simplicity, we focus on the case that $E_0 = E_g$ in the Hamiltonian power. Then, in the expansion $T_n(Z) = \sum_{l=0}^n b_l (H-E_g)^l$, coefficients have upper bounds $\abs{b_l}\leq n^l T_n(z_1)$ increasing exponentially with $l$. See Appendix~\ref{app:CPP}. In the LCU and Monte Carlo approach, large coefficients lead to large variance. Therefore, it is difficult to realise this projector because of the exponentially increasing coefficients. 

In our algorithm, the exponentially decreasing $c_k$ can cancel out the large $b_l$. We can compose the CP projector with an additional Gaussian factor. Taking $d = n+1$ and $a_k = c_kb_{k-1}/T_n(z_1)$, we have $\sum_{k=1}^d a_kc_k^{-1}f_k = \frac{T_n(Z)}{T_n(z_1)}e^{-\frac{1}{2}(H-E_g)^2\tau^2}$. Because the Gaussian factor is also a projector onto the ground state, the overall operator is a projector with a smaller error than the CP projector. The corresponding overhead factor is 
\begin{eqnarray}
\gamma \lesssim 4\frac{1}{1-n^3/(e\tau^2)}.
\end{eqnarray}
When $\tau$ is sufficiently large, $\gamma \lesssim 4$. This example shows that the Gaussian-power basis can compose a projector with a small measurement cost. 

By composing a projector with an explicit expression, we can obtain the worst-case performance of the algorithm. Similar analyses are reported in Refs.~\cite{epperly2022} and \cite{kirby2022} for RTE and CP algorithms, respectively. Because the projector is approximate, it results in a finite error in the ground-state energy $\epsilon_P$. By solving the generalised eigenvalue problem, we can usually work out a better solution, i.e. the corresponding energy error is lower than $\epsilon_P$. Our result about the measurement cost is applicable to a target error that is not limited by $\epsilon_P$. Theorem~\ref{the} holds even when $\epsilon<\epsilon_P$, which is beyond the explicit projector analysis.

The method of constructing polynomial projection operators mentioned above can be applied not only to the GP algorithm but also to the P algorithm. Since the variance of the P algorithm does not decrease exponentially with $k$, it can be seen that our algorithm has an advantage over the P algorithm in this regard. We remark that for this particular projector, the decomposition into Hamiltonian powers is costly, and the existence of a better decomposition is possible. Compared to other algorithms, we have provided a numerical benchmark test in the previous section. Additionally, for the CP and RTE algorithms, it is possible to construct approximate projection operators with constant bounded expansion coefficients~\cite{epperly2022, kirby2022}. In this case, further methods are needed to theoretically explain the improvement in measurement cost of our algorithm compared to these algorithms.

\section{Discussions and Conclusions}

We can apply Theorem~\ref{the} to estimate the measurement cost of various quantum KSD algorithms for arbitrary target error $\epsilon$ and subspace dimension $d$.
In this paper, we focus on a target error comparable to the Lanczos algorithm at the same $d$, i.e. $\epsilon\sim\epsilon_K$. 
The Lanczos algorithm is a well-known classical algorithm. It is natural to demand quantum KSD algorithms, as quantum counterparts of Lanczos algorithm, to reach a comparable target error. 
In this case, the measurement cost of the GP algorithm is observed much smaller than the others in the benchmarking. 

In certain cases of practical applications, there is definite target error epsilon, e.g. the chemical accuracy. In these cases, these target errors are no longer depending on $\epsilon_K$.
According to the Kaniel–Paige convergence theory 
\cite{bjorck2008}, by increasing the subspace dimension $d$ and measurement number $M$, we can always achieve arbitrary finite target error. 
Our results focus on relatively small d, in which case the GP algorithm requires a much smaller $M$ than the others, as illustrated in Fig.~\ref{fig:distribution}. We show in Fig.~\ref{fig:scaling} that $M$ converges to $O(1/\epsilon^2)$ when $d$ is sufficiently large, as demonstrated in Refs.~\cite{kirby2022, kirby2024analysis}. Increasing $d$ can weaken the advantage of the GP algorithm in the measurement cost. However, increasing $d$ not only brings more measurement entries (i.e. a larger factor $\beta$), but also implies deeper quantum circuits and more gate noises.

In conclusion, we have proposed a regularised estimator of the ground-state energy (see Algorithm~\ref{alg:QKSD}). Even in the presence of statistical errors, the regularised estimator is variational (i.e.~equal to or above the true ground-state energy) and has a rigorous upper bound. Because of these properties, it provides a universal way of analysing the measurement cost in quantum KSD algorithms, with the result summarised in Theorem~\ref{the}. This approach can be generalised to analyse the effect of other errors, e.g.~imperfect quantum gates. The robustness to statistical errors in our algorithm implies the robustness to other errors. 

To minimise the measurement cost, we have proposed a protocol for constructing Hamiltonian power with an additional Gaussian factor. This Gaussian factor is important because it leads to the statistical error decreasing exponentially with the power. Consequently, we can construct a Chebyshev-type projector onto the ground state with a small measurement cost. We benchmark quantum KSD algorithms with two models of quantum many-body systems. We find that our algorithm requires the smallest measurement number, and a measurement overhead $\gamma$ of a hundred is almost always sufficient for approaching the theoretical-limit accuracy $\epsilon_K$ of the Lanczos algorithm. In addition to the advantage over the classical Lanczos algorithm in scalability, our result suggests that the quantum algorithm is also competitive in accuracy at a reasonable measurement cost. 

This paper primarily focuses on the ground-state problem. Notably, our quantum KSD algorithm holds promise for application to low-lying excited states as well. We leave it for future research.

\begin{acknowledgments}
YL thanks Xiaoting Wang and Zhenyu Cai for the helpful discussions. This work is supported by the National Natural Science Foundation of China (Grants No. 12225507 and No. 12088101), Innovation Program for Quantum Science and Technology (Grant No. 2023ZD0300200), and NSAF (Grants No. U2330201 and No. U2330401). The source codes for the numerical simulation are available at GitHub~\cite{code}.
\end{acknowledgments}


\appendix

\section{General formalism of KSD algorithm}
\label{app:intro}

Given a Hamiltonian $H$, a reference state $\ket{\varphi}$ and an integer $d$, the standard Krylov subspace is $\mathcal{K} = \mathrm{Span}\left(\ket{\varphi}, H \ket{\varphi}, H^2 \ket{\varphi},\ldots,H^{d-1} \ket{\varphi}\right)$. This can be seen as taking polynomials $f_k(x) = x^{k-1}$. The subspace is the same when $\{f_1,f_2,\ldots,f_d\}$ is an arbitrary set of linearly-independent $(d-1)$-degree polynomials. Any state in the Krylov subspace can be expressed as 
\begin{eqnarray}
\ket{\psi_K} &=& \sum_{k=1}^d a_k f_k(H) \ket{\varphi}.
\end{eqnarray} 
The state that minimises the Rayleigh quotient $\langle \psi_K\vert H\vert \psi_K\rangle/\braket{\psi_K}{\psi_K}$ is regarded as the approximate ground state of the Hamiltonian $H$, and the corresponding energy is the approximate ground-state energy, i.e. 
\begin{eqnarray}
\ket{\psi_g} &\sim & \ket{\psi_{min}} = \operatorname*{arg\,min}_{\psi_K\in \mathcal{K}} \frac{\langle \psi_K\vert H\vert \psi_K\rangle}{\braket{\psi_K}{\psi_K}}, \\
E_g &\sim & E_{min} = \min_{\psi_K\in \mathcal{K}} \frac{\langle \psi_K\vert H\vert \psi_K\rangle}{\braket{\psi_K}{\psi_K}}.
\end{eqnarray}

\begin{definition}[Rayleigh quotient]
For any matrix $\mathbf A\in \mathbb C^{d\times d}$, and any non-zero vector $\mathbf a \in \mathbb C^d$, the Rayleigh quotient is defined as 
\begin{eqnarray}
\mean{\mathbf A}(\mathbf a) &=& \frac{\mathbf a^\dag \mathbf A \mathbf a}{\mathbf a^\dag \mathbf a}.
\label{eq:A}
\end{eqnarray}
\end{definition}

The approximate ground state energy can be rewritten as 
\begin{eqnarray}
E_{min} &=& \min_{\mathbf a\neq 0} \frac{\mathbf a^\dag \mathbf H \mathbf a}{\mathbf a^\dag \mathbf S \mathbf a} = \min_{\mathbf a\neq 0} \frac{\mean{\mathbf H}(\mathbf a)}{\mean{\mathbf S}(\mathbf a)},
\label{eq:Emin}
\end{eqnarray}
where $\mathbf H$ and $\mathbf S$ are $d\times d$ Hermitian matrices as defined in the main text. Notice that $E_{min}$ is the minimum eigenvalue of Eq.~(\ref{eq:eig}).

When $k$ increases, $H^k \ket{\varphi}$ converges to the eigenstate (having non-zero overlap with $\ket{\varphi}$) with the largest absolute eigenvalue. The eigenstate is usually the ground state (If not, take $H\leftarrow H-E_0$ where $E_0$ is a positive constant). Therefore, it is justified to express the ground state as a linear combination of $\{H^k \ket{\varphi}\}$ as long as $\ket{\varphi}$ has a non-zero overlap with the true ground state $\ket{\psi_g}$. However, the convergence with $k$ causes inherent trouble that basis vectors of the Krylov subspace are nearly linearly dependent. Then, the overlap matrix $\bfS$, which is a Gram matrix, becomes nearly singular and has a large condition number. As a result, the denominator of the Rayleigh quotient can be very small, and a tiny error in computing may cause a large deviation of the approximate ground-state energy $E_{min}$. 

According to the Rayleigh-Ritz theorem (Rayleigh quotient theorem)~\cite{horn2012}, the minimisation problem is equivalent to the generalised eigenvalue problem Eq.~(\ref{eq:eig}). The generalised eigenvalue problem can be reduced to an eigenvalue problem while the singularity issue remains. For example, one can use the Cholesky decomposition to express $\bfS$ as the product of an upper-triangular matrix and its conjugate transpose, i.e.~$\bfS = \mathbf R^\dag \mathbf R$, and the eigenvalue problem becomes $\left(\mathbf R^\dag\right)^{-1} \bfH \mathbf R^{-1} (\mathbf R \bfa) = E \mathbf R \bfa$. The Cholesky decomposition, however, also requires the matrix $\bfS$ to be positive-definite. When $\bfS$ is nearly singular, the Cholesky decomposition is unstable. The QZ algorithm is a more numerically stable method, and it is built into most standard programming languages. One way to address the singularity issue is by adding a positive diagonal matrix to $\bfS$. In this work, we also add a diagonal matrix to $\bfH$ to ensure that the Rayleigh quotient is variational, i.e.~the energy is not lower than the ground-state energy (with a controllable probability). An alternative way to address the singularity issue is the so-called thresholding method, see Refs.~\cite{motta2019,epperly2022}. 

\section{Norm distributions of noise matrices}\label{app:norm_distribution}

In this section, we analyse the distributions of $\norm{\hat{\bfH}-\bfH}_2$ and $\norm{\hat{\bfS}-\bfS}_2$. For the two matrices, there are a total of $2d^2$ matrix entries. We can measure each of them independently on the quantum computer. Under this independent measurement protocol, we obtain the most general results that apply to all the bases in Table~\ref{table}, which are summarised in Lemma~\ref{lem:pro} and Lemma~\ref{lem:pro_improved}. In the $2d^2$ matrix entries, some of them are equivalent. To reduce the measurement cost, we can only measure one entry in each set of equivalent matrix entries, and we call it the collective measurement protocol. Because the equivalent-entry sets depend on the basis, the corresponding distributions are basis-dependent. The results are summarised in Lemma~\ref{lem:pro_real} and Lemma~\ref{lem:pro_complex}; we remark that these results are obtained under the assumption that statistical noise is Gaussian.

\begin{lemma}[Based on Chebyshev's inequality]\label{lem:pro}
Let $\kappa$ be the failure probability. The inequality (\ref{eq:condition}) holds when $\eta\sqrt{M} \geq \frac{2d^2}{\sqrt{\kappa}}$.
\end{lemma}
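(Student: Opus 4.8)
The plan is to bound the probability that the spectral norm of the noise matrix exceeds its threshold by relating the spectral norm to the Frobenius norm, and then applying Chebyshev's inequality entrywise together with a union bound. I would carry this out for $\hat{\bfS}$ and $\hat{\bfH}$ separately and combine them at the end.

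First I would recall that for any matrix $A \in \mathbb{C}^{d\times d}$ one has $\norm{A}_2 \leq \norm{A}_F = \sqrt{\sum_{k,q}\abs{A_{k,q}}^2}$. Hence $\norm{\hat{\bfS}-\bfS}_2 > C_{\bfS}\eta$ implies $\sum_{k,q}\abs{\hat{\bfS}_{k,q}-\bfS_{k,q}}^2 > C_{\bfS}^2\eta^2$, which in turn implies that at least one of the $d^2$ entries satisfies $\abs{\hat{\bfS}_{k,q}-\bfS_{k,q}}^2 > C_{\bfS}^2\eta^2/d^2$, i.e. $\abs{\hat{\bfS}_{k,q}-\bfS_{k,q}} > C_{\bfS}\eta/d$. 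Writing each complex entry in terms of its real and imaginary parts, the real and imaginary estimators are unbiased with variance at most $C_{\bfS}^2/M$ each (from Eq.~(\ref{eq:varS1}), the budget-$M$-per-part statement), so Chebyshev's inequality gives $\Pr(\abs{\Re(\hat{\bfS}_{k,q}-\bfS_{k,q})} > C_{\bfS}\eta/(\sqrt{2}d)) \leq \frac{C_{\bfS}^2/M}{C_{\bfS}^2\eta^2/(2d^2)} = \frac{2d^2}{M\eta^2}$, and similarly for the imaginary part; the event $\abs{\hat{\bfS}_{k,q}-\bfS_{k,q}} > C_{\bfS}\eta/d$ is contained in the union of $\{\abs{\Re(\cdot)} > C_{\bfS}\eta/(\sqrt{2}d)\}$ and $\{\abs{\Im(\cdot)} > C_{\bfS}\eta/(\sqrt{2}d)\}$. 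A union bound over all $d^2$ entries and both real/imaginary parts then yields $\Pr(\norm{\hat{\bfS}-\bfS}_2 > C_{\bfS}\eta) \leq \frac{4d^4}{M\eta^2}$, and doing the same for $\hat{\bfH}$ gives the same bound. A final union bound over the two matrix events gives failure probability at most $\frac{8d^4}{M\eta^2}$; imposing this to be $\leq \kappa$ and solving for $\eta\sqrt{M}$ produces $\eta\sqrt{M} \geq 2\sqrt{2}\,d^2/\sqrt{\kappa}$.

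I would then need to reconcile the constant with the stated threshold $\eta\sqrt{M}\geq 2d^2/\sqrt{\kappa}$, which is slightly tighter. The gap comes from how carefully one splits the real/imaginary budget and the union bound; a cleaner route is to argue directly on $\abs{\hat{\bfS}_{k,q}-\bfS_{k,q}}^2$ using $\mathbb{E}\abs{\hat{\bfS}_{k,q}-\bfS_{k,q}}^2 = \mathrm{Var}(\Re) + \mathrm{Var}(\Im) \leq 2C_{\bfS}^2/M$ and Markov's inequality on the nonnegative random variable $\norm{\hat{\bfS}-\bfS}_F^2$, since $\mathbb{E}\norm{\hat{\bfS}-\bfS}_F^2 \leq 2d^2 C_{\bfS}^2/M$; then $\Pr(\norm{\hat{\bfS}-\bfS}_2 > C_{\bfS}\eta) \leq \Pr(\norm{\hat{\bfS}-\bfS}_F^2 > C_{\bfS}^2\eta^2) \leq \frac{2d^2}{M\eta^2}$. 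Adding the two matrix contributions gives $\frac{4d^2}{M\eta^2}$, and requiring $\leq\kappa$ would give $\eta\sqrt{M}\geq 2d/\sqrt{\kappa}$, which is even better; I suspect the paper's statement tracks a version where the budget bookkeeping or the norm inequality is applied in a way that lands exactly on $2d^2/\sqrt{\kappa}$, perhaps treating $\hat\bfH-\bfH$ and $\hat\bfS-\bfS$ jointly as a single $2d^2$-entry object and using $\norm{\cdot}_F$ over all of them.

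The main obstacle I anticipate is pinning down exactly which norm-to-entrywise reduction and which probabilistic inequality (Markov on the Frobenius norm squared versus Chebyshev entrywise plus union bound) the authors use, so that the constant comes out to precisely $2d^2$ rather than something comparable; the structure of the argument is routine, but matching their constant requires care about the factor of $2$ from complex parts, the $d^2$ versus $d$ from $\norm{\cdot}_2 \leq \norm{\cdot}_F$, and whether the two matrices are handled separately with a union bound or together. I would settle this by checking the proof in Appendix~\ref{app:norm_distribution} for the precise convention, but the safe self-contained version is the Frobenius-norm-plus-Markov argument sketched above, which gives a bound of the same form.
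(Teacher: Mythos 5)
Your proposal is correct, and your second (Markov-on-the-Frobenius-norm) argument in fact proves a strictly stronger statement than the lemma claims. The paper's own proof is closest to your first variant, with one difference that accounts for the constant you could not match: the authors apply Chebyshev's inequality directly to the complex modulus $\abs{\hat\bfH_{k,q}-\bfH_{k,q}}$ at threshold $C_{\bfH}\eta/d$, using the full variance bound $2C_{\bfH}^2/M$ from Eq.~(\ref{eq:varH1}) (i.e.\ Markov on $\abs{X-\mu}^2$ with $\mathbb{E}\abs{X-\mu}^2=\mathrm{Var}(\Re X)+\mathrm{Var}(\Im X)$), rather than splitting into real and imaginary parts; this gives failure probability $\kappa/(2d^2)$ per entry, then a bound over all $2d^2$ entries (via independence and Bernoulli's inequality, though a plain union bound would do), and finally $\norm{A}_2\leq\norm{A}_F\leq d\max_{k,q}\abs{A_{k,q}}$, landing exactly on $\eta\sqrt{M}\geq 2d^2/\sqrt{\kappa}$. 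Your real/imaginary split is precisely where the spurious factor of $\sqrt{2}$ enters. Your cleaner route --- $\mathbb{E}\norm{\hat\bfS-\bfS}_F^2\leq 2d^2C_{\bfS}^2/M$, one application of Markov, $\norm{\cdot}_2\leq\norm{\cdot}_F$, and a union bound over the two matrices --- is valid (it needs only unbiasedness and the entrywise second-moment bound, not independence) and yields the sufficient condition $\eta\sqrt{M}\geq 2d/\sqrt{\kappa}$, which implies the lemma since $2d\leq 2d^2$. The paper pays the extra factor of $d$ because it insists on a uniform entrywise deviation bound before passing to the spectral norm through the max-entry inequality; aggregating the second moments first, as you do, avoids that loss.
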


\begin{proof}
Recall the variance upper bounds of $\hat\bfH_{k,q}$ and $\hat\bfS_{k,q}$ given in Eq.~(\ref{eq:varH1}) and (\ref{eq:varS1}), respectively. When $M\geq \frac{4d^4}{\kappa\eta^2}$, 
\begin{eqnarray}
\mathrm{Var}(\hat\bfH_{k,q}) &\leq & \frac{\kappa C_{\bfH}^2 \eta^2}{2d^4}, \label{eq:varH}\\
\mathrm{Var}(\hat\bfS_{k,q}) &\leq & \frac{\kappa C_{\bfS}^2 \eta^2}{2d^4}. \label{eq:varS}
\end{eqnarray}
According to Chebyshev's inequality, we have 
\begin{eqnarray}
\Pr\left(\vert \hat \bfH_{k,q} - \bfH_{k,q} \vert \geq \frac{C_{\bfH}\eta}{d} \right) &\leq & \frac{\kappa}{2d^2},\label{eq:ChebyshevH}\\
\Pr\left(\vert \hat \bfS_{k,q} - \bfS_{k,q} \vert \geq \frac{C_{\bfS}\eta}{d} \right) &\leq & \frac{\kappa}{2d^2}.\label{eq:ChebyshevS}
\end{eqnarray}
	
Since matrix entries are measured independently, estimators $\hat \bfH_{k,q}$ or $\hat \bfS_{k,q}$ are independent. We have
\begin{eqnarray}
&&\Pr \left(\begin{array}{c}
\forall k,q\ \vert \hat \bfH_{k,q} - \bfH_{k,q} \vert \leq \frac{C_{\bfH}\eta}{d} \\
{\rm and}\ \vert \hat \bfS_{k,q} - \bfS_{k,q} \vert \leq \frac{C_{\bfS}\eta}{d}
\end{array}\right) \notag\\
&\geq & \left[\left(1-\frac{\kappa}{2d^2}\right)^{d^2}\right]^2 \geq 1-\kappa,
\end{eqnarray}
where Bernoulli's inequality is used. 
	
Let $\mathbf A$ be an $m\times n$ matrix with entries $A_{i,j}$, its spectral norm satisfies
\begin{eqnarray}
\norm{A}_2 &\leq& \sqrt{mn} \max_{i,j} \vert A_{i,j}\vert.
\end{eqnarray}
This is because of the well known relation $\norm{A}_2 \leq \norm{A}_F$, where the Frobenius norm is $\norm{A}_F=\left(\sum_{i=1}^m \sum_{j=1}^n \abs{A_{i,j}}^2\right)^{1/2}$. Therefore, when $\vert \hat \bfH_{k,q} - \bfH_{k,q} \vert \leq \frac{C_{\bfH}\eta}{d}$ and $\vert \hat \bfS_{k,q} - \bfS_{k,q} \vert \leq \frac{C_{\bfS}\eta}{d}$ for all $k$ and $q$, we have $\norm{ \hat \bfH - \bfH}_2 \leq C_{\bfH}\eta$ and $\norm{ \hat \bfS - \bfS}_2 \leq C_{\bfS}\eta$. 
\end{proof}

The above proof applies to all bases and measurement protocols and holds without any assumption on the statistical noise. As the most general result, Lemma~\ref{lem:pro} is used to prove Theorem~\ref{the} in Appendix~\ref{app:proof}. 

To measure matrix entries on a quantum computer, a practical method is using Hadamard-test circuits. Then, measurement outcomes are independent random variables $\pm 1$, i.e. the random variables are bounded even taking into account other factors (We will give an example later, see Algorithm~\ref{alg:MC}, where the phase of $q_s$ can be absorbed into $U_s$). In this case, we can use Hoeffding's inequality \cite{hoeffding1963} to obtain an improved result.

\begin{lemma}[Based on Hoeffding's inequality]\label{lem:pro_improved}
Suppose matrix entries $\hat \bfH_{k,q}$ and $\hat \bfS_{k,q}$ are empirical means of random variables in ranges $[-C_{\bfH},C_{\bfH}]$ and $[-C_{\bfS},C_{\bfS}]$, respectively. Let $\kappa$ be the failure probability. If the matrix elements of $\bfH$ and $\bfS$ are estimated using Algorithm~\ref{alg:MC}, then the inequality (\ref{eq:condition}) holds when $\eta\sqrt{M}\geq \sqrt{2d^2\ln{\frac{8d^2}{\kappa}}}$. 
\end{lemma}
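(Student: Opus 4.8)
```latex
\textbf{Proof proposal for Lemma~\ref{lem:pro_improved}.}

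The plan is to mirror the structure of the proof of Lemma~\ref{lem:pro}, but to replace the Chebyshev step (which controls each matrix entry through its variance) by a Hoeffding-type tail bound, which is exponentially sharper because the single-shot estimators are bounded. First I would fix the failure budget per entry: since there are $2d^2$ real quantities to estimate (the real and imaginary parts collectively, or more precisely the $d^2$ entries of each of $\hat\bfH$ and $\hat\bfS$ after the variance accounting in Eqs.~(\ref{eq:varH1})--(\ref{eq:varS1})), I would ask that each entry deviate by more than $C_{\bfH}\eta/d$ (resp. $C_{\bfS}\eta/d$) with probability at most $\kappa/(4d^2)$. As in Lemma~\ref{lem:pro}, a union bound over all $2d^2$ events, together with the Frobenius-norm inequality $\norm{A}_2\leq\norm{A}_F\leq d\max_{k,q}\abs{A_{k,q}}$, then upgrades entrywise control into the spectral-norm statement (\ref{eq:condition}).

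Next I would apply Hoeffding's inequality to a single entry. Under the hypothesis, $\hat\bfH_{k,q}$ is the empirical mean of $M$ independent draws, each lying in an interval of width $2C_{\bfH}$ (after absorbing the phase of the coefficients $q_s$ into the unitaries $U_s$, as flagged in the remark preceding the lemma); note that estimating a complex entry means estimating its real and imaginary parts separately, each with its own budget of $M$ shots, and the bound below applies to each. Hoeffding gives
\begin{eqnarray}
\Pr\left(\absLR{\hat\bfH_{k,q}-\bfH_{k,q}}\geq \frac{C_{\bfH}\eta}{d}\right)
\leq 2\exp\left(-\frac{M\eta^2}{2d^2}\right),
\end{eqnarray}
and identically for $\hat\bfS_{k,q}$ with $C_{\bfS}$. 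Setting the right-hand side to be at most $\kappa/(2d^2)$ — absorbing the factor $2$ from the two-sided bound into a count of $2d^2\cdot 2 = 4d^2$ events, which is what produces the $8d^2/\kappa$ inside the logarithm — yields $\exp(-M\eta^2/(2d^2))\leq \kappa/(4d^2)$, i.e. $M\eta^2/(2d^2)\geq \ln(4d^2/\kappa)$. Rearranging gives the claimed threshold $\eta\sqrt{M}\geq\sqrt{2d^2\ln(8d^2/\kappa)}$ once the bookkeeping of the factor-of-two in the union bound is tracked carefully; I would double-check whether the logarithm's argument is $4d^2/\kappa$ or $8d^2/\kappa$ by being explicit about which $2$'s come from the two-sided Hoeffding bound and which from splitting the budget between $\bfH$ and $\bfS$.

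The main obstacle, as that last sentence suggests, is purely combinatorial: getting the constants inside the logarithm exactly right. There is no hard analysis — Hoeffding is a black box and the Frobenius bound is elementary — but there are several places where a spurious factor of $2$ can creep in (two-sided versus one-sided tail, real versus imaginary part each with budget $M$, the two matrices $\bfH$ and $\bfS$), and the statement commits to the specific constant $8d^2/\kappa$. The careful step is therefore to lay out the event decomposition so that exactly $4d^2$ one-sided Hoeffding bounds are summed, each of probability $\kappa/(4d^2)$, giving total failure $\leq\kappa$; this pins down the logarithmic argument. I would also note in passing that Bernoulli's inequality (used in Lemma~\ref{lem:pro}) is not needed here since the union bound is additive rather than multiplicative, which is slightly cleaner.
```
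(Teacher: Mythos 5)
Your proposal follows essentially the same route as the paper's proof: a Hoeffding tail bound per matrix entry at threshold $C_{\bfH}\eta/d$ (resp.\ $C_{\bfS}\eta/d$), a union over the $2d^2$ entries, and the Frobenius-norm inequality to upgrade entrywise control to the spectral-norm condition (\ref{eq:condition}), exactly mirroring the structure of Lemma~\ref{lem:pro} with Chebyshev replaced by Hoeffding. On the one point you flag as uncertain — whether the logarithm contains $4d^2/\kappa$ or $8d^2/\kappa$ — the discrepancy comes from the complex-entry bookkeeping you mention but do not carry out: the paper treats the real and imaginary parts of each entry as two separate empirical means, each controlled at threshold $C_{\bfH}\eta/(\sqrt{2}d)$, so the per-entry prefactor is $4$ rather than your $2$, and combined with the budget $\kappa/(2d^2)$ per entry this produces $\ln(8d^2/\kappa)$. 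Be aware, though, that the paper obtains the prefactor $4$ while keeping the exponent $e^{-M\eta^2/(2d^2)}$ by writing the modulus-deviation probability as a \emph{product} of the real- and imaginary-part deviation probabilities; the event $\absLR{\hat\bfH_{k,q}-\bfH_{k,q}}\geq C_{\bfH}\eta/d$ implies that at least one (not both) of the parts deviates by $C_{\bfH}\eta/(\sqrt{2}d)$, so the clean bound is the sum $4e^{-M\eta^2/(4d^2)}$, which would weaken the threshold to $\eta\sqrt{M}\geq 2d\sqrt{\ln(8d^2/\kappa)}$ — your additive union bound is the right instinct, and your constant $\ln(4d^2/\kappa)$ is in fact correct for the real-matrix case where no imaginary part is measured.
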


\begin{proof}
For each matrix entry, its real (imaginary) part is computed by taking the mean of $M$ random variables.
According to Hoeffding's inequality, we have, 
\begin{eqnarray}
&&\Pr\left(\vert \hat \bfH_{k,q} - \bfH_{k,q} \vert \geq \frac{C_{\bfH}\eta}{d} \right)\notag\\
&\leq&\Pr\left(\vert \Re(\hat \bfH_{k,q}) - \Re(\bfH_{k,q}) \vert \geq \frac{C_{\bfH}\eta}{\sqrt{2}d} \right)\notag\\
&\times&\Pr\left(\vert \Im(\hat \bfH_{k,q}) - \Im(\bfH_{k,q}) \vert \geq \frac{C_{\bfH}\eta}{\sqrt{2}d} \right)\notag\\
&\leq&4e^{-\frac{M\eta^2}{2d^2}}.
\end{eqnarray}
Similarly,
\begin{eqnarray}
\Pr\left(\vert \hat \bfS_{k,q} - \bfS_{k,q} \vert \geq \frac{C_{\bfS}\eta}{d} \right) \leq 4e^{-\frac{M\eta^2}{2d^2}}.
\end{eqnarray}

When $M\geq \frac{2d^2}{\eta^2}\ln{\frac{8d^2}{\kappa}}$, Eq.~(\ref{eq:ChebyshevH}) and Eq.~(\ref{eq:ChebyshevS}) hold. The rest of the proof is the same as in Lemma~\ref{lem:pro}. 
\end{proof}

Considering the equivalence between matrix entries, we can find that $\bfH$ and $\bfS$ are real Hankel matrices for the GP and P, IP and ITE bases in Table~\ref{table}, real symmetric matrices for the CP and F bases and complex Hermitian-Toeplitz matrices for the RTE basis. Under the collective measurement protocol, random matrices $\hat{\bfH}-\bfH$ and $\hat{\bfS}-\bfS$ are matrices of the same type as $\bfH$ and $\bfS$. To analyse the distribution of these random matrices, we assume that the distributions of $\hat{\bfH}_{k,q}-\bfH_{k,q}$ and $\hat{\bfS}_{k,q}-\bfS_{k,q}$ are Gaussian with the zero mean. Notice that inequivalent entries are independent random variables. Then, we obtain the following results. 

\begin{lemma}[Real symmetric/Hankel matrices]\label{lem:pro_real}
Let $\kappa$ be the failure probability. If $\hat{\bfH}-\bfH$ and $\hat{\bfS}-\bfS$ are random real symmetric matrices following the Gaussian distribution, then the inequality (\ref{eq:condition}) holds when $\eta\sqrt{M}\geq \sqrt{2d\ln{\frac{4d}{\kappa}}}$.
\end{lemma}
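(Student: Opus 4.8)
The plan is to regard the error matrices $\Delta\bfH := \hat{\bfH}-\bfH$ and $\Delta\bfS := \hat{\bfS}-\bfS$ as Gaussian random matrices and to control their spectral norms with a matrix concentration inequality of Bernstein type. This is what replaces the crude $d^2$ loss of Lemma~\ref{lem:pro} by a $d\ln(4d/\kappa)$ loss: in matrix Bernstein bounds the matrix dimension enters the tail only logarithmically, while the polynomial factor is governed by the matrix-variance parameter, which here turns out to be linear in $d$.

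First I would fix the structure of the noise. Under the collective-measurement protocol the inequivalent entries of $\Delta\bfH$ and $\Delta\bfS$ are independent; by hypothesis each is a zero-mean Gaussian, and each real entry, being estimated from $M$ measurements, has variance at most $C_\bfH^2/M$ (resp.\ $C_\bfS^2/M$). I would then write $\Delta\bfH$ as a Gaussian series $\Delta\bfH = \sum_m g_m A_m$, where the $g_m$ are independent mean-zero Gaussians with $\mathrm{Var}(g_m)\le C_\bfH^2/M$ and each $A_m$ is the fixed real symmetric $0/1$ matrix marking the slots governed by $g_m$: a single symmetrised pair $\mathbf{e}_k\mathbf{e}_q^\dag+\mathbf{e}_q\mathbf{e}_k^\dag$ or a diagonal element $\mathbf{e}_k\mathbf{e}_k^\dag$ in the generic real-symmetric case (CP, F), and an entire $0/1$ anti-diagonal in the Hankel case (P, GP, IP, ITE); an identical decomposition applies to $\Delta\bfS$.

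The key step is the matrix Gaussian tail bound $\Pr(\norm{\Delta\bfH}_2\ge t)\le 2d\exp\!\big(-t^2/(2\nu)\big)$, where the matrix-variance parameter is $\nu = \normLR{\sum_m \mathrm{Var}(g_m)\,A_m^2}_2$. A short computation --- one line per matrix type in the statement --- shows that every $A_m^2$ is diagonal ($(\mathbf{e}_k\mathbf{e}_q^\dag+\mathbf{e}_q\mathbf{e}_k^\dag)^2=\mathbf{e}_k\mathbf{e}_k^\dag+\mathbf{e}_q\mathbf{e}_q^\dag$, and the square of a $0/1$ anti-diagonal is diagonal), so $\sum_m \mathrm{Var}(g_m)A_m^2$ is diagonal with each diagonal entry a sum of at most $d$ of the numbers $\mathrm{Var}(g_m)\le C_\bfH^2/M$; hence $\nu\le dC_\bfH^2/M$. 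Taking $t=C_\bfH\eta$ with $\eta\sqrt{M}\ge\sqrt{2d\ln(4d/\kappa)}$ gives $t^2/(2\nu)\ge\ln(4d/\kappa)$, so $\Pr(\norm{\Delta\bfH}_2>C_\bfH\eta)\le 2d\exp(-\ln(4d/\kappa))=\kappa/2$. The same estimate with the same $\eta$ yields $\Pr(\norm{\Delta\bfS}_2>C_\bfS\eta)\le\kappa/2$, and a union bound gives Eq.~(\ref{eq:condition}).

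The only delicate point --- really bookkeeping rather than a genuine obstacle --- is the evaluation of $\nu$: one must check that, despite the symmetrisation coupling the $(k,q)$ and $(q,k)$ slots and (in the Hankel case) the same Gaussian being shared across a whole anti-diagonal, the operator $\sum_m\mathrm{Var}(g_m)A_m^2$ remains diagonal with row sums bounded by $d$ times the per-entry variance; this is exactly the mechanism responsible for the improvement from $d^2$ to $d$. I would also note explicitly that, unlike Lemma~\ref{lem:pro}, this argument uses the Gaussian-noise assumption twice --- to present $\Delta\bfH,\Delta\bfS$ as Gaussian series and to apply the sub-Gaussian (rather than the full, range-dependent) matrix tail bound --- which is why the lemma is stated under that assumption.
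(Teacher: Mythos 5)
Your proof is correct and follows essentially the same route as the paper: both invoke the matrix Gaussian series tail bound $\Pr(\norm{\Delta\bfH}_2\geq t)\leq 2d\,e^{-t^2/(2\nu)}$ with matrix-variance parameter $\nu\leq dC_{\bfH}^2/M$, then set $t=C_{\bfH}\eta$ and take a union bound over the two matrices. The only difference is one of presentation: you carry out the decomposition into elementary symmetric/anti-diagonal matrices and the verification that $\sum_m\mathrm{Var}(g_m)A_m^2$ is diagonal with entries bounded by $dC_{\bfH}^2/M$ explicitly, whereas the paper delegates these computations to the cited references (Tropp for the symmetric case, and a separate reference for the Hankel case).
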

\begin{proof}
Applying the matrix Gaussian series theorem~\cite{tropp2015} to the real symmetric matrix with Gaussian noise, we have 
\begin{eqnarray}
\Pr\left(\norm{\hat{\bfH}-\bfH}_2\geq C_{\bfH}\eta\right)\leq 2de^{-\frac{M\eta^2}{2d}},
\end{eqnarray}
where ${\rm Var}(\hat{\bfH}_{k,q}) \leq \frac{C_{\bfH}^2}{M}$ is used. Similarly,
\begin{eqnarray}
\Pr\left(\norm{\hat{\bfS}-\bfS}_2\geq C_{\bfS}\eta\right)\leq 2de^{-\frac{M\eta^2}{2d}}.
\end{eqnarray}
Note that the real Hankel matrices with Gaussian noise yield the same inequalities \cite{raymundo2021}.
	
To make sure $\norm{ \hat \bfH - \bfH}_2 \leq C_{\bfH}\eta$ and $\norm{ \hat \bfS - \bfS}_2 \leq C_{\bfS}\eta$ with a probability of at least $1-\kappa$, we take $M$ such that 
\begin{eqnarray}
2d e^{-\frac{M\eta^2}{2d}} = \frac{\kappa}{2}.
\end{eqnarray}
Then, 
\begin{eqnarray}
M=\frac{2d}{\eta^2}\ln{\frac{4d}{\kappa}}.
\end{eqnarray}
\end{proof}

\begin{lemma}[Complex Hermitian-Toeplitz matrices]\label{lem:pro_complex}
Let $\kappa$ be the failure probability. If $\hat{\bfH}-\bfH$ and $\hat{\bfS}-\bfS$ are random complex Hermitian-Toeplitz matrices following the Gaussian distribution, then the inequality (\ref{eq:condition}) holds when $\eta\sqrt{M}\geq \sqrt{2(2d-1)\ln{\frac{4d}{\kappa}}}$. 
\end{lemma}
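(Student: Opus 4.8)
\emph{Proof plan.} The argument runs parallel to the proof of Lemma~\ref{lem:pro_real}, with the real symmetric/Hankel structure replaced by the complex Hermitian-Toeplitz one and the count of independent parameters changed from $d$ to $2d-1$. First I would write the noise matrix $W := \hat{\bfH}-\bfH$ as a matrix Gaussian series. A $d\times d$ Hermitian-Toeplitz matrix is determined by one real number $w_0$ (the common value on the main diagonal) and $d-1$ complex numbers $w_1,\dots,w_{d-1}$ (the values on the $j$-th super-diagonal, with $\overline{w_j}$ on the $j$-th sub-diagonal); under the collective-measurement protocol the $2d-1$ real quantities $w_0,\Re w_j,\Im w_j$ are precisely the independent estimators, each a zero-mean Gaussian with variance at most $C_{\bfH}^2/M$. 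Hence
\begin{equation}
W = \sum_{k}\gamma_k B_k,
\end{equation}
where the $\gamma_k$ are i.i.d.\ standard normal and each $B_k$ is a fixed Hermitian ``pattern'' matrix (the identity for $w_0$; the $\pm j$-diagonal all-ones matrix, resp.\ its $\pm i$ counterpart, for $\Re w_j$, resp.\ $\Im w_j$) scaled by the corresponding standard deviation $\le C_{\bfH}/\sqrt{M}$.

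Second, I would invoke the matrix Gaussian series tail bound~\cite{tropp2015}, $\Pr(\norm{W}_2\ge t)\le 2d\exp\!\left(-\frac{t^2}{2\sigma^2}\right)$ with $\sigma^2 = \norm{\sum_k B_k^2}_2 = \norm{\mathbb{E}[W^2]}_2$, so the whole task reduces to showing $\sigma^2\le (2d-1)C_{\bfH}^2/M$. The $(k,k')$ entry of $\mathbb{E}[W^2]$ equals $\sum_q \mathbb{E}[w_{k-q}\overline{w_{k'-q}}]$ (with $w_{-j}=\overline{w_j}$), which vanishes unless $\abs{k-q}=\abs{k'-q}$; the case $k-q=k'-q$ forces $k=k'$, and the remaining case $k-q=-(k'-q)$ contributes a single term proportional to $\mathbb{E}[w_m^2]$, which is zero because the real and imaginary parts of each off-diagonal noise entry are independent with equal variance. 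Thus $\mathbb{E}[W^2]$ is diagonal and $\sigma^2 = \max_k\sum_{q=1}^d \mathbb{E}\abs{W_{k,q}}^2 \le C_{\bfH}^2/M + (d-1)\cdot 2C_{\bfH}^2/M = (2d-1)C_{\bfH}^2/M$, the diagonal entry of $W$ being real (variance $\le C_{\bfH}^2/M$) and each of the $d-1$ off-diagonal entries in a row having variance $\le 2C_{\bfH}^2/M$.

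Finally, setting $t=C_{\bfH}\eta$ gives $\Pr(\norm{\hat{\bfH}-\bfH}_2\ge C_{\bfH}\eta)\le 2d\exp\!\left(-\frac{M\eta^2}{2(2d-1)}\right)$, and the identical bound holds for $\hat{\bfS}-\bfS$ with $C_{\bfS}$; requiring each probability to be at most $\kappa/2$ and taking a union bound, condition~(\ref{eq:condition}) holds once $M\ge \frac{2(2d-1)}{\eta^2}\ln\frac{4d}{\kappa}$, that is, $\eta\sqrt{M}\ge \sqrt{2(2d-1)\ln(4d/\kappa)}$. I expect the main obstacle to be the second step: establishing that the Hermitian-Toeplitz Gaussian ensemble makes $\mathbb{E}[W^2]$ diagonal, so that its spectral norm is simply the largest row-variance. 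This relies on the circular-symmetry property $\mathbb{E}[w_m^2]=0$ of the off-diagonal noise --- natural when the real and imaginary parts are estimated by independent Hadamard tests with equal measurement budget --- and without it $\mathbb{E}[W^2]$ acquires entries on the even-offset diagonals, forcing a weaker Gershgorin-type estimate of $\norm{\mathbb{E}[W^2]}_2$ that enlarges the constant under the square root.
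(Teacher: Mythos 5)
Your proof is correct and takes essentially the same route as the paper's: apply the matrix Gaussian series tail bound of \cite{tropp2015} to each of $\hat{\bfH}-\bfH$ and $\hat{\bfS}-\bfS$ and combine with a union bound at level $\kappa/2$ each. The only difference is that you spell out the computation of the variance parameter $\sigma^2=\norm{\mathbb{E}[W^2]}_2\leq (2d-1)C_{\bfH}^2/M$ --- including the observation that $\mathbb{E}[w_m^2]=0$ makes $\mathbb{E}[W^2]$ diagonal --- which the paper asserts without detail.
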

\begin{proof}
For complex Hermitian-Toeplitz matrices, the matrix Gaussian series theorem~\cite{tropp2015} gives
\begin{eqnarray}
\Pr\left(\norm{\hat{\bfH}-\bfH}_2\geq C_{\bfH}\eta\right)\leq 2de^{-\frac{M\eta^2}{2(2d-1)}}.
\end{eqnarray}
Similarly,
\begin{eqnarray}
\Pr\left(\norm{\hat{\bfS}-\bfS}_2\geq C_{\bfS}\eta\right)\leq 2de^{-\frac{M\eta^2}{2(2d-1)}}.
\end{eqnarray}
	
To make sure $\norm{ \hat \bfH - \bfH}_2 \leq C_{\bfH}\eta$ and $\norm{ \hat \bfS - \bfS}_2 \leq C_{\bfS}\eta$ with a probability of at least $1-\kappa$, we take
\begin{eqnarray}
M=\frac{2(2d-1)}{\eta^2}\ln{\frac{4d}{\kappa}}.
\end{eqnarray}
\end{proof}


\section{Details on the analysis of measurement cost}
\label{app:proof}

In this section, we develop theoretical tools for analysing the measurement cost in quantum KSD algorithms. At the end of this section, we give the proofs of Lemma~\ref{lem} and Theorem~\ref{the}.

\begin{definition}
For simplification, we define the following functions: 
\begin{eqnarray}
E(\bfa) &=& \frac{\mean{\bfH}(\bfa)}{\mean{\bfS}(\bfa)}, \label{eq:def1}\\
\hat E(\eta, \bfa) &=& \frac{\mean{\hat\bfH}(\bfa) + C_{\bfH} \eta}{\mean{\hat\bfS}(\bfa) + C_{\bfS} \eta}, \label{eq:def2}\\
E'(\eta, \bfa) &=& \frac{\mean{\bfH}(\bfa)+ 2C_{\bfH} \eta}{\mean{\bfS}(\bfa)+ 2C_{\bfS} \eta}.\label{eq:def3}
\end{eqnarray}
\end{definition}
Here, functions $\mean{\mathbf A}(\mathbf a)$ are defined in Eq. (\ref{eq:A}). $\min_{\mathbf a\neq 0}E(\bfa) = E_{min}$ is the minimum eigenvalue of the generalised eigenvalue problem in Eq.~(\ref{eq:eig}); $\min_{\mathbf a\neq 0}\hat E(\eta, \bfa) = \hat E_{min}$ is the minimum eigenvalue of the generalised eigenvalue problem solved in Algorithm~\ref{alg:QKSD}; and Eq.~(\ref{eq:def3}) is equivalent to Eq.~(\ref{eq:Ep}).

\begin{lemma}\label{lem:statements}
If $\norm{ \hat \bfH - \bfH}_2 \leq C_{\bfH}\eta $ and $\norm{ \hat \bfS - \bfS}_2 \leq C_{\bfS}\eta$, the following statements hold: 
\begin{enumerate}
\item[(1)] $\hat E(\eta, \bfa) \geq \min\{E(\bfa) ,0\}$; 
\item[(2)] If $\hat E(\eta, \bfa)<0$, then $\hat E(\eta,\bfa)\leq E^\prime(\eta,\bfa)$; 
\item[(3)] If $ E'(\eta,\bfa)<0$, then $\hat E(\eta,\bfa)<0$. 
\end{enumerate}
\end{lemma}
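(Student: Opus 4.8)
The plan is to translate the three statements into elementary scalar inequalities and reduce everything to one perturbation estimate. Fix a non-zero $\bfa$ and abbreviate $h=\mean{\bfH}(\bfa)$, $\hat h=\mean{\hat\bfH}(\bfa)$, $s=\mean{\bfS}(\bfa)$, $\hat s=\mean{\hat\bfS}(\bfa)$, $a=C_\bfH\eta$ and $b=C_\bfS\eta$. The first step is the observation that, because $\hat\bfH-\bfH$ and $\hat\bfS-\bfS$ are Hermitian, their Rayleigh quotients are bounded in absolute value by their spectral norms, so $\abs{\hat h-h}=\abs{\mean{\hat\bfH-\bfH}(\bfa)}\leq\norm{\hat\bfH-\bfH}_2\leq a$ and likewise $\abs{\hat s-s}\leq b$; equivalently $h-a\leq\hat h\leq h+a$ and $s-b\leq\hat s\leq s+b$. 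Since the basis is linearly independent, $\bfS$ is positive definite, so $s>0$; combined with $\hat s\geq s-b$ this gives $\hat s+b\geq s>0$ and also $s+2b>0$, so $E(\bfa)$, $\hat E(\eta,\bfa)=\frac{\hat h+a}{\hat s+b}$ and $E'(\eta,\bfa)=\frac{h+2a}{s+2b}$ all have positive denominators. Two inequalities will be reused: $\hat h+a\leq h+2a$ (from $\hat h\leq h+a$) and $\hat s+b\leq s+2b$ (from $\hat s\leq s+b$).

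Statement (3) is then immediate: $E'(\eta,\bfa)<0$ forces $h+2a<0$ (positive denominator), hence $\hat h+a\leq h+2a<0$, so $\hat E(\eta,\bfa)<0$.

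For statement (2), the hypothesis $\hat E(\eta,\bfa)<0$ means $\hat h+a<0$. If $h+2a\geq0$ then $E'(\eta,\bfa)\geq0>\hat E(\eta,\bfa)$ and we are done. If $h+2a<0$, both numerators are negative; I would then invoke the elementary monotonicity fact that for $0>n_1\geq n_2$ and $d_1\geq d_2>0$ one has $n_1/d_1\geq n_2/d_2$, applied with $n_1=h+2a\geq n_2=\hat h+a$ and $d_1=s+2b\geq d_2=\hat s+b>0$, which yields $E'(\eta,\bfa)\geq\hat E(\eta,\bfa)$.

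Statement (1) is where the real care is needed, and I expect it to be the main obstacle because of the truncation at $0$ and the possibility of a negative numerator in $\hat E$. I would split on the sign of $h$. If $h\geq0$, then $\hat h+a\geq h\geq0$ while $\hat s+b>0$, so $\hat E(\eta,\bfa)\geq0=\min\{h/s,0\}$. If $h<0$, then $\min\{h/s,0\}=h/s<0$; if in addition $\hat h+a\geq0$ then $\hat E(\eta,\bfa)\geq0>h/s$, and otherwise ($\hat h+a<0$) I would clear denominators, using $s>0$ and $\hat s+b>0$, to reduce the claim to $(\hat h+a)s-h(\hat s+b)\geq0$; substituting $s\hat h\geq sh-sa$ (from $\hat h\geq h-a$, $s>0$) and $-h\hat s\geq -hs+hb$ (from $\hat s\geq s-b$ and $-h>0$) makes the left-hand side collapse to $\geq0$. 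The only delicate point throughout is sign bookkeeping when multiplying through by the denominators, which is precisely why each statement is split by the sign of $h$ (or of the numerators) before any cross-multiplication is performed.
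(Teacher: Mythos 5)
Your proof is correct and follows essentially the same route as the paper's: both bound the perturbed Rayleigh quotients via the spectral norms to get $h\leq \hat h + C_{\bfH}\eta\leq h+2C_{\bfH}\eta$ and $0<s\leq \hat s + C_{\bfS}\eta\leq s+2C_{\bfS}\eta$, then compare the three quotients using the monotonicity of fractions with negative numerators under simultaneous increase of numerator and denominator. Your explicit case splits on the sign of $h$ (rather than the paper's single condition $\hat E<0$) and the cross-multiplication in statement (1) are only cosmetic reorganisations of the same argument.
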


\begin{proof}
Consider the error in the denominator, 
\begin{eqnarray}
&& \abs{\mean{\hat \bfS}(\bfa) - \mean{\bfS}(\bfa)}
= \absLR{\frac{\bfa^\dag \hat \bfS\bfa - \bfa^\dag \bfS \bfa}{\bfa^\dag \bfa}} \notag\\
&& = \absLR{\frac{\bfa^\dag (\hat \bfS-\bfS)\bfa }{\bfa^\dag \bfa}}
\leq \frac{\abs{\bfa^\dag } \norm{\hat \bfS-\bfS}_2 \abs{\bfa}}{\bfa^\dag \bfa} \notag\\
&& \leq C_{\bfS} \eta,
\end{eqnarray}
from which we obtain 
\begin{eqnarray}\label{eq:meanS}
0 < \mean{\bfS}(\bfa) &\leq& \mean{\hat\bfS}(\bfa) + C_{\bfS} \eta \notag\\
&\leq& \mean{\bfS}(\bfa) + 2C_{\bfS} \eta.
\end{eqnarray}
Similarly, we have 
\begin{eqnarray}\label{eq:meanH}
\mean{\bfH}(\bfa) &\leq& \mean{\hat\bfH}(\bfa) + C_{\bfH} \eta \notag\\
&\leq& \mean{\bfH}(\bfa) + 2C_{\bfH} \eta.
\end{eqnarray}
	
Let $a$, $b$, $c$ and $d$ be positive numbers, it can be shown that 
\begin{eqnarray}\label{eq:frac_ineq}
\frac{-a}{b} &<& \frac{-a+c}{b+d}.
\end{eqnarray}
If $\hat{E}(\eta,\bfa)<0$, then $\mean{\hat\bfH}(\bfa) + C_{\bfH} \eta<0$. Under this condition, by using Eqs.~(\ref{eq:meanS})-(\ref{eq:frac_ineq}), we get 
\begin{eqnarray}\label{eq:quotients}
\frac{\mean{\bfH}(\bfa)}{\mean{\bfS}(\bfa)} &\leq&
\frac{\mean{\hat\bfH}(\bfa) + C_{\bfH} \eta}{\mean{\hat\bfS}(\bfa) + C_{\bfS} \eta} \notag\\
&\leq& \frac{\mean{\bfH}(\bfa)+ 2C_{\bfH} \eta}{\mean{\bfS}(\bfa)+ 2C_{\bfS} \eta},
\end{eqnarray}
i.e.~$E(\bfa) \leq \hat{E}(\eta,\bfa) \leq E'(\eta,\bfa)$. Both the first and second statements are proven.

If $E'(\eta, \bfa)<0$, then $\mean{\hat\bfH}(\bfa) + C_{\bfH} \eta \leq \mean{\bfH}(\bfa)+ 2C_{\bfH} \eta <0$. Therefore, $\hat E(\eta,\bfa)<0$. The third statement is proven. 
\end{proof}

Under the condition $E'(\eta,\bfa)<0$, Lemma~\ref{lem:statements} states $E(\bfa) \leq \hat{E}(\eta,\bfa) \leq E'(\eta,\bfa)$, where the first inequality shows that the algorithm is variational, and the second inequality gives the error bound. In the following, we will show
that this relation still holds after minimisation.

\begin{lemma}\label{lem:minE}
Under conditions 
\begin{enumerate}
\item[(1)] $\norm{\hat{\bfH}-\bfH}_2 \leq C_{\bfH}\eta$ and $\norm{ \hat{\bfS}-\bfS}_2 \leq C_{\bfS}\eta$, and 
\item[(2)] $E_g<0$ and $\min_{\bfa\neq \bf{0}} E'(\eta,\bfa)<0$,
\end{enumerate}
the following statement holds, 
\begin{eqnarray}
E_g \leq \min_{\bfa\neq \bf{0}} \hat{E}(\eta,\bfa) \leq \min_{\bfa\neq \bf{0}} E'(\eta,\bfa). 
\end{eqnarray}
\end{lemma}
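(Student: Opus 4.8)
The plan is to combine the three pointwise statements of Lemma~\ref{lem:statements} with the Rayleigh--Ritz lower bound $E(\bfa)\geq E_g$ and a short compactness argument guaranteeing that the relevant minima are attained. I would split the claim into the variational lower bound $E_g\leq\min_{\bfa\neq\bf{0}}\hat E(\eta,\bfa)$ and the error upper bound $\min_{\bfa\neq\bf{0}}\hat E(\eta,\bfa)\leq\min_{\bfa\neq\bf{0}}E'(\eta,\bfa)$, and prove them separately; note that condition~(1) of the lemma is exactly the hypothesis of Lemma~\ref{lem:statements}, so all three of its statements are available.

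For the lower bound, I would first observe that $E(\bfa)=\mean{\bfH}(\bfa)/\mean{\bfS}(\bfa)$ is the Rayleigh quotient of $H$ on the subspace vector $\sum_k a_kf_k(H)\ket{\varphi}$, hence $E(\bfa)\geq E_g$ for every $\bfa\neq\bf{0}$. Since $E_g<0$ by hypothesis~(2), $\min\{E(\bfa),0\}\geq\min\{E_g,0\}=E_g$. Statement~(1) of Lemma~\ref{lem:statements} then gives $\hat E(\eta,\bfa)\geq\min\{E(\bfa),0\}\geq E_g$ for all $\bfa\neq\bf{0}$, and minimising over $\bfa$ yields $E_g\leq\min_{\bfa\neq\bf{0}}\hat E(\eta,\bfa)$.

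For the upper bound, I would let $\bfa^\star$ be a minimiser of $E'(\eta,\cdot)$. Its existence is where a small amount of care is needed: $E'(\eta,\cdot)$ is invariant under rescaling of $\bfa$, so it suffices to minimise over the unit sphere $\{\norm{\bfa}=1\}$, which is compact; there the denominator $\mean{\bfS}(\bfa)+2C_{\bfS}\eta$ is continuous and bounded below by $2C_{\bfS}\eta>0$ (using $\eta>0$ and $\mean{\bfS}(\bfa)\geq0$), so $E'(\eta,\cdot)$ is continuous and attains its minimum. By hypothesis~(2) this minimum is negative, i.e.\ $E'(\eta,\bfa^\star)<0$; statement~(3) of Lemma~\ref{lem:statements} then gives $\hat E(\eta,\bfa^\star)<0$, and statement~(2) gives $\hat E(\eta,\bfa^\star)\leq E'(\eta,\bfa^\star)$. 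Therefore $\min_{\bfa\neq\bf{0}}\hat E(\eta,\bfa)\leq\hat E(\eta,\bfa^\star)\leq E'(\eta,\bfa^\star)=\min_{\bfa\neq\bf{0}}E'(\eta,\bfa)$, and combining with the lower bound completes the proof.

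The main obstacle — such as it is — is not an estimate but the bookkeeping around well-definedness: checking that $\mean{\bfS}(\bfa)>0$ so that $E(\bfa)$ makes sense (this uses that $\bfS$ is the Gram matrix of a linearly independent basis) and that the perturbed denominators stay strictly positive, so that every quotient appearing in Lemma~\ref{lem:statements} and in $E'$ is continuous on the unit sphere and the minima are genuinely attained. Once attainment is in hand, the result is a direct chaining of inequalities already proved, with no new analytic input required.
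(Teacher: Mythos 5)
Your proof is correct and follows essentially the same route as the paper's: the lower bound via statement (1) of Lemma~\ref{lem:statements} together with $E(\bfa)\geq E_g$ and $E_g<0$, and the upper bound by evaluating statements (3) and (2) at a minimiser $\bfa^\star$ of $E'(\eta,\cdot)$. The only addition is your explicit compactness argument for the existence of $\bfa^\star$ and the positivity of the denominators, which the paper takes for granted; this is a welcome but inessential refinement.
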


\begin{proof}
According to the first statement in Lemma~\ref{lem:statements}, $\forall \bfa\neq \bf{0}$, $\hat E(\eta,\bfa) \geq \min\{E(\bfa) ,0\}$. Since $E(\bfa)\geq E_g$ and $0>E_g$, we have $\min_{\bfa \neq \bf{0}} \hat{E}(\eta,\bfa) \geq E_g$. 
	
Let $\bfa^*=\operatorname*{arg\,min}_{\bfa\neq \bf{0}} E'(\eta,\bfa)$. According to the third statement in Lemma~\ref{lem:statements}, the condition $E'(\eta,\bfa^*)<0$ implies that $\hat{E}(\eta,\bfa^*)<0$. Under this condition, according to the second statement in Lemma~\ref{lem:statements}, $\hat{E}(\eta, \bfa^*)\leq E'(\eta, \bfa^*)$. Therefore, we have 
\begin{eqnarray}
&& \min_{\bfa\neq \bf{0}} \hat{E}(\eta,\bfa) \leq \hat{E}(\eta, \bfa^*) \notag \\
&\leq & E'(\eta, \bfa^*) = \min_{\bfa\neq \bf{0}} E'(\eta,\bfa).  
\end{eqnarray}
\end{proof}

\begin{lemma}\label{lem:solution}
Suppose $\bfS$ is invertible, $\epsilon>\epsilon_K$ and $E_g+\epsilon<0$. Eq.~(\ref{eq:equation}) has a positive solution in $\eta$, and the solution is unique.
\end{lemma}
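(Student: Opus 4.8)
The plan is to show that $g(\eta) := \min_{\bfa\neq\bf 0}E'(\eta,\bfa)$ is a continuous, strictly increasing function of $\eta$ on $[0,\infty)$, with $g(0) = E_{min} = E_g + \epsilon_K$ and $g(\eta)\to 0^-$ as $\eta\to\infty$ (using $E_g < 0$, which follows from $E_g+\epsilon<0$ and $\epsilon>\epsilon_K\geq 0$). Then the intermediate value theorem plus strict monotonicity gives a unique $\eta>0$ with $g(\eta) = E_g+\epsilon$, provided $E_g+\epsilon$ lies in the open interval $(g(0),\lim_{\eta\to\infty}g(\eta)) = (E_g+\epsilon_K, 0)$; the hypotheses $\epsilon>\epsilon_K$ and $E_g+\epsilon<0$ are exactly what place it there.

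First I would record that $g(0) = \min_{\bfa}E'(0,\bfa) = \min_{\bfa}E(\bfa) = E_{min} = E_g+\epsilon_K$ by definition of $\epsilon_K$ in Eq.~(\ref{eq:eK}); here invertibility of $\bfS$ guarantees $\mean{\bfS}(\bfa)>0$ for all $\bfa\neq\bf 0$, so $E(\bfa)$ and $E'(\eta,\bfa)$ are well-defined. Next, for monotonicity: fix $\bfa\neq\bf 0$ and consider $\eta\mapsto E'(\eta,\bfa) = \frac{\mean{\bfH}(\bfa)+2C_{\bfH}\eta}{\mean{\bfS}(\bfa)+2C_{\bfS}\eta}$. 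Its derivative in $\eta$ has the sign of $2C_{\bfH}\mean{\bfS}(\bfa) - 2C_{\bfS}\mean{\bfH}(\bfa) = 2C_{\bfS}\mean{\bfS}(\bfa)\bigl(\frac{C_{\bfH}}{C_{\bfS}} - E(\bfa)\bigr)$, which is positive whenever $E(\bfa) < C_{\bfH}/C_{\bfS}$. Since we only care about $\bfa$ near the minimiser where $E'(\eta,\bfa)<0 < C_{\bfH}/C_{\bfS}$ and $E(\bfa)\leq E'(\eta,\bfa)<0$ (as in Lemma~\ref{lem:statements}), each such $E'(\cdot,\bfa)$ is strictly increasing there; taking the pointwise minimum over $\bfa$ of a family of strictly increasing continuous functions (bounded below and attaining its min on the compact set of unit vectors) yields a strictly increasing continuous $g$. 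I would also argue continuity of $g$ directly: $E'(\eta,\bfa)$ is jointly continuous and for each $\eta$ the minimum is attained on the unit sphere, giving continuity of $g$ by a standard compactness argument.

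For the limit $\eta\to\infty$: pick any fixed $\bfa_0\neq\bf 0$; then $E'(\eta,\bfa_0)\to C_{\bfH}/C_{\bfS} > 0$... this is not directly what I want, so instead I would bound $g(\eta) = \min_\bfa E'(\eta,\bfa)$ from above and below. An upper bound: evaluate at the ground-state-optimal $\bfa$, or more cleanly note $g(\eta)\leq g(0) + (\text{something})$ — actually the cleanest route is: for $\eta$ small, $g$ increases from $E_g+\epsilon_K<0$; and an explicit bound $g(\eta) \geq \frac{E_{min}\mean{\bfS}(\bfa^*) }{\mean{\bfS}(\bfa^*)+2C_\bfS\eta} \cdot(\cdots)$ shows $g(\eta)<0$ can persist, while $g(\eta)\to 0$ because the numerator and denominator of $E'$ are both dominated by the $\eta$ terms only in ratio $C_\bfH/C_\bfS>0$ — so in fact $g(\eta)$ is bounded above by $0$ for all $\eta$ and approaches a nonnegative limit. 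Since $g$ is strictly increasing and $g(0)<0$, and $g$ is continuous with $\sup_\eta g(\eta)\geq 0$ (as it cannot stay below any negative value forever; if it did, $E'(\eta,\bfa^*_\eta)$ bounded away from $0$ would contradict the denominator growing while the numerator's negative part is fixed), the range of $g$ on $[0,\infty)$ contains $[E_g+\epsilon_K,0)$. Hence for any target in the open interval $(E_g+\epsilon_K,0)$ — and $E_g+\epsilon$ is such a target by $\epsilon>\epsilon_K$ and $E_g+\epsilon<0$ — there is exactly one $\eta>0$ solving Eq.~(\ref{eq:equation}).

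The main obstacle I anticipate is handling the limiting behaviour of $g(\eta)$ as $\eta\to\infty$ rigorously: one must show $g$ does not level off strictly below $0$ so that the target value $E_g+\epsilon<0$ is actually hit. The clean way is to exhibit, for any $\delta>0$, a choice of $\bfa$ and threshold $\eta_\delta$ with $E'(\eta,\bfa)>-\delta$ for $\eta>\eta_\delta$ — for instance using that $E'(\eta,\bfa^*) = \frac{\mean{\bfH}(\bfa^*)+2C_\bfH\eta}{\mean{\bfS}(\bfa^*)+2C_\bfS\eta}$ for a fixed minimiser $\bfa^*$ at some reference $\eta_1$ tends to $C_\bfH/C_\bfS>0>-\delta$, so $g(\eta)\le E'(\eta,\bfa^*)$ eventually exceeds $-\delta$; combined with strict monotonicity this forces $g$ to cross every level in $(E_g+\epsilon_K,0)$ exactly once. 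Everything else is routine calculus with the explicit Möbius-in-$\eta$ form of $E'$.
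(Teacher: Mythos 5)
Your proposal follows essentially the same route as the paper's proof: show that $g(\eta)=\min_{\bfa\neq\bf{0}}E'(\eta,\bfa)$ is continuous and strictly increasing with $g(0)=E_g+\epsilon_K$, then conclude by the intermediate value theorem. The paper establishes strict monotonicity not by differentiating the M\"obius map but by the elementary inequality $\frac{-a}{b}<\frac{-a+c}{b+d}$ applied at the minimiser $\bfa^*_{\eta_2}$ (giving $g(\eta_1)\leq E'(\eta_1,\bfa^*_{\eta_2})<E'(\eta_2,\bfa^*_{\eta_2})=g(\eta_2)$), and continuity by a direct $\delta$-argument with the minimisers rather than by compactness of the unit sphere; these are cosmetic differences.

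One step of your write-up is, as stated, in the wrong direction. To conclude that $g$ actually attains the value $E_g+\epsilon<0$ you need a \emph{lower} bound on $g(\eta)$ for large $\eta$, but twice you argue via $g(\eta)\leq E'(\eta,\bfa^*)$ for a \emph{fixed} $\bfa^*$ and let that upper bound tend to $C_{\bfH}/C_{\bfS}>0$; an upper bound becoming positive says nothing about where the minimum sits. Your parenthetical contradiction argument is the correct one, and it is easy to make uniform: $\mean{\bfH}(\bfa)\geq -\norm{\bfH}_2$ for every $\bfa\neq\bf{0}$, so once $\eta>\norm{\bfH}_2/(2C_{\bfH})$ the numerator $\mean{\bfH}(\bfa)+2C_{\bfH}\eta$ is positive for all $\bfa$ simultaneously, hence $g(\eta)\geq 0>E_g+\epsilon$. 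With that replacement your argument is complete — and, for what it is worth, slightly more explicit than the paper's own proof, which deduces existence of the solution from monotonicity, continuity and $g(0)=E_g+\epsilon_K$ without spelling out that $g$ eventually exceeds $E_g+\epsilon$.
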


\begin{proof}
Let $\bfa^*_\eta=\operatorname*{arg\,min}_{\bfa\neq \bf{0}} E'(\eta,\bfa)$. 

First, we prove that $\min_{\bfa\neq \bf{0}} E'(\eta,\bfa)$ is a strictly monotonically increasing function of $\eta$ when $\eta$ is in the range $\{\eta\geq 0\vert\min_{\bfa\neq \bf{0}} E'(\eta,\bfa)<0\}$. Let $\eta_2$ and $\eta_1$ be in the range and $\eta_2>\eta_1$. By Eq.~(\ref{eq:frac_ineq}), we have $E'(\eta_1,\bfa^*_{\eta_2})<E'(\eta_2,\bfa^*_{\eta_2})$. Since $\min_{\bfa\neq \bf{0}} E'(\eta_1,\bfa)\leq E'(\eta_1,\bfa^*_{\eta_2})$, 
\begin{eqnarray}
\min_{\bfa\neq \bf{0}} E'(\eta_1,\bfa) < \min_{\bfa\neq \bf{0}} E'(\eta_2,\bfa).
\end{eqnarray}

Second, we prove that $\min_{\bfa\neq \bf{0}} E'(\eta,\bfa)$ is a continuous function of $\eta$ when $\eta\geq 0$. When $\eta\geq 0$ and $\bfa\neq \bf{0}$, $E'(\eta,\bfa)$ is a continuous function of $\eta$. Consequently, $\forall \delta_E>0$, $\exists \delta_\eta>0$ such that $E'(\eta+\delta_\eta,\bfa^*_\eta) - E'(\eta,\bfa^*_\eta)<\delta_E$. Then, 
\begin{eqnarray}
0 &<& E'(\eta+\delta_\eta,\bfa^*_{\eta+\delta_\eta})-E'(\eta,\bfa^*_\eta) \notag \\
&<& E'(\eta+\delta_\eta,\bfa^*_{\eta})-E'(\eta,\bfa^*_\eta) \leq \delta_E. 
\end{eqnarray}
Therefore, $\forall \delta_E>0$, $\exists \delta_\eta>0$ such that 
\begin{eqnarray}
0 &<& \min_{\bfa\neq \bf{0}} E'(\eta+\delta_\eta,\bfa)-\min_{\bfa\neq \bf{0}} E'(\eta,\bfa) \notag\\
&<& \delta_E.
\end{eqnarray}

Combining the two, we conclude that $\min_{\bfa\neq \bf{0}} E'(\eta,\bfa)$ is a strictly monotonically increasing continuous function of $\eta$ when $\eta\geq 0$ and $\min_{\bfa\neq \bf{0}} E'(\eta,\bfa)<0$. Taking $\eta = 0$, we have $\min_{\bfa\neq \bf{0}} E'(0,\bfa) = E_g+\epsilon_K$. Therefore, when $\epsilon>\epsilon_K$ and $E_g+\epsilon<0$, the equation $\min_{\bfa\neq \bf{0}} E'(\eta,\bfa)=E_g+\epsilon$ has a unique solution. 
\end{proof}

\subsection{Proof of Lemma~\ref{lem}}

\begin{proof}
For a regularisation parameter $\eta$ satisfying Eq.~(\ref{eq:condition}), the first condition of Lemma~\ref{lem:minE} holds up to the failure probability $\kappa$. Under conditions $E_g<0$ and $\min_{\bfa\neq \bf{0}} E'(\eta,\bfa)<0$, according to Lemma~\ref{lem:minE}, $E_g \leq \hat{E}_{min} \leq \min_{\bfa\neq \bf{0}} E'(\eta,\bfa)$ holds up to the failure probability $\kappa$. Eq.~(\ref{eq:hatE_bound}) is proven.
\end{proof}

\subsection{Proof of Theorem~\ref{the}}

\begin{proof}
Suppose $\epsilon>\epsilon_K$ and $E_g+\epsilon<0$. The existence of the solution $\eta$ is proved in Lemma~\ref{lem:solution}. With the solution $\eta$, Eq.~(\ref{eq:equation}) is satisfied. 
	
If we take the measurement number
\begin{eqnarray}
M = \frac{4d^4}{\kappa\eta^2},
\end{eqnarray}
$\norm{ \hat \bfH - \bfH}_2 \leq C_{\bfH}\eta $ and $\norm{ \hat \bfS - \bfS}_2 \leq C_{\bfS}\eta$ hold up to the failure probability $\kappa$ as proved in Lemma~\ref{lem:pro}. Then, according to Lemma~\ref{lem:minE}, $E_g \leq \min_{\bfa\neq \bf{0}} \hat{E}(\eta,\bfa) \leq \min_{\bfa\neq \bf{0}} E'(\eta, \bfa) = E_g + \epsilon$ holds up to the failure probability. 
	
In the independent measurement protocol, the total number of measurements is 
\begin{eqnarray}\label{eq:Mtot1}
M_{tot} &=& M\times 2 \times d^2 \times 2 = \frac{16d^6}{\kappa\eta^2}.
\end{eqnarray}
Notice that for each matrix entry $\hat \bfH_{k,q}$ or $\hat \bfS_{k,q}$, we need $M$ measurements for its real part and $M$ measurements for its imaginary part. There are two matrices, and each matrix has $d^2$ entries. To match Eq.~(\ref{eq:Mtot1}), we can take $\alpha(\kappa)=256/\kappa$ and $\beta(d)=d^6$ in Eq.~(\ref{eq:Mtot}).
\end{proof}

In the above proof, we have used Lemma~\ref{lem:pro} to give the most general and conservative estimation of the measurement cost. In fact, according to Appendix~\ref{app:norm_distribution}, when considering details of the quantum KSD algorithms, $M$ (and $M_{tot}$) can be optimised, so as $\alpha(\kappa)$ and $\beta(d)$; see Appendix~\ref{app:optimisation}.

\subsection{Remarks on the measurement number}

First, in our algorithm (i.e.~GP) and some other KSD algorithms (i.e.~P, CP, IP, ITE and F), matrices $\bfH$ and $\bfS$ are real. In this case, we only need to measure the real part. The cost is directly reduced by a factor of two. Only measuring the real part also reduces variances by a factor of two, i.e.~from $\mathrm{Var}(\hat{\bfH}_{k,q})\leq 2C_{\bfH}^2/M$ and $\mathrm{Var}(\hat{\bfS}_{k,q})\leq 2C_{\bfS}^2/M$ to $\mathrm{Var}(\hat{\bfH}_{k,q})\leq C_{\bfH}^2/M$ and $\mathrm{Var}(\hat{\bfS}_{k,q})\leq C_{\bfS}^2/M$. Because of the reduced variance, the cost is reduced by another factor of two. Overall, the measurement cost is reduced by a factor of four in algorithms using real matrices. 

Second, when the failure probability $\kappa$ is low, typically $\norm{ \hat \bfH - \bfH}_2 \ll C_{\bfH}\eta$ and $\norm{ \hat \bfS - \bfS}_2 \ll C_{\bfS}\eta$. In this case, the error is overestimated in $E'$, and the typical error is approximately given by 
\begin{eqnarray}
E''(\eta,\bfa) = \frac{\bfa^\dag (\bfH+C_{\bfH}\eta) \bfa}{\bfa^\dag (\bfS+C_{\bfS}\eta) \bfa}.
\end{eqnarray}
Notice that a factor of two has been removed from the denominator and numerator compared with $E'$. If we consider the typical error $\epsilon = E'' - E_g$ instead of the error upper bound $\epsilon = E' - E_g$, the required sampling cost is reduced by a factor of four. 

\section{Cost for measuring the energy with an ideal projection}
\label{app:projection}

We consider the ideal case that we can realise an ideal projection onto the true ground state. To use the above results, we assume that $f_1=\ketbra{\psi_g}{\psi_g}$ is the projection operator and take $d = 1$ in the KSD algorithm. Then, we have 
\begin{eqnarray}
\bfH_{1,1} &=& \bra{\varphi} f_1^\dag H f_1\ket{\varphi} = p_g E_g, \\
\bfS_{1,1} &=& \bra{\varphi} f_1^\dag f_1\ket{\varphi} = p_g,
\end{eqnarray}
and $E_{min} = \bfH_{1,1}/\bfS_{1,1} = E_g$, i.e.~$\epsilon_K = 0$. 

We suppose $C_{\bfH}=\norm{H}_2$ and $C_{\bfS}=1$. When we take $\eta = \frac{p_g \epsilon}{4\norm{H}_2}$, $\min_{\bfa\neq \bf{0}} E'(\eta, \bfa) \leq E_g + \epsilon$. Notice that $\min_{\bfa\neq \bf{0}} E'(\eta, \bfa) = (\bfH_{1,1}+2\norm{H}_2\eta)/(\bfS_{1,1}+2\eta) \leq E_g + 4\norm{H}_2\eta/p_g$. Accordingly, the total measurement number in the ideal case has an upper bound 
\begin{eqnarray}
M_{tot} = \frac{\alpha(\kappa)\norm{H}_2^2}{p_g^2 \epsilon^2},
\end{eqnarray}
which is the first factor in Eq.~(\ref{eq:Mtot_factoring}). 

\section{Optimised $\alpha$ and $\beta$ factors}
\label{app:optimisation}

Theorem~\ref{the} is proved taking the bound given by Lemma~\ref{lem:pro}. However, Lemma~\ref{lem:pro} aims to give a general result, leading to a conservative estimation of $\alpha$ and $\beta$. In this section, based on other results in Appendix~\ref{app:norm_distribution}, we will show that these two factors can be further reduced. We summarise the reduced $\alpha$ and $\beta$ in Table~\ref{tab:eta} as well. 

For the independent measurement protocol, Lemma~\ref{lem:pro_improved} guarantees that the lower bound of the total measurement number is 
\begin{eqnarray}
M_{tot}=4d^2M=\frac{8d^4}{\eta^2}\ln{\frac{8d^2}{\kappa}}.
\end{eqnarray}
Assuming $\kappa\ll \frac{1}{8d^2}$ and neglecting $8d^2$ in logarithm yields $\alpha \approx 128\ln\frac{1}{\kappa}$ and $\beta \approx d^4$. 

\begin{definition}
Let $\mathbf{g}\in \mathbb{R}^{2d-1}$ and $\mathbf{G}\in \mathbb{R}^{d\times d}$. If $\mathbf{G}_{i,j} = \mathbf{g}_{i+j-1}$, then $\mathbf{G}$ is a $d\times d$ real Hankel matrix.
\end{definition}

In our algorithm (i.e.~GP) and some other KSD algorithms (i.e.~P, IP and ITE), $\bfH$ and $\bfS$ are real Hankel matrices. The collective measurement protocol only requires measuring $2d-1$ matrix entries for each matrix to construct $\hat{\bfH}$ and $\hat{\bfS}$. Specifically, we measure $\bfH_{\lceil\frac{l}{2}\rceil,\lfloor\frac{l}{2}\rfloor}$ and $\bfS_{\lceil\frac{l}{2}\rceil,\lfloor\frac{l}{2}\rfloor}$ with $l = 2,3,\ldots,2d$. Then we take $\bfH_{i,j} = \bfH_{\lceil\frac{i+j}{2}\rceil,\lfloor\frac{i+j}{2}\rfloor}$ and $\bfS_{i,j} = \bfS_{\lceil\frac{i+j}{2}\rceil,\lfloor\frac{i+j}{2}\rfloor}$ for all $i$ and $j$. According to Lemma~\ref{lem:pro_real}, the total measurement number is 
\begin{eqnarray}
M_{tot} &=& M \times (2d-1) \times 2 \notag\\
&=& \frac{4d(2d-1)}{\eta^2}\ln{\frac{4d}{\kappa}}.
\end{eqnarray}
Assuming $\kappa\ll \frac{1}{4d}$ and neglecting $4d$ in logarithm yields $\alpha = 64\ln\frac{1}{\kappa}$ and $\beta = d(2d-1)$. 

For the CP and F algorithms, $\bfH$ and $\bfS$ are real symmetric matrices. We need to measure $d(d+1)/2$ matrix entries to construct $\hat{\bfH}$ and $\hat{\bfS}$, respectively.  
According to Lemma~\ref{lem:pro_real}, the total measurement number is 
\begin{eqnarray}
M_{tot} &=& M \times d(d+1)/2 \times 2 \notag\\
&=& \frac{2d^2(d+1)}{\eta^2}\ln{\frac{4d}{\kappa}}.
\end{eqnarray}
Assuming $\kappa\ll \frac{1}{4d}$ and neglecting $4d$ in logarithm yields $\alpha = 32\ln\frac{1}{\kappa}$ and $\beta = d^2(d+1)$.  

The RTE algorithm is the only one having complex matrices $\bfH$ and $\bfS$. Considering its Hermitian–Toeplitz structure, we need to measure $d$ real and $d-1$ imaginary matrix entries to estimate $\bfH$ and $\bfS$, respectively. 
According to Lemma~\ref{lem:pro_complex}, the total measurement number is 
\begin{eqnarray}
M_{tot} &=& M \times (2d-1) \times 2 \notag\\
&=& \frac{4(2d-1)^2}{\eta^2}\ln{\frac{4d}{\kappa}}.
\end{eqnarray}
Assuming $\kappa\ll \frac{1}{4d}$ and neglecting $4d$ in logarithm yields $\alpha = 64\ln\frac{1}{\kappa}$ and $\beta = (2d-1)^2$.

Finally, the optimised $\alpha$ and $\beta$ are listed in Table~\ref{tab:eta}. Compared with cases, Hankel matrices have smaller $\alpha$ and $\beta$. For the GP algorithm, taking $\alpha=16\ln\frac{1}{\kappa}$ and $\beta=d(2d-1)$ is sufficient.

\section{Gaussian-power basis}
\label{app:GPbasis}

In this section, we prove Eq.~(\ref{eq:norm_bound}) and Eq.~(\ref{eq:LCU}), respectively. 

\subsection{Norm upper bound}

\begin{lemma}\label{lem:norm}
The Gaussian-power basis operators $f_k$ defined in Eq.~(\ref{eq:fk}) satisfy Eq.~(\ref{eq:norm_bound}).
\end{lemma}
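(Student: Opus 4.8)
The plan is to reduce the operator norm to a scalar optimisation via the functional calculus, and then do elementary single-variable calculus. Since $H$ is Hermitian, write $x = H - E_0$, so $x$ is a Hermitian operator with real spectrum $\mathrm{spec}(x) \subseteq \mathbb{R}$, and $f_k = x^{k-1} e^{-\frac{1}{2}x^2\tau^2} = h(x)$ with $h(\lambda) = \lambda^{k-1} e^{-\frac{1}{2}\lambda^2\tau^2}$. Because $h$ is a real-valued function of a Hermitian operator, $f_k$ is Hermitian (or at least normal), so its spectral norm equals the spectral radius: $\norm{f_k}_2 = \max_{\lambda \in \mathrm{spec}(x)} \abs{h(\lambda)} \leq \sup_{\lambda \in \mathbb{R}} \abs{h(\lambda)}$. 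This is the only place we use anything about $H$; everything after is a bound on a scalar function and hence holds for any $E_0$ and any Hamiltonian.

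Next I would maximise $\abs{h(\lambda)} = \abs{\lambda}^{k-1} e^{-\frac{1}{2}\lambda^2 \tau^2}$, which by symmetry reduces to maximising $g(\lambda) = \lambda^{k-1} e^{-\frac{1}{2}\lambda^2\tau^2}$ over $\lambda \geq 0$. Differentiating gives $g'(\lambda) = \lambda^{k-2} e^{-\frac{1}{2}\lambda^2\tau^2}\bigl(k-1 - \lambda^2\tau^2\bigr)$, so the unique interior critical point (for $k \geq 2$) is $\lambda^\star = \sqrt{k-1}/\tau$, and it is a maximum since $g \to 0$ at both $0$ and $+\infty$. Substituting,
\begin{eqnarray}
g(\lambda^\star) = \left(\frac{k-1}{\tau^2}\right)^{\frac{k-1}{2}} e^{-\frac{k-1}{2}} = \left(\frac{k-1}{e\tau^2}\right)^{\frac{k-1}{2}},
\end{eqnarray}
which is exactly the claimed bound.

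Finally I would dispose of the edge case $k = 1$: there $f_1 = e^{-\frac{1}{2}x^2\tau^2}$, so $\norm{f_1}_2 \leq 1$, matching the right-hand side $\left(\frac{0}{e\tau^2}\right)^0 = 1$ under the usual convention $0^0 = 1$. I do not expect any genuine obstacle here; the only points needing a little care are the justification that the spectral norm of a function of a Hermitian operator is the sup of the absolute value of that function over the spectrum, and the bookkeeping of the $k=1$ (and, if one wants to be scrupulous, $k=2$) boundary cases in the derivative computation.
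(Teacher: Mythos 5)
Your proposal is correct and follows essentially the same route as the paper: reduce $\norm{f_k}_2$ to the supremum of $\abs{\lambda^{k-1}e^{-\frac{1}{2}\lambda^2\tau^2}}$ over the real spectrum via the spectral decomposition, then maximise the scalar function to obtain $\left(\frac{k-1}{e\tau^2}\right)^{\frac{k-1}{2}}$. You merely spell out the elementary calculus and the $k=1$ edge case that the paper leaves implicit.
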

\begin{proof}
By the spectral decomposition of the operator $(H-E_0)\tau$, we have
\begin{eqnarray}\label{eq:spec_decomp}
\tau^{k-1}\norm{f_k}_2 \leq \max_{y\in\mathbb{R}} \abs{y^{k-1} e^{-\frac{1}{2}y^2}},    
\end{eqnarray}
where $y$ corresponds to eigenvalues of $(H-E_0)\tau$. It can be found that
\begin{eqnarray}\label{eq:spec_max}
\max_{y\in\mathbb{R}} \abs{y^{k-1} e^{-\frac{1}{2}y^2}} = \left( \frac{k-1}{e} \right)^{\frac{k-1}{2}}.
\end{eqnarray} 
Combining Eqs.~(\ref{eq:spec_decomp}) and (\ref{eq:spec_max}), Eq.~(\ref{eq:norm_bound}) is proved. 
\end{proof}

\subsection{The integral}

The generating function for the Hermite polynomials is \cite{arfken2013}
\begin{eqnarray}\label{eq:generating_function}
\sum_{n=0}^{\infty}H_n(u)\frac{t^n}{n!}=e^{2ut-t^2}.
\end{eqnarray}

\begin{lemma}
\begin{eqnarray}\label{eq:integral}
\int_{-\infty}^{+\infty}du H_n(u) e^{-u^2-iyu}=\sqrt{\pi}(-iy)^n e^{-\frac{1}{4}y^2}    
\end{eqnarray}
holds for arbitrary $y\in \mathbb{R}$.
\end{lemma}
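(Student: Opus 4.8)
The plan is to evaluate the Gaussian–Hermite integral
$I_n(y) = \int_{-\infty}^{+\infty}du\, H_n(u)\, e^{-u^2 - iyu}$
by exploiting the generating function (\ref{eq:generating_function}) rather than integrating term by term. First I would introduce the generating function of the whole family: multiply $I_n(y)$ by $t^n/n!$ and sum over $n\geq 0$, so that the left-hand side becomes
$\sum_{n=0}^{\infty} I_n(y)\frac{t^n}{n!} = \int_{-\infty}^{+\infty}du\, e^{2ut - t^2}\, e^{-u^2 - iyu}$,
where I have interchanged the sum and the integral (justified by absolute convergence: the sum $\sum_n H_n(u) t^n/n! = e^{2ut-t^2}$ converges uniformly on compact $t$-sets, and the Gaussian $e^{-u^2}$ dominates in $u$). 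This reduces the problem to a single Gaussian integral with a shifted linear term.

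Next I would complete the square in the exponent. Writing $2ut - t^2 - u^2 - iyu = -\bigl(u - (t - \tfrac{iy}{2})\bigr)^2 + (t-\tfrac{iy}{2})^2 - t^2$, the $u$-integral is a standard (complex-shifted) Gaussian equal to $\sqrt{\pi}$, and the residual exponent simplifies to $(t - \tfrac{iy}{2})^2 - t^2 = -iyt - \tfrac{y^2}{4}$. Hence
$\sum_{n=0}^{\infty} I_n(y)\frac{t^n}{n!} = \sqrt{\pi}\, e^{-y^2/4}\, e^{-iyt}$.
Expanding $e^{-iyt} = \sum_{n=0}^{\infty}\frac{(-iy)^n t^n}{n!}$ and matching coefficients of $t^n$ on both sides gives $I_n(y) = \sqrt{\pi}\,(-iy)^n e^{-y^2/4}$, which is exactly (\ref{eq:integral}).

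The only genuine subtlety — the "hard part" — is justifying the contour shift in the Gaussian integral after completing the square, since the shift is by the purely imaginary amount $t - \tfrac{iy}{2}$ (which has an imaginary component even for real $t$). I would handle this by a standard contour-deformation argument: the integrand $e^{-z^2}$ is entire, and the contributions from the vertical segments at $\mathrm{Re}\,u = \pm R$ vanish as $R\to\infty$ because $|e^{-z^2}| = e^{-(\mathrm{Re}\,z)^2 + (\mathrm{Im}\,z)^2}$ decays like $e^{-R^2}$ while the imaginary part stays bounded, so $\int_{-\infty}^{+\infty} e^{-(u-c)^2}\,du = \sqrt{\pi}$ for any fixed complex constant $c$. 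Everything else — the interchange of sum and integral, completing the square, and reading off coefficients — is routine. As a sanity check I would verify the case $n=0$ directly ($\int e^{-u^2-iyu}du = \sqrt{\pi}\,e^{-y^2/4}$) and note that $y\in\mathbb{R}$ is used only to keep the original integral manifestly convergent; the identity in fact extends to complex $y$ by analytic continuation, though we only need the real case.
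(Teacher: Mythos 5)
Your proposal is correct and follows essentially the same route as the paper: sum $I_n t^n/n!$ using the generating function, evaluate the resulting Gaussian integral to get $\sqrt{\pi}\,e^{-y^2/4 - iyt}$, and match coefficients of $t^n$. The only difference is that you spell out the justification of the sum--integral interchange and the complex contour shift, which the paper leaves implicit under the phrase ``the Gaussian integral is used.''
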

\begin{proof}
Denote $I_n=\int_{-\infty}^{+\infty}du H_n(u) e^{-u^2-iyu}$. Taking Eq.~(\ref{eq:generating_function}) into account, we have \cite{babusci2012}
\begin{eqnarray}\label{eq:gaussian_integral}
\sum_{n=0}^{\infty}I_n\frac{t^n}{n!}&=&\int_{-\infty}^{+\infty}du e^{-u^2+(2t-iy)u-t^2}\notag\\
&=&\sqrt{\pi}e^{-\frac{1}{4}y^2-iyt}\notag\\
&=&\sqrt{\pi}e^{-\frac{1}{4}y^2}\sum_{n=0}^{\infty}(-iy)^n\frac{t^n}{n!},
\end{eqnarray}
where the Gaussian integral is used. From Eq.~(\ref{eq:gaussian_integral}), we conclude that $ I_n=\sqrt{\pi}(-iy)^n e^{-\frac{1}{4}y^2}$. 
\end{proof}

\begin{lemma}\label{lemma:LCU}
Eq.~(\ref{eq:LCU}) is true. 
\end{lemma}

\begin{proof}
Taking $y=\sqrt{2}x\tau$ and $u=\frac{t}{\sqrt{2}\tau}$ in Eq.~(\ref{eq:integral}), Eq.~(\ref{eq:LCU}) is proved directly. 
\end{proof}

\section{Algorithm and variance}
\label{app:algorithm}

In this section, first, we review the zeroth-order leading-order-rotation formula~\cite{yang2021}, then we analyse the variance and work out the upper bounds of the cost, finally, we give the pseudocode of our measurement-efficient algorithm. 

\subsection{Zeroth-order leading-order-rotation formula}

Assume that the Hamiltonian is expressed as a linear combination of Pauli operators, i.e. 
\begin{eqnarray}
H = \sum_j h_j\sigma_j.
\end{eqnarray}
The Taylor expansion of the time evolution operator is 
\begin{eqnarray}
e^{-iH\Delta t} = \openone-iH\Delta t + T_0(\Delta t),
\end{eqnarray}
where the summation of high-order terms is 
\begin{eqnarray}
T_0(\Delta t) &=& \sum_{k=2}^\infty \sum_{j_1,\ldots,j_k} \frac{\prod_{a=1}^k (-ih_{j_a}\Delta t)}{k!} \notag\\
&& \times \sigma_{j_k}\cdots\sigma_{j_1}.
\end{eqnarray}
The leading-order terms can be expressed as a linear combination of rotation operators, 
\begin{eqnarray}
\openone-iH\Delta t = \sum_j \beta_j(\Delta t) e^{-i\sgn(h_j)\phi(\Delta t)\sigma_j},
\end{eqnarray}
where $\phi(\Delta t) = \arctan(h_{tot}\Delta t)$, $\beta_j(\Delta t) = \abs{h_j}\Delta t/\sin\phi(\Delta t)$ and $h_{tot} = \sum_j \abs{h_j}$. The zeroth-order leading-order-rotation formula is 
\begin{eqnarray}
e^{-i H \Delta t} &=& \sum_j \beta_j(\Delta t) e^{-i\sgn(h_j)\phi(\Delta t)\sigma_j} \notag\\
&& + T_0(\Delta t),
\label{eq:LOR1}
\end{eqnarray}
which is a linear combination of rotation and Pauli operators. Accordingly, for the evolution time $t$ and time step number $N$, the LCU expression of the time evolution operator is 
\begin{eqnarray}
e^{-iHt} &=& \left[\sum_j \beta_j(t/N) e^{-i\sgn(h_j)\phi(t/N)\sigma_j} \right.\notag\\
&&\left. + T_0(t/N)\right]^N.
\label{eq:LOR2}
\end{eqnarray}
The above equation is the explicit form of the expression $e^{-iHt} = \left[\sum_r v_r(t/N) V_r(t/N)\right]^N$ referred in the main text. 

Now, we consider the cost factor, i.e.~the 1-norm of coefficients in an LCU expression. For Eq.~(\ref{eq:LOR1}), the corresponding cost factor is 
\begin{eqnarray}
&&c(\Delta t) = \sum_j \abs{\beta_j(\Delta t)} + \sum_{k=2}^\infty \sum_{j_1,\ldots,j_k} \frac{\prod_{a=1}^k \abs{h_{j_a}\Delta t}}{k!} \notag \\
&&= \sqrt{1+h_{tot}^2\Delta t^2} + e^{h_{tot}\abs{\Delta t}} - (1+h_{tot}\abs{\Delta t}).
\end{eqnarray}
Accordingly, the cost factor of Eq.~(\ref{eq:LOR2}) is $[c(t/N)]^N$. 

\begin{lemma}\label{lemma:cost_LOR}
\begin{eqnarray}
c(\Delta t) \leq e^{\frac{e}{2} h_{tot}^2 \Delta t^2}.
\end{eqnarray}
\end{lemma}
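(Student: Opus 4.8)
The plan is to bound the three contributions to $c(\Delta t) = \sqrt{1+h_{tot}^2\Delta t^2} + e^{h_{tot}\abs{\Delta t}} - (1+h_{tot}\abs{\Delta t})$ by comparing termwise with the power series of $e^{\frac{e}{2}h_{tot}^2\Delta t^2}$. Write $z = h_{tot}\abs{\Delta t}\geq 0$, so that the claim becomes $\sqrt{1+z^2} + e^z - (1+z) \leq e^{\frac{e}{2}z^2}$. The right-hand side has the expansion $e^{\frac{e}{2}z^2} = \sum_{m=0}^\infty \frac{(e/2)^m}{m!}z^{2m}$, which contains only even powers of $z$ with positive coefficients. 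The natural approach is to expand the left-hand side, group into even and odd powers, and show each even-power coefficient on the left is dominated by the corresponding coefficient $\frac{(e/2)^m}{m!}$ on the right, while arranging that the odd-power terms on the left are nonpositive (or can be absorbed).

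First I would handle the $e^z - (1+z)$ piece, whose series is $\sum_{k\geq 2}\frac{z^k}{k!}$. The key combinatorial step is to pair each odd term $\frac{z^{2m+1}}{(2m+1)!}$ with the adjacent even term $\frac{z^{2m}}{(2m)!}$ or $\frac{z^{2m+2}}{(2m+2)!}$; since for $z$ in the relevant range one can bound $\frac{z^{2m+1}}{(2m+1)!}$ by a multiple of a neighbouring even term, the whole tail $e^z-(1+z)$ is bounded by an even series $\sum_{m\geq 1}c_m z^{2m}$ with explicitly controlled $c_m$. Then I would expand $\sqrt{1+z^2} = 1 + \tfrac12 z^2 - \tfrac18 z^4 + \cdots$, noting it only has even powers and that beyond the $\tfrac12 z^2$ term the coefficients alternate; bounding it above by $1 + \tfrac12 z^2$ (valid since the remaining series is $\leq 0$) simplifies matters considerably. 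Finally, compare the resulting even series for the left-hand side, term by term, against $\frac{(e/2)^m}{m!}z^{2m}$: the $m=0$ terms match ($1=1$), the $m=1$ comparison is $\tfrac12 + \tfrac12 = 1 \leq \tfrac{e}{2}$, and for $m\geq 2$ one checks $\frac{1}{(2m)!}$-type coefficients against $\frac{(e/2)^m}{m!}$, where the inequality $(2m)! \geq \frac{m!}{(e/2)^m}$ (equivalently $\binom{2m}{m}m!\geq (2/e)^m$) holds comfortably.

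The step I expect to be the main obstacle is the bookkeeping in pairing the odd-power terms of $e^z-(1+z)$ with even-power terms uniformly in $z$: a crude bound like $z^{2m+1}\leq \tfrac12(z^{2m} + z^{2m+2})$ is only useful on a bounded interval, whereas the claimed inequality should hold for all $\Delta t\in\mathbb{R}$. I would instead use the sharper device of writing $e^z - (1+z) \leq \frac{z}{2}(e^z - 1) + \text{(even correction)}$, or more robustly, bound $e^z - (1+z) \le \frac{z^2}{2}e^z$ is false, so the cleaner route is: since we ultimately want an even-function upper bound, note $e^z - (1+z)\le e^{z^2/2}\cdot(\text{something})$ is too lossy, and instead work from the integral representation $e^z - 1 - z = \int_0^z (z-s)e^s\,ds$ to extract an even majorant. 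Establishing a clean even-in-$z$ upper bound for this odd-containing expression that is simultaneously tight enough to beat $e^{\frac{e}{2}z^2}$ is the crux; once that is in hand, the remaining coefficient comparisons are routine factorial estimates. An alternative, possibly cleaner, route is to prove the inequality by showing the function $g(z) = e^{\frac{e}{2}z^2} - \sqrt{1+z^2} - e^z + 1 + z$ satisfies $g(0)=0$, $g'(0)=0$, and $g''(z)\ge 0$ for all $z\ge 0$ (using $e\cdot z^2 e^{\frac{e}{2}z^2}\ge$ the competing second-derivative terms, which follows from $e^{\frac{e}{2}z^2}\ge e^z$ for $z\ge 2/e$ and a direct check on $[0,2/e]$); I would try this convexity argument first as it avoids the series bookkeeping entirely.
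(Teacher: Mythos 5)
Your closing suggestion --- the convexity argument on $g(z)=e^{\frac{e}{2}z^2}-\sqrt{1+z^2}-e^z+1+z$ --- is correct and is the route you should commit to; the series-comparison plan occupying most of your proposal is unnecessary (though your worry that $z^{2m+1}\leq\frac{1}{2}(z^{2m}+z^{2m+2})$ holds only on a bounded interval is unfounded: it is AM--GM and valid for all $z\geq 0$, so even that route could be pushed through with enough factorial bookkeeping). The paper's proof is a close cousin of your convexity argument but organised differently: it first majorises $\sqrt{1+x^2}\leq e^{x^2/2}$ (from $1+u\leq e^u$), then shows the single function $y(x)=e^{\frac{e}{2}x^2}+(1+x)-e^x-e^{x^2/2}$ has $y(0)=0$ and $y'(x)\geq 0$, the latter via the chain $y'(x)\geq(e-1)xe^{\frac{e}{2}x^2}+1-e^x\geq(e-1)x+1-e^x$ on $[0,1]$ plus a separate estimate for $x\geq 1$. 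Your version avoids both the $e^{x^2/2}$ majorant and the case split: one computes $g''(z)=e(1+ez^2)e^{\frac{e}{2}z^2}-(1+z^2)^{-3/2}-e^z$, uses $(1+z^2)^{-3/2}\leq 1$ and $1+e^z\leq 2e^z$, and reduces everything to $1+\frac{e}{2}z^2\geq z+\ln 2$, a quadratic in $z$ with negative discriminant, so $g''\geq 0$ globally with no interval splitting. One correction to your sketch: near $z=0$ the positivity of $g''$ is carried by the constant piece $e\,e^{\frac{e}{2}z^2}$ (giving $g''(0)=e-2>0$), not by the $z^2$ term you single out, so your ``direct check on $[0,2/e]$'' must keep that term; with it, the argument closes. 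The two proofs buy essentially the same thing; yours trades the paper's case analysis for one extra differentiation and is arguably the cleaner write-up.
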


\begin{proof}
Let $x=h_{tot} \abs{\Delta t}$, thus $x\geq 0$. Then 
\begin{eqnarray}
c(\Delta t) &=& \sqrt{1+x^2} + e^{x} - (1+x)\notag\\
&\leq& e^{\frac{x^2}{2}}+ e^{x} - (1+x).
\end{eqnarray}
Define the function 
\begin{eqnarray}
y(x)=e^{\frac{e}{2}x^2}+(1+x)-(e^{x}+e^{\frac{x^2}{2}}).    
\end{eqnarray}
It follows that $y(0)=0$ and 
\begin{eqnarray}
y'(x)&=&e x e^{\frac{e}{2}x^2} +1-(e^{x}+x e^{\frac{x^2}{2}})\notag\\
&\geq& (e-1)x e^{\frac{e}{2}x^2} +1-e^x\label{eq:dy}\\
&\geq& (e-1)x+1-e^x.
\end{eqnarray}
Let $z(x) = (e-1)x+1-e^x$. Since $z(0)=z(1)=0$, it is easy to see that $z(x)\geq 0$ when $0\leq x \leq 1$, which indicates that $y'(x)\geq 0$ when $0\leq x \leq 1$. Based on Eq.~(\ref{eq:dy}) and the fact that $e-1>1$, we have $y'(x)\geq 1$ when $x \geq 1$. As a result, $y'(x)\geq 0$ when $x \geq 0$, which means $y(x)\geq 0$. Therefore, 
\begin{eqnarray}
c(\Delta t) \leq e^{\frac{e}{2}x^2}-y(x) \leq e^{\frac{e}{2}x^2}.
\end{eqnarray}
\end{proof}

According to Lemma~\ref{lemma:cost_LOR}, the cost factor $e^{-iHt}$ has the upper bound 
\begin{eqnarray}
[c(t/N)]^N \leq e^{\frac{eh_{tot}^2t^2}{2N}}.
\end{eqnarray}
Therefore, the cost factor approaches one when the time step number $N$ increases. 

\subsection{Variance analysis}

In this section, first, we prove the general upper bound of the variance, then we work out the upper bound of the cost $c_k$. 

\subsubsection{Variance upper bound}

\begin{figure}[tbhp]
\centering
\includegraphics[width=\linewidth]{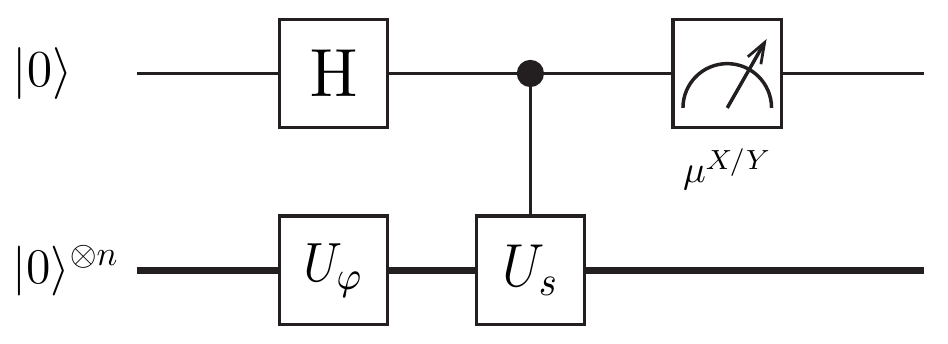}
\caption{
The Hadamard-test circuit $\mathcal{C}_s$. The qubit on the top is the ancilla qubit. The unitary operator $U_\varphi$ prepares the state $\ket{\varphi}$, i.e.~$U_\varphi\ket{0}^{\otimes n} = \ket{\varphi}$. When the ancilla qubit is measured in the $X$ (or $Y$) basis, the measurement outcome is an eigenvalue of the Pauli operator $X$ (or $Y$), i.e.~$\mu^X = \pm 1$ (or $\mu^Y = \pm 1$). 
}
\label{fig:HT}
\end{figure}

Given the LCU expression $A = \sum_s q_s U_s$, we can compute $\bra{\varphi}A\ket{\varphi}$ using the Monte Carlo method and the one-ancilla Hadamard-test circuit shown in Fig.~\ref{fig:HT}. In the Monte Carlo calculation, we sample the unitary operator $U_s$ with a probability of $\abs{q_s}/C_A$ in the spirit of importance sampling. The corresponding algorithm is given in Algorithm~\ref{alg:MC}. 

\begin{algorithm}[H]
\begin{algorithmic}[1]
\caption{Monte Carlo evaluation of an LCU expression of $A$.}
\label{alg:MC}
\State Input $\ket{\varphi}$, $\{(q_s,U_s)\}$ and $M$. 
\State $C_A \gets \sum_s \abs{q_s}$
\State $\hat{A} \gets 0$
\For{$l=1$ to $M$}
\State Choose $s$ with a probability of $\abs{q_s}/C_A$. 
\State Implement the circuit $\mathcal{C}_s$ for one shot, measure the ancilla qubit in the $X$ basis and record the measurement outcome $\mu^X$. 
\State Implement the circuit $\mathcal{C}_s$ for one shot, measure the ancilla qubit in the $Y$ basis and record the measurement outcome $\mu^Y$. 
\State $\hat{A} \gets \hat{A} + e^{i\arg(q_s)} (\mu^X+i\mu^Y)$
\EndFor
\State Output $\hat{A} \gets \frac{C_A}{M}\hat{A}$. 
\end{algorithmic}
\end{algorithm}

\begin{lemma}
According to Algorithm~\ref{alg:MC}, the estimator $\hat{A}$ is unbiased, and the variance upper bound in Eq.~(\ref{eq:variance_bound}) is true. 
\label{lem:variance}
\end{lemma}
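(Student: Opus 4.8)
\textbf{Proof proposal for Lemma~\ref{lem:variance}.}

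The plan is to verify, step by step, that Algorithm~\ref{alg:MC} produces an unbiased estimator with the claimed variance bound, treating the single-shot contribution as the basic random variable and then invoking independence across the $M$ loop iterations. First I would fix one iteration $l$ of the loop and introduce the random index $s$, drawn with probability $p_s = \abs{q_s}/C_A$. Conditioned on $s$, the Hadamard-test circuit $\mathcal{C}_s$ measured in the $X$ basis returns $\mu^X$ with $\mathbb{E}[\mu^X \suchthat s] = \Re\bra{\varphi}U_s\ket{\varphi}$, and measured in the $Y$ basis returns $\mu^Y$ with $\mathbb{E}[\mu^Y \suchthat s] = \Im\bra{\varphi}U_s\ket{\varphi}$; this is the standard Hadamard-test identity (the circuit in Fig.~\ref{fig:HT} realises exactly these expectations, with the phase of $q_s$ absorbed into $U_s$ so that $\sgn$/$\arg$ bookkeeping works out). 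Hence the per-iteration contribution $X_l := C_A\, e^{i\arg(q_s)}(\mu^X + i\mu^Y)$ satisfies
\begin{eqnarray}
\mathbb{E}[X_l \suchthat s] &=& C_A\, e^{i\arg(q_s)} \bra{\varphi}U_s\ket{\varphi}, \notag
\end{eqnarray}
and averaging over $s$ gives $\mathbb{E}[X_l] = C_A \sum_s p_s e^{i\arg(q_s)}\bra{\varphi}U_s\ket{\varphi} = \sum_s \abs{q_s} e^{i\arg(q_s)}\bra{\varphi}U_s\ket{\varphi} = \sum_s q_s \bra{\varphi}U_s\ket{\varphi} = \bra{\varphi}A\ket{\varphi}$. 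Since $\hat{A} = \frac{1}{M}\sum_{l=1}^M X_l$, unbiasedness follows immediately.

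For the variance, the key observation is that $\abs{X_l} = C_A \abs{\mu^X + i\mu^Y} = C_A\sqrt{(\mu^X)^2 + (\mu^Y)^2} = C_A\sqrt{2}$ deterministically, because $\mu^X,\mu^Y \in \{+1,-1\}$. Therefore $\mathbb{E}[\abs{X_l}^2] = 2C_A^2$, and
\begin{eqnarray}
\mathrm{Var}(X_l) &=& \mathbb{E}[\abs{X_l}^2] - \abs{\mathbb{E}[X_l]}^2 = 2C_A^2 - \abs{\bra{\varphi}A\ket{\varphi}}^2 \leq 2C_A^2. \notag
\end{eqnarray}
Here I am using the convention that for a complex random variable $Z$, $\mathrm{Var}(Z) = \mathbb{E}[\abs{Z - \mathbb{E}Z}^2] = \mathbb{E}[\abs{Z}^2] - \abs{\mathbb{E}Z}^2$; if the paper intends the sum of the variances of the real and imaginary parts, the same bound holds since that equals $\mathbb{E}[\abs{Z}^2] - \abs{\mathbb{E}Z}^2$ as well. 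Because the $M$ loop iterations use independent fresh samples of $s$ and independent fresh circuit shots, the $X_l$ are i.i.d., so $\mathrm{Var}(\hat{A}) = \frac{1}{M^2}\sum_{l=1}^M \mathrm{Var}(X_l) = \frac{1}{M}\mathrm{Var}(X_1) \leq \frac{2C_A^2}{M}$, which is Eq.~(\ref{eq:variance_bound}) with the understanding that $2M$ measurements are used in total (two shots — one $X$-basis, one $Y$-basis — per iteration). The generalisation from a finite sum $A = \sum_s q_s U_s$ to an integral representation (as in Eq.~(\ref{eq:LCU})) is routine: one replaces the discrete importance-sampling distribution $p_s$ by a probability density proportional to the modulus of the integrand, and $C_A$ by the corresponding $L^1$ norm of the coefficient function, with all expectation computations going through verbatim.

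The only mildly delicate point — and the one I would be most careful about — is the exact correlation structure between $\mu^X$ and $\mu^Y$ within a single iteration. In Algorithm~\ref{alg:MC} these come from two \emph{separate} runs of the circuit $\mathcal{C}_s$ (lines 6 and 7), so conditioned on $s$ they are independent, and in any case the bound $\abs{X_l}^2 = 2C_A^2$ holds \emph{pathwise} regardless of their joint law, so the variance estimate is insensitive to this issue; the independence is only needed to justify that $\hat A$'s real and imaginary parts are each averages of bounded $[-C_A,C_A]$-valued variables, which is what Lemma~\ref{lem:pro_improved} invokes. I would therefore state the pathwise identity $\abs{X_l} = \sqrt{2}\,C_A$ explicitly, as it is what makes both the variance bound and the boundedness (Hoeffding) hypothesis transparent, and then close the argument by i.i.d.\ additivity of variance.
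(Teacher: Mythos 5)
Your proposal is correct and follows essentially the same route as the paper: both arguments use the Hadamard-test identity to show the single-shot contribution $C_A e^{i\arg(q_s)}(\mu^X+i\mu^Y)$ has mean $\bra{\varphi}A\ket{\varphi}$ under the importance-sampling distribution, and both obtain the variance bound from the pathwise identity $\abs{\mu^X+i\mu^Y}^2=2$ together with $\mathrm{Var}=\mathbb{E}\abs{X}^2-\abs{\mathbb{E}X}^2$ and i.i.d.\ averaging over the $M$ iterations. Your additional remarks on the conditional independence of $\mu^X$ and $\mu^Y$ and on the extension to integral-form LCU expressions match what the paper states implicitly or in its closing remark.
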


\begin{proof}
First, we rewrite the LCU expression as 
\begin{eqnarray}
A = C_A \sum_s \frac{\abs{q_s}}{C_A} e^{i\arg(q_s)} U_s.
\end{eqnarray}
Then, 
\begin{eqnarray}
\bra{\varphi}A\ket{\varphi} = C_A \sum_s \frac{\abs{q_s}}{C_A} e^{i\arg(q_s)} \bra{\varphi}U_s\ket{\varphi}.
\label{eq:A_LCU}
\end{eqnarray}

Each unitary operator $U_s$ has a corresponding Hadamard-test circuit denoted by $\mathcal{C}_s$, which is shown in Fig.~\ref{fig:HT}. When the ancilla qubit is measured in the $W = X,Y$ basis, the measurement has a random outcome $\mu^W = \pm 1$. Let $P^W_{\mu^W}$ be the probability of the measurement outcome $\mu^W$ in $\mathcal{C}_s$. According to Ref.~\cite{ekert2002}, we have 
\begin{eqnarray}
\bra{\varphi}U_s\ket{\varphi} &=& \sum_{\mu^X=\pm 1}P^X_{\mu^X}\mu^X \notag\\
&& + i\sum_{\mu^Y=\pm 1}P^Y_{\mu^Y}\mu^Y.
\label{eq:A_mean}
\end{eqnarray}

The probability of $(s,\mu^X,\mu^Y)$ is $(\abs{q_s}/C_A) P^X_{\mu^X} P^Y_{\mu^Y}$. Using Eqs.~(\ref{eq:A_LCU}) and (\ref{eq:A_mean}), we have 
\begin{eqnarray}\label{eq:mean}
\bra{\varphi}A\ket{\varphi} &=& C_A \sum_{s,\mu^X,\mu^Y} \frac{\abs{q_s}}{C_A} P^X_{\mu^X} P^Y_{\mu^Y} e^{i\arg(q_s)} \notag\\
&&\times(\mu^X+i\mu^Y).
\end{eqnarray}
The corresponding Monte Carlo algorithm is given in Algorithm~\ref{alg:MC}. 

The estimator $\hat{A}$ is unbiased. According to Eq.~(\ref{eq:mean}), the expected value of $C_A e^{i\arg(q_s)} (\mu^X+i\mu^Y)$ is $\bra{\varphi}A\ket{\varphi}$. Therefore, the expected value of $\hat{A}$ is also $\bra{\varphi}A\ket{\varphi}$. Notice that $\hat{A}$ is the average of $C_A e^{i\arg(q_s)} (\mu^X+i\mu^Y)$ over $M$ samples. 

The variance of $\hat{A}$ is 
\begin{eqnarray}
&&\mathrm{Var}(\hat{A}) = \frac{1}{M} \left[ \sum_{s,\mu^X,\mu^Y} \frac{\abs{q_s}}{C_A} P^X_{\mu^X} P^Y_{\mu^Y} \right. \notag \\
&&\left. \times \absLR{C_A e^{i\arg(q_s)} (\mu^X+i\mu^Y)}^2 - \absLR{\bra{\varphi}A\ket{\varphi}}^2 \right] \notag \\
&& \leq \frac{2C_A^2}{M}.
\end{eqnarray}
\end{proof}
Here, we remark that the summation-form LCU expressions can be generalised to the integral-form LCU expressions, which inherit the same properties of the former.

\subsubsection{Cost upper bound}
We use a two-level LCU to construct $f_k$: First, $f_k$ is an integral of RTE operators $e^{-iHt}$; Second, RTE is realised using the leading-order-rotation method. Substituting Eq.~(\ref{eq:LOR2}) into Eq.~(\ref{eq:LCU}), we can obtain the eventual LCU expression of $f_k$, i.e.
\begin{align}
&f_k = \frac{i^{k-1}}{2^{\frac{k-1}{2}}\tau^{k-1}}
\int_{-\infty}^{+\infty}dt H_{k-1}\left(\frac{t}{\sqrt{2}\tau}\right)g_\tau(t) e^{i E_0 t} \notag\\
&\times \left[\sum_j \beta_j(t/N) e^{-i\sgn(h_j)\phi(t/N)\sigma_j} + T_0(t/N)\right]^N.
\end{align}
Substituting LCU expressions of $f_k$ and $f_q$ as well as the expression of $H$ into $A = f_k^\dag Hf_q,f_k^\dag f_q$, we can obtain the LCU expression of $A$. Then, $C_A = h_{tot}c_kc_q,c_kc_q$, respectively, where 
\begin{eqnarray}
c_k &=& \frac{1}{2^{\frac{k-1}{2}}\tau^{k-1}}
\int_{-\infty}^{+\infty}dt \absLR{H_{k-1}\left(\frac{t}{\sqrt{2}\tau}\right)}g_\tau(t) \notag\\
&&\times[c(t/N)]^N.
\label{eq:ck2}
\end{eqnarray}
Here, we have replaced $\left[\sum_r \absLR{v_r\left(\frac{t}{N}\right)}\right]^N$ with $[c(t/N)]^N$ in Eq.~(\ref{eq:ck}). 

To work out the cost of $A$, we have used that the cost factor is additive and multiplicative. Suppose LCU expressions of $A_1$ and $A_2$ are $A_1 = q_1U_1+q_1'U_1'$ and $A_2 = q_2U_2+q_2'U_2'$, respectively. The cost factors of $A_1$ and $A_2$ are $C_1 = \abs{q_1}+\abs{q_1'}$ and $C_2 = \abs{q_2}+\abs{q_2'}$, respectively. Substituting LCU expressions of $A_1$ and $A_2$ into $A = A_1+A_2$, the expression of $A$ reads $A = q_1U_1+q_1'U_1'+q_2U_2+q_2'U_2'$. Then, the cost factor of $A$ is $C_A = \abs{q_1}+\abs{q_1'}+\abs{q_2}+\abs{q_2'} = C_1+C_2$. Substituting LCU expressions of $A_1$ and $A_2$ into $A' = A_1A_2$, the expression of $A'$ reads $A' = (q_1U_1+q_1'U_1')(q_2U_2+q_2'U_2') = q_1q_2U_1U_2+q_1q_2'U_1U_2'+q_1'q_2U_1'U_2+q_1'q_2'U_1'U_2'$. Then, the cost factor of $A'$ is $C_A' = \abs{q_1q_2}+\abs{q_1q_2'}+\abs{q_1'q_2}+\abs{q_1'q_2'} = C_1C_2$. 


The upper bound of $c_k$ is 
\begin{eqnarray}
c_k \leq c_k^{ub} &=& \frac{1}{2^{\frac{k-1}{2}}\tau^{k-1}}
\int_{-\infty}^{+\infty}dt \absLR{H_{k-1}\left(\frac{t}{\sqrt{2}\tau}\right)}g_\tau(t) \notag\\
&&\times e^{\chi\frac{t^2}{\tau^2}},
\end{eqnarray}
where $\chi=\frac{eh_{tot}^2\tau^2}{2N}$. Here, we have used Lemma~\ref{lemma:cost_LOR}. Notice that $c_k^{ub}$ increases monotonically with $\chi$ (i.e. decreases monotonically with $N$). Taking $\chi = \frac{1}{8}$ (i.e.~$N = 4eh_{tot}^2\tau^2$), we numerically evaluate the upper bound $c_k^{ub}$ and plot it in Fig.~\ref{fig:cost_upper_bound}. We can find that 
\begin{eqnarray}
c_k^{ub} \leq 2 \left(\frac{k-1}{e\tau^2}\right)^{\frac{k-1}{2}}.
\end{eqnarray}

\begin{figure}[tbhp]
\centering
\includegraphics[width=\linewidth]{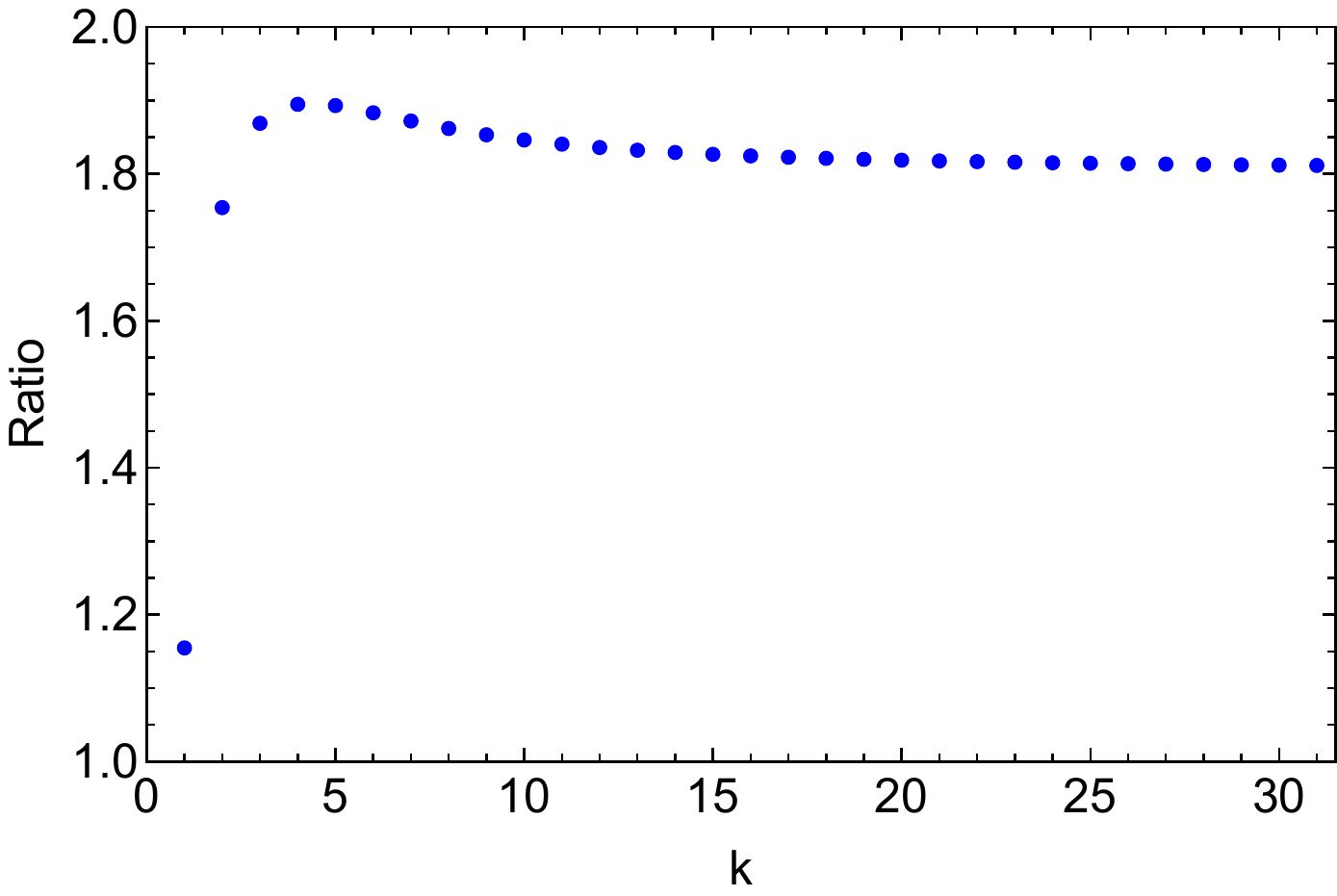}
\caption{The ratio between cost upper bound and norm upper bound, $c_k^{ub}/[(k-1)/(e\tau^2)]^{(k-1)/2}$.}
\label{fig:cost_upper_bound}
\end{figure}


\subsection{Pseudocode}

Given the LCU expression of $A$, we can compute $\bra{\varphi}A\ket{\varphi}$ according to Algorithm~\ref{alg:MC}. We would like to repeat how we compose the LCU expression of $A$: Substituting Eq.~(\ref{eq:LOR2}) into Eq.~(\ref{eq:LCU}), we can obtain the eventual LCU expression of $f_k$; Substituting LCU expressions of $f_k$ and $f_q$ as well as the expression of $H$ into $A = f_k^\dag Hf_q$ and $A = f_k^\dag f_q$ (corresponding to $\bfH_{k,q}$ and $\bfS_{k,q}$, respectively), we can obtain the LCU expression of $A$. Algorithm~\ref{alg:MC} does not involve details of the LCU expression. In this section, we give pseudocodes involving details. 

We remark that matrix entries of the rescaled basis $f'_k = f_k/c_k$ are $\hat{\bfH}'_{k,q} = \hat{\bfH}_{k,q}/(c_kc_q)$ and $\hat{\bfS}'_{k,q} = \hat{\bfS}_{k,q}/(c_kc_q)$, where $\hat{\bfH}_{k,q}$ and $\hat{\bfS}_{k,q}$ are computed according to the pseudocodes. 

In pseudocodes, we use notations $\mathbf h = (h_1,h_2,\ldots)$ and $\boldsymbol \sigma = (\sigma_1,\sigma_2,\ldots)$ to represent the Hamiltonian $H = \sum_j h_j\sigma_j$. We define probability density functions 
\begin{eqnarray}
p_{\tau,k}(t) &=& \frac{1}{c_k 2^{\frac{k-1}{2}}\tau^{k-1}}
\absLR{H_{k-1}\left(\frac{t}{\sqrt{2}\tau}\right)}g_\tau(t) \notag\\
&&\times [c(t/N)]^N,
\label{eq:PDF}
\end{eqnarray}
and probabilities 
\begin{eqnarray}
P_\mathrm{L}(\Delta t) &=& \frac{\sqrt{1+h_{tot}^2\Delta t^2}}{c(\Delta t)}, \\
P_\mathrm{T}(\Delta t) &=& \frac{e^{h_{tot}\abs{\Delta t}} - (1+h_{tot}\abs{\Delta t})}{c(\Delta t)}.
\end{eqnarray}
We use $\text{P{\scriptsize OISSON}}(\lambda)$ to denote a subroutine that returns $k$ with a probability of $\lambda^ke^{-\lambda}/k!$. The LCU expression of $f_k$ is sampled according to Algorithms~\ref{alg:basis}, \ref{alg:RTE} and \ref{alg:LOR}. In Algorithm~\ref{alg:basis}, we generate random time $t$ according to the distribution $p_{\tau,k}(t)$. In Algorithm~\ref{alg:RTE}, we realise the RTE $e^{-iHt}$ by $N$ steps leading-order-rotation. In Algorithm~\ref{alg:LOR}, we randomly sample rotations or Pauli operators according to Eq.~(\ref{eq:LOR2}).

\begin{algorithm}[H]
\begin{algorithmic}[1]
\caption{Measurement of matrix entry $\bfH_{k,q}$.}
\label{alg:Hkq}
\State Input $\ket{\varphi}$, $(\mathbf h,\boldsymbol \sigma)$, $E_0$, $\tau$, $N$ and $M$. 
\State $h_{tot} \gets \sum_j \abs{h_j}$
\State Compute $c_k$ and $c_q$. 
\State $\hat{\bfH}_{k,q} \gets 0$
\For{$l=1$ to $M$}
\State Choose $j$ with a probability of $\abs{h_j}/h_{tot}$. 
\State $(\bar{v}_k,\bar{V}_k)$ $\gets$ \Call{BasisGen}{$\mathbf h$,$\boldsymbol \sigma$,$E_0$,$\tau$,$N$,$k$}
\State $(\bar{v}_q,\bar{V}_q)$ $\gets$ \Call{BasisGen}{$\mathbf h$,$\boldsymbol \sigma$,$E_0$,$\tau$,$N$,$q$}
\State Implement the circuit $\mathcal{C}_s$ with $U_s = \bar{V}_k^\dag \sigma_j \bar{V}_q$ for one shot, measure the ancilla qubit in the $X$ basis and record the measurement outcome $\mu^X$. 
\State Implement the circuit $\mathcal{C}_s$ with $U_s = \bar{V}_k^\dag \sigma_j \bar{V}_q$ for one shot, measure the ancilla qubit in the $Y$ basis and record the measurement outcome $\mu^Y$. 
\State $\hat{\bfH}_{k,q} \gets \hat{\bfH}_{k,q} + e^{i\arg(\bar{v}_k^* h_j \bar{v}_q)} (\mu^X+i\mu^Y)$
\EndFor
\State Output $\hat{\bfH}_{k,q} \gets \frac{h_{tot}c_kc_q}{M}\hat{\bfH}_{k,q}$. 
\end{algorithmic}
\end{algorithm}

\begin{algorithm}[H]
\begin{algorithmic}[1]
\caption{Measurement of matrix entry $\bfS_{k,q}$.}
\label{alg:Skq}
\State Input $\ket{\varphi}$, $(\mathbf h,\boldsymbol \sigma)$, $E_0$, $\tau$, $N$ and $M$. 
\State Compute $c_k$ and $c_q$. 
\State $\hat{\bfS}_{k,q} \gets 0$
\For{$l=1$ to $M$}
\State $(\bar{v}_k,\bar{V}_k)$ $\gets$ \Call{BasisGen}{$\mathbf h$,$\boldsymbol \sigma$,$E_0$,$\tau$,$N$,$k$}
\State $(\bar{v}_q,\bar{V}_q)$ $\gets$ \Call{BasisGen}{$\mathbf h$,$\boldsymbol \sigma$,$E_0$,$\tau$,$N$,$q$}
\State Implement the circuit $\mathcal{C}_s$ with $U_s = \bar{V}_k^\dag \bar{V}_q$ for one shot, measure the ancilla qubit in the $X$ basis and record the measurement outcome $\mu^X$. 
\State Implement the circuit $\mathcal{C}_s$ with $U_s = \bar{V}_k^\dag \bar{V}_q$ for one shot, measure the ancilla qubit in the $Y$ basis and record the measurement outcome $\mu^Y$. 
\State $\hat{\bfS}_{k,q} \gets \hat{\bfS}_{k,q} + e^{i\arg(\bar{v}_k^* \bar{v}_q)} (\mu^X+i\mu^Y)$
\EndFor
\State Output $\hat{\bfS}_{k,q} \gets \frac{c_kc_q}{M}\hat{\bfS}_{k,q}$. 
\end{algorithmic}
\end{algorithm}

\begin{algorithm}[H]
\begin{algorithmic}[1]
\caption{Basis generator $f_k$.}
\label{alg:basis}
\Function{BasisGen}{$\mathbf h$,$\boldsymbol \sigma$,$E_0$,$\tau$,$N$,$k$}
\State Generate random time $t$ according to the distribution $p_{\tau,k}(t)$. 
\State $(\bar{v},\bar{V})$ $\gets$ \Call{RTE}{$\mathbf h$,$\boldsymbol \sigma$,$N$,$t$}
\State $\bar{v} \gets \bar{v} \times \frac{i^{k-1}}{2^{\frac{k-1}{2}}\tau^{k-1}} H_{k-1}\left(\frac{t}{\sqrt{2}\tau}\right)g_\tau(t)e^{iE_0t}$
\State Output $(\bar{v},\bar{V})$. 
\EndFunction
\end{algorithmic}
\end{algorithm}

\begin{algorithm}[H]
\begin{algorithmic}[1]
\caption{Real-time evolution.}
\label{alg:RTE}
\Function{RTE}{$\mathbf h$,$\boldsymbol \sigma$,$N$,$t$}
\State $\bar{v} \gets 1$
\State $\bar{V} \gets \openone$
\For{$i=1$ to $N$}
\State $(v,V)$ $\gets$ \Call{LORF}{$\mathbf h$,$\boldsymbol \sigma$,$t/N$}
\State $\bar{v} \gets \bar{v}\times v$
\State $\bar{V} \gets \bar{V}\times V$
\EndFor
\State Output $(\bar{v},\bar{V})$. 
\EndFunction
\end{algorithmic}
\end{algorithm}

\begin{algorithm}[H]
\begin{algorithmic}[1]
\caption{Leading-order-rotation formula.}
\label{alg:LOR}
\Function{LORF}{$\mathbf h$,$\boldsymbol \sigma$,$\Delta t$}
\State Choose $\mathrm{O}$ from $\mathrm{L}$ and $\mathrm{T}$ with probabilities of $P_\mathrm{L}(\Delta t)$ and $P_\mathrm{T}(\Delta t)$, respectively. 
\If{$\mathrm{O} = \mathrm{L}$}
\State Choose $j$ with a probability of $\abs{h_j}/h_{tot}$. 
\State $v \gets \beta_j(\Delta t)$
\State $V \gets e^{-i\sgn(h_j)\phi(\Delta t)\sigma_j}$
\ElsIf{$\mathrm{O} = \mathrm{T}$}
\While{$k<2$}
\State $k \gets \Call{Poisson}{h_{tot}\abs{\Delta t}}$
\EndWhile
\State $v \gets 1/k!$
\State $V \gets \openone$
\For{$a=1$ to $k$}
\State Choose $j$ with a probability of $\abs{h_j}/h_{tot}$. 
\State $v \gets v\times (-ih_j\Delta t)$
\State $V \gets V\times \sigma_j$
\EndFor
\EndIf
\State Output $(v,V)$. 
\EndFunction
\end{algorithmic}
\end{algorithm}

\section{Details in numerical calculation}
\label{app:details}

In this section, first, we describe models and reference states taken in the benchmarking, and then we give details about instances and parameters. 

\subsection{Models and reference states}

Two models are used in the benchmarking: the anti-ferromagnetic Heisenberg model 
\begin{eqnarray}
H = J\sum_{\langle i,j\rangle}(\sigma^X_i\sigma^X_j+\sigma^Y_i\sigma^Y_j+\sigma^Z_i\sigma^Z_j),
\end{eqnarray}
where $\sigma^X_i$, $\sigma^Y_i$ and $\sigma^Z_i$ are Pauli operators of the $i$th qubit; and the Hubbard model 
\begin{eqnarray}
H &=& -J\sum_{\langle i,j\rangle}\sum_{\sigma=\uparrow,\downarrow}(a_{i,\sigma}^\dag a_{j,\sigma}+a_{j,\sigma}^\dag a_{i,\sigma}) \notag \\
&+& U\sum_i \left(a_{i,\uparrow}^\dag a_{i,\uparrow}-\frac{\openone}{2}\right)\left(a_{i,\downarrow}^\dag a_{i,\downarrow}-\frac{\openone}{2}\right),
\end{eqnarray}
where $a_{i,\sigma}$ is the annihilation operator of the $i$th orbital and spin $\sigma$. For the Heisenberg model, we take the spin number as ten, and for the Hubbard model, we take the orbital number as five. Then both models can be encoded into ten qubits. For the Hubbard model, we take $U = J$. Without loss of generality, we normalise the Hamiltonian for simplicity: We take $J$ such that $\norm{H}_2 = 1$, i.e.~eigenvalues of $H$ are in the interval $[-1,1]$. 

For each model, we consider three types of lattices: chain, ladder and randomly generated graphs, see Fig.~\ref{fig:lattices}. We generate a random graph in the following way: i) For a vertex $v$, randomly choose a vertex $u\neq v$ and add the edge $(v,u)$; ii) Repeat step-i two times for the vertex $v$; iii) Implement steps i and ii for each vertex $v$. On a graph generated in this way, each vertex is connected to four vertices on average. Notice that some pairs of vertices are connected by multiple edges, and then the interaction between vertices in the pair is amplified by the number of edges. 

\begin{figure}[tbhp]
\centering
\includegraphics[width=\linewidth]{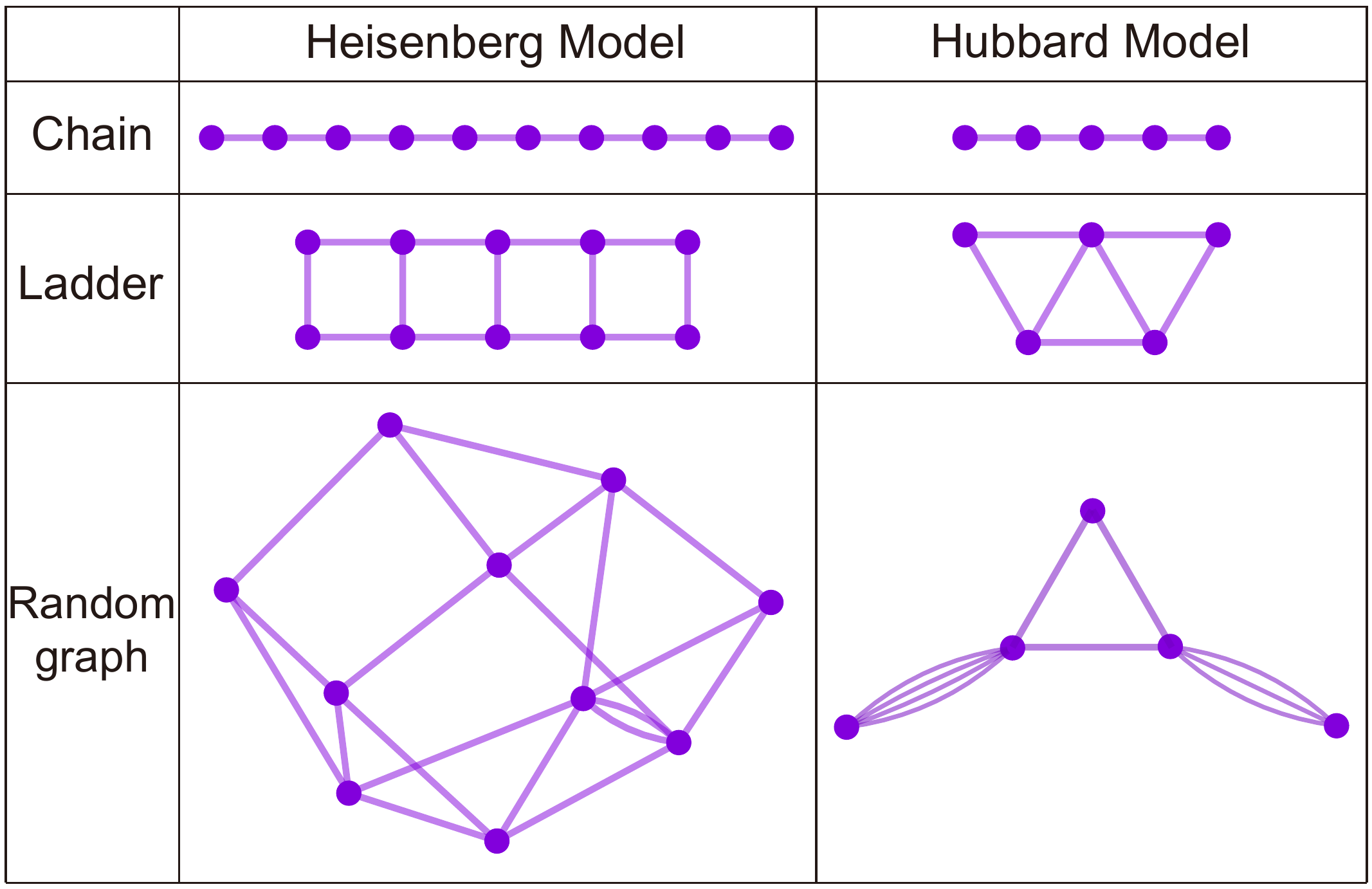}
\caption{Lattices of the Heisenberg model and Hubbard model. One of the randomly generated graphs for each model is illustrated in the figure as an example.}
\label{fig:lattices}
\end{figure}

It is non-trivial to choose and prepare a good reference state that has a sufficiently large overlap with the true ground state. Finding the ground-state energy of a local Hamiltonian is QMA-hard~\cite{kitaev2002,kempe2006}. If one can prepare a good reference state, then quantum computing is capable of solving the ground-state energy using quantum phase estimation~\cite{gharibian2022}. So far, there does not exist a universal method for reference state preparation; see Ref.~\cite{lee2022} for relevant discussions. Despite this, there are some practical ways of choosing and preparing the reference state. For fermion systems, we can take the mean-field approximate solution, i.e.~a Hartree-Fock state, as the reference state. For general systems, one may try adiabatic state preparation, etc. We stress that preparing good reference states is beyond the scope of this work. Here, we choose two reference states which are likely to overlap with true ground states as examples. For the Heisenberg model, we take the pairwise singlet state 
\begin{eqnarray}
\ket{\varphi} &=& \ket{\Phi}_{1,2}\otimes \ket{\Phi}_{3,4}\otimes \cdots,
\label{eq:varphi}
\end{eqnarray}
where 
\begin{eqnarray}
\ket{\Phi}_{i,j} = \frac{1}{\sqrt{2}}\left(\ket{0}_i\otimes\ket{1}_j-\ket{1}_i\otimes\ket{0}_j\right)
\end{eqnarray}
is the singlet state of spins $i$ and $j$. For the Hubbard model, we take a Hartree-Fock state as the reference state: We compute the ground state of the one-particle Hamiltonian (which is equivalent to taking $U=0$) and take the ground state of the one-particle Hamiltonian (a Slater determinant) as the reference state. 

\subsection{Instances}

We test the algorithms listed in Table~\ref{table} with many instances. Each instance is a triple $(model,lattice,d)$, in which $model$ takes one of the two models (Heisenberg model and Hubbard model), $lattice$ takes a chain, ladder or random graph, and $d$ is the dimension of the subspace. 

Instances for computing empirical distributions in Fig.~\ref{fig:distribution} consist of six groups: the Heisenberg model on the chain, ladder and random graphs, and the Hubbard model on the chain, ladder and random graphs. For chain and ladder, we evaluate each subspace dimension $d = 2,3,\ldots,30$; For a random graph, we only evaluate one subspace dimension randomly taken in the interval. Some instances are discarded. To avoid any potential issue caused by the precision of floating-point numbers, we discard instances that the subspace error $\epsilon_K$ of the P algorithm is lower than $10^{-9}$ due to a large $d$. We also discard instances that $\epsilon_K$ of the P algorithm is higher than $10^{-2}$ due to a small $d$, because the dimension may be too small to achieve an interesting accuracy in this case. For randomly generated graphs, the two reference states may have a small overlap with the true ground state. Because a good reference state is necessary for all quantum KSD algorithms, we discard graphs with $p_g<10^{-3}$. Eventually, we have eight instances of the Heisenberg chain, seven instances of the Heisenberg Ladder, twenty-two instances of the Hubbard chain and twenty instances of the Hubbard ladder. For each model, we generate a hundred random-graph instances. The total number of instances is $233$. 

\subsection{Parameters}

\begin{table*}[tbhp]
\centering
\begin{tabular}{c|c|c|c|c|c|c} 
\hline\hline
Abbr. & $E_0$ & $\tau$ & $\Delta t$ & $\Delta E$ & $C_{\bfH}$ & $C_{\bfS}$ \\ 
\hline
P & $E_g+1$ & N/A & N/A & N/A & $1$ & $1$ \\
CP & $0$ & N/A & N/A & N/A & $1$ & $1$ \\
GP & $[E_g-0.1,E_g+0.1]$ & Solving Eq.~(\ref{eq:tau1}) & N/A & N/A & $h_{tot}\geq 1$ & $1$ \\
IP & $E_g-1$ & N/A & N/A & N/A & $1$ & $1$ \\
ITE & $E_g$ & Solving Eq.~(\ref{eq:tau2}) & N/A & N/A & $1$ & $1$ \\
RTE & $E_g$ & N/A & Minimising $\epsilon_K$ & N/A & $1$ & $1$ \\
F & $E_g$ & Solving Eq.~(\ref{eq:tau1}) & $2\tau/(L-1)$ & Minimising $\epsilon_K$ & $1$ & $1$ \\
\hline\hline
\end{tabular}
\caption{Parameters taken in the numerical calculation.}
\label{parameters}
\end{table*}

In this section, we give parameters taken in each algorithm listed in Table~\ref{table}, namely, $E_0$, $\tau$, $\Delta t$, $\Delta E$, $C_{\bfH}$ and $C_{\bfS}$. We summarise these parameters in Table~\ref{parameters}. 

For $E_0$, we expect that taking $E_0$ close to $E_g$ is optimal for our GP algorithm. As the exact value of $E_g$ is unknown, we assume that we have a preliminary estimation of the ground-state energy with uncertainty as large as $10\%$ of the entire spectrum, i.e.~we take $E_0 \in [E_g-0.1,E_g+0.1]$. For other algorithms, we take $E_0$ by assuming that the exact value of $E_g$ is known. In the P algorithm, we take $E_0 = E_g+1$, such that the ground state is the eigenstate of $H-E_0$ with the largest absolute eigenvalue and $\norm{H-E_0}_2 = 1$. Similarly, in the IP algorithm, we take $E_0 = E_g-1$, such that the ground state is the eigenstate of $(H-E_0)^{-1}$ with the largest absolute eigenvalue and $\norm{(H-E_0)^{-1}}_2 = 1$. In the ITE algorithm, $E_0$ causes a constant factor $e^{\tau(k-1)E_0}$ in each operator $f_k$, i.e.~determines the spectral norm of $f_k$. Because the variance is related to the norm, a large $E_0$ is problematic. Therefore, in the ITE algorithm, we take $E_0 = E_g$, such that $\norm{f_k}_2 = 1$ for all $k$. In the F algorithm, we expect that $E_0 = E_g$ is optimal because $f_1$ is an energy filter centred at the ground-state energy in this case. Therefore, we take $E_0 = E_g$ in the F algorithm. The RTE algorithm is closely related to the F algorithm. Therefore, we also take $E_0 = E_g$ in the RTE algorithm. 

We remark that in the GP algorithm, we take random $E_0$ uniformly distributed in the interval $[E_g-0.1,E_g+0.1]$ to generate data in Fig.~\ref{fig:distribution}, and we take $E_0 = -E_g+i\delta E$, where $i = -50,-9,\ldots,-1,0,1,\ldots,9,50$ and $\delta E = 0.002$, to generate data in Fig.~\ref{fig:error_vs_overhead}. We remark that we have normalised the Hamiltonian such that $\norm{H}_2 = 1$. 

For $\Delta t$ in the F algorithm, we take $\Delta t = 2\tau/(L-1)$. For simplicity, we take the limit $L\rightarrow +\infty$, i.e. 
\begin{eqnarray}
f_k &=& \frac{1}{2\tau}\int_{-\tau}^{\tau} dt e^{-i\left[H-E_0-\Delta E\left(k-1\right)\right]t} \notag \\
&=& \frac{\sin \left[H-E_0-\Delta E\left(k-1\right)\right]\tau}{\left[H-E_0-\Delta E\left(k-1\right)\right]\tau}.
\end{eqnarray}
Notice that this $f_k$ is an energy filter centred at $E_0+\Delta E\left(k-1\right)$, and the filter is narrower when $\tau$ is larger. 

Now, we have three algorithms that have the parameter $\tau$: GP, ITE and F. Similar to the F algorithm, $f_1$ in the GP algorithm is an energy filter centred at $E_0$. For these two algorithms, if $E_0 = E_g$, $f_1\ket{\varphi}$ converges to the true ground state in the limit $\tau\rightarrow +\infty$. It is also similar in the ITE algorithm, in which $f_d$ is a projector onto the ground state, and $f_d\ket{\varphi}$ converges to the true ground state in the limit $\tau\rightarrow +\infty$. Realising a large $\tau$ is at a certain cost, specifically, the circuit depth increases with $\tau$~\cite{berry2007,motta2019}. Therefore, it is reasonable to take a finite $\tau$. Next, we give the protocol for determining the value of $\tau$ in each algorithm. 

Without getting into details about realising the filters and projectors, we choose $\tau$ under the assumption that if filters and projectors have the same power in computing the ground-state energy, they probably require similar circuit depths. In GP and F algorithms, if $E_0 = E_g$, the energy error achieved by filters reads $\bfH_{1,1}/\bfS_{1,1}-E_g$; and in the ITE algorithm, the energy error achieved by the projector reads $\bfH_{d,d}/\bfS_{d,d}-E_g$. We take $\tau$ such that errors achieved by filters and projectors take the same value $\epsilon_B$. Specifically, for the GP and F algorithms, we determine the value of $\tau$ by solving the equation (taking $E_0 = E_g$) 
\begin{eqnarray}
\frac{\bfH_{1,1}(\tau)}{\bfS_{1,1}(\tau)} - E_g = \epsilon_B;
\label{eq:tau1}
\end{eqnarray}
and for the ITE algorithm, we determine the value of $\tau$ by solving the equation 
\begin{eqnarray}
\frac{\bfH_{d,d}(\tau)}{\bfS_{d,d}(\tau)} - E_g = \epsilon_B.
\label{eq:tau2}
\end{eqnarray}
To choose the value of $\epsilon_B$, we take the P algorithm as the standard, in which $f_d$ is a projector onto the ground state, i.e.~$f_d\ket{\varphi}$ converges to the true ground state in the limit $d\rightarrow +\infty$. Overall, we determine the value of $\tau$ in the following way: Given an instance $(model,lattice,d)$, i) first, compute the energy error achieved by the projector in the P algorithm, take $\epsilon_B = \bfH_{d,d}/\bfS_{d,d}-E_g$; then, ii) solve equations of $\tau$. In this way, filters and projectors in P, GP, ITE and F algorithms have the same power. 

For $\Delta t$ in the RTE algorithm and $\Delta E$ in the F algorithm, there are works on how to choose them in the literature~\cite{klymko2022,shen2022,cortes2022}. In this work, we simply determine their values by a grid search. For the RTE algorithm, we take $\Delta t = i\delta t$, where $i = 1,2,\ldots,100$ and $\delta t = \frac{2\pi}{100}$; we compute the subspace error $\epsilon_K$ for all $\Delta t$; and we choose $\Delta t$ of the minimum $\epsilon_K$. For the F algorithm, we take $\Delta E = i\delta E$, where $i = 1,2,\ldots,100$ and $\delta E = \frac{2}{100d}$ (In this way, when we take the largest $\Delta E$, filters span the entire spectrum); we compute the subspace error $\epsilon_K$ for all $\Delta E$; and we choose $\Delta E$ of the minimum $\epsilon_K$. 

For $C_{\bfH}$ and $C_{\bfS}$, we take $C_{\bfH} = h_{tot}$ and $C_{\bfS} = 1$ in our GP algorithm following the analysis in Appendix~\ref{app:algorithm}. For other algorithms, we take these two parameters in the following way. For P, IP, ITE and RTE algorithms, we take $C_{\bfH}$ and $C_{\bfS}$ according to spectral norms (the lower bound of cost) without getting into details about measuring matrix entries. In these four algorithms, $\norm{f_k^\dag Hf_q}_2 = \norm{f_k^\dag f_q}_2 = 1$ for all $k$ and $q$, therefore, we take $C_{\bfH} = C_{\bfS} = 1$. In the CP algorithm, we take $C_{\bfH}$ and $C_{\bfS}$ in a similar way. Because $\norm{H/h_{tot}}_2 \leq 1$, the spectrum of $H/h_{tot}$ is in the interval $[-1,1]$. In this interval, the Chebyshev polynomial of the first kind takes values in the interval $[-1,1]$. Therefore, $\norm{T_k(H/h_{tot})}_2 \leq 1$, and consequently $\norm{f_k^\dag Hf_q}_2, \norm{f_k^\dag f_q}_2 \leq 1$. These norms depend on the spectrum of $H$. For simplicity, we take the upper bound of norms, i.e.~$C_{\bfH} = C_{\bfS} = 1$, in the CP algorithm. In the F algorithm, each $f_k$ is a linear combination of $e^{-iHt}$ operators, i.e.~$f_k$ is expressed in the LCU form, and the cost factor of the LCU expression is one. Therefore, we take $C_{\bfH} = C_{\bfS} = 1$ in the F algorithm, assuming that $H$ is expressed in the LCU form with a cost factor of one (Notice that one is the lower bound). 

\section{Derivation of Eq.~(\ref{eq:gamma_S})}
\label{app:gamma}
Define the function $f(\eta)=E'(\eta,\bfa)=\frac{\bfa^\dag \bfH \bfa+2C_{\bfH}\eta\bfa^\dag\bfa}{\bfa^\dag \bfS \bfa+2C_{\bfS}\eta\bfa^\dag\bfa}$. Consider a small $\eta$ and the Taylor expansion of $f(\eta)$. We have $f(0)=\frac{\bfa^\dag \bfH \bfa}{\bfa^\dag \bfS \bfa}$ and 
\begin{eqnarray}
	f'(\eta)&=&\frac{2\bfa^\dag\bfa(C_{\bfH}\bfa^\dag \bfS \bfa-C_{\bfS} \bfa^\dag \bfH \bfa)}{(\bfa^\dag \bfS \bfa+2C_{\bfS}\eta\bfa^\dag\bfa)^2},\notag\\
	f'(0)&=&\frac{2\bfa^\dag\bfa(C_{\bfH}\bfa^\dag \bfS \bfa-C_{\bfS} \bfa^\dag \bfH \bfa)}{(\bfa^\dag \bfS \bfa)^2}\notag\\
	&=&\frac{\bfa^\dag\bfa}{\bfa^\dag \bfS \bfa}2\left(C_{\bfH}-C_{\bfS} \frac{\bfa^\dag \bfH \bfa}{\bfa^\dag \bfS \bfa}\right).
\end{eqnarray}
The minimum value of the zeroth-order term $\frac{\bfa^\dag \bfH \bfa}{\bfa^\dag \bfS \bfa}$ is $E_{min}$ cccording to the Rayleigh quotient theorem~\cite{horn2012}. Take
this minimum value, we have $E' \simeq E_{min} + s\eta$, where $s= f'(0) \propto\frac{\bfa^\dag \bfa}{\bfa^\dag \bfS \bfa}$ is the first derivative. The solution to $E' = E_g+\epsilon$ is $\eta \simeq (E_g+\epsilon-E_{min})/s$, then
\begin{eqnarray}
	\gamma &\simeq& \frac{\epsilon^2}{16\norm{H}_2^2}\frac{4\left(C_{\bfH}-C_{\bfS} \frac{\bfa^\dag \bfH \bfa}{\bfa^\dag \bfS \bfa}\right)^2}{(E_g+\epsilon-E_{min})^2}\left(\frac{p_g\bfa^\dag \bfa}{\bfa^\dag \bfS \bfa}\right)^2\notag\\
 &=&u^2\left(\frac{p_g\bfa^\dag \bfa}{\bfa^\dag \bfS \bfa}\right)^2,
\end{eqnarray}
where
\begin{eqnarray}
	u&=&\frac{\epsilon \absLR{C_{\bfH}-C_{\bfS}\frac{\bfa^\dag \bfH \bfa}{\bfa^\dag \bfS \bfa}}}{2\norm{H}_2\absLR{E_g+\epsilon-E_{min}}}\notag\\
	&\leq& \frac{(C_{\bfH}+\norm{H}_2C_{\bfS})\epsilon}{2\norm{H}_2(\epsilon-\epsilon_K)}\approx C_{\bfS}
\end{eqnarray}
under assumptions $C_{\bfH}\approx\norm{H}_2C_{\bfS}$ and $\epsilon-\epsilon_K\approx\epsilon$.

\section{Composing Chebyshev polynomials}
\label{app:CPP}

\subsection{Chebyshev-polynomial projector}

In this section, we explain how to use Chebyshev polynomials as projectors onto the ground state. 

The $n$th Chebyshev polynomial of the first kind is 
\begin{align}
T_n(z) = \left\{\begin{array}{cc}
\cos(n\arccos z), & \text{if }\abs{z}\leq 1, \\
\sgn(z)^n\cosh(n\,\mathrm{arccosh}\abs{z}), & \text{if }\abs{z}>1.
\end{array}
\right.
\end{align}
Chebyshev polynomials have the following properties: i) When $\abs{z}\leq 1$, $\abs{T_n(z)}\leq 1$; and ii) when $\abs{z}>1$, $\abs{T_n(z)}>(\abs{z}^n+\abs{z}^{-n})/2$ increases exponentially with $n$. 


Let's consider the spectral decomposition of the Hamiltonian, $H = \sum_{m=1}^{d_{\mathcal{H}}} E_m\ketbra{\psi_m}{\psi_m}$. Here, $d_{\mathcal{H}}$ is the dimension of the Hilbert space, $E_m$ are eigenenergies of the Hamiltonian, and $\ket{\psi_m}$ are eigenstates. We suppose that eigenenergies are sorted in ascending order, i.e.~$E_i\leq E_j$ if $i<j$. Then, $E_1 = E_g$ is the ground-state energy (accordingly, $\ket{\psi_1} = \ket{\psi_g}$ is the ground state) and $E_2$ is the energy of the first excited state. The energy gap between the ground state and the first excited state is $\Delta = E_2-E_1$. 

To compose a projector, we take 
\begin{eqnarray}
Z = 1-\frac{H-E_2}{\norm{H}_2}.
\end{eqnarray}
Notice that $Z$ and $H$ have the same eigenstates. The spectral decomposition of $Z$ is $Z= \sum_{m=1}^{d_{\mathcal{H}}} z_m\ketbra{\psi_m}{\psi_m}$, where 
\begin{eqnarray}
z_m = 1-\frac{E_m-E_2}{\norm{H}_2}.
\end{eqnarray}
Then, the ground state corresponds to $z_1 = 1+\frac{\Delta}{\norm{H}_2}>1$ (suppose the gap is finite), and the first excited state corresponds to $z_2 = 1$. Because $\abs{E_m-E_2}\leq 2\norm{H}_2$, $z_m\geq -1$ for all $m$. Therefore, except the ground state, $z_m$ of all excited states (i.e.~$m\geq 2$) is in the interval $[-1,1]$. 

The projector reads 
\begin{eqnarray}
\frac{T_n(Z)}{T_n(z_1)} &=& \sum_{m=1}^{d_{\mathcal{H}}} \frac{T_n(z_m)}{T_n(z_1)} \ketbra{\psi_m}{\psi_m} \notag \\
&=& \ketbra{\psi_g}{\psi_g} + \Omega,
\end{eqnarray}
where 
\begin{eqnarray}
\Omega &=& \sum_{m=2}^{d_{\mathcal{H}}} \frac{T_n(z_m)}{T_n(z_1)} \ketbra{\psi_m}{\psi_m}.
\end{eqnarray}
$\Omega$ and $H$ have the same eigenstates. Because $\abs{T_n(z_m)}\leq 1$ when $m\geq 2$, 
\begin{eqnarray}
\norm{\Omega}_2 \leq \frac{1}{T_n(z_1)}.
\end{eqnarray}

The key in using Chebyshev polynomials as projectors is laying all excited states in the interval $z\in [-1,1]$ and leaving the ground state out of the interval. For the CP basis, all eigenstates are in the interval $z\in [-1,1]$. Therefore, $f_k$ operators of the CP basis are not projectors in general. 

\subsection{Expanding the projector as Hamiltonian powers}

In this section, we expand the Chebyshev-polynomial projector as a linear combination of Hamiltonian powers $(H-E_0)^{k-1}$. We focus on the case that $E_0 = E_g$. 

The explicit expression of $T_n(z)$ ($n>0$) is 
\begin{eqnarray}
T_n(z) &=& n\sum_{m=0}^n (-2)^m\frac{(n+m-1)!}{(n-m)!(2m)!} \notag\\
&& \times (1-z)^m.
\end{eqnarray}
Because 
\begin{eqnarray}
(1-Z)^m &=& \sum_{l=0}^m \frac{m!}{(m-l)!l!} \notag\\
&&\times \frac{(E_0-E_2)^{m-l}(H-E_0)^l}{\norm{H}_2^m},
\end{eqnarray}
we have 
\begin{eqnarray}
T_n(Z) = \sum_{l=0}^n b_l \frac{(H-E_0)^l}{\norm{H}_2^l},
\end{eqnarray}
where 
\begin{eqnarray}
b_l &=& n\sum_{m=l}^n (-2)^m\frac{(n+m-1)!}{(n-m)!(2m)!} \frac{m!}{(m-l)!l!}\notag\\
&&\times \frac{(E_0-E_2)^{m-l}}{\norm{H}_2^{m-l}}.
\end{eqnarray}

Now, we give an upper bound of $\abs{b_l}$. First, we consider $l = 0$. In this case, 
\begin{eqnarray}
\abs{b_0} &\leq & n\sum_{m=0}^n 2^m\frac{(n+m-1)!}{(n-m)!(2m)!} \frac{\abs{E_0-E_2}^{m}}{\norm{H}_2^{m}} \notag \\
&=& T_n(z_1).
\end{eqnarray}
Then, for an arbitrary $l$, 
\begin{eqnarray}
\abs{b_l} &\leq & n^l T_n(z_1),
\end{eqnarray}
where the inequality 
\begin{eqnarray}
\frac{m!}{(m-l)!l!} &\leq & m^l \leq n^l
\end{eqnarray}
has been used. 

Finally, taking $d = n+1$ and 
\begin{eqnarray}
a_k = \frac{c_kb_{k-1}}{T_n(z_1)\norm{H}_2^{k-1}},
\end{eqnarray}
we have 
\begin{eqnarray}
\sum_{k=1}^d a_kc_k^{-1}f_k = \frac{T_n(Z)}{T_n(z_1)}e^{-\frac{1}{2}(H-E_0)^2\tau^2},
\end{eqnarray}
where $f_k$ are operators defined in Eq.~(\ref{eq:fk}). Because of the upper bound 
\begin{eqnarray}
\abs{a_k} &\leq & 2 \left(\frac{k-1}{e\tau^2}\right)^{\frac{k-1}{2}} \left(\frac{n}{\norm{H}_2}\right)^{k-1} \notag \\
&\leq & 2 \left(\frac{n^3}{e\norm{H}_2^2\tau^2}\right)^{\frac{k-1}{2}},
\end{eqnarray}
the overhead factor is 
\begin{eqnarray}\label{eq:tau}
\gamma &\approx & \left(\bfa^\dag \bfa\right)^2 \leq 4\sum_{k=1}^d \left(\frac{n^3}{e\norm{H}_2^2\tau^2}\right)^{k-1} \notag \\
&\leq & 4\frac{1}{1-n^3/(e\norm{H}_2^2\tau^2)}.
\end{eqnarray}

\section{The choice of $\tau$\label{app:tau}}

There are two bounds determining the range of $\tau$. First, we take $\tau>\sqrt{\frac{d-1}{e}}$, such that the upper bound of $\norm{f_k}_2$ decrease exponentially with $k$. Second, under the assumption $\abs{E_0-E_g} \leq \epsilon_0$, we take $\tau \leq \frac{\sqrt{d-1}}{\epsilon_0}$. To justify the second bound, we plot the absolute value of the rescaled Gaussian-power function $\abs{f'_k(x)}=\abs{f_k(x)}/c_k$ with different $k$ in Fig.~\ref{fig:Gaussian-power}, where the instance $(\mathrm{Heisenberg},\mathrm{chain},d=5)$ is taken as an example. This function has maximum points at $\left(\pm\frac{\sqrt{k-1}}{\tau},(\frac{k-1}{e\tau^2})^{\frac{k-1}{2}}/c_k\right)$. When $k$ increases from $1$ to $d$, the peaks of $f'_k(x)$ span in the range $\left[-\frac{\sqrt{d-1}}{\tau},\frac{\sqrt{d-1}}{\tau}\right]$, making the basis a filter~\cite{wall1995}. It is preferred that the ground-state energy is in the range, which leads to the second bound. Notice that we can always normalise the Hamiltonian by taking $H\leftarrow H/h_{tot}$. Then, $\norm{H}_2\leq 1$, $\abs{E_0-E_g} \leq 1$ for all $E_0\in [-1,0]$, and the two bounds are consistent. 

\begin{figure}[htbp]
\centering
\includegraphics[width=\linewidth]{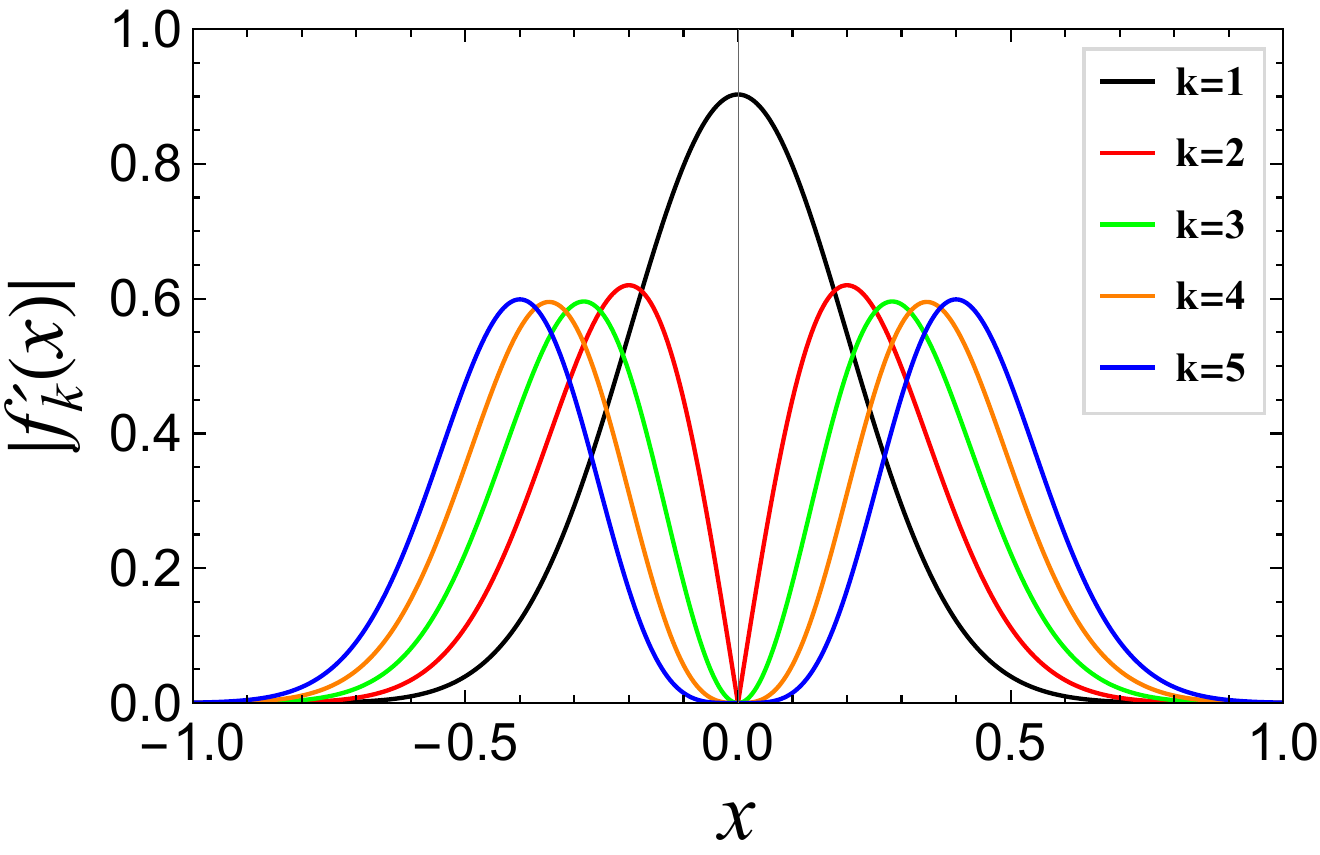}
\caption{The absolute value of the rescaled Gaussian-power function $\abs{f'_k(x)}=\abs{x^{k-1}e^{-x^2\tau^2/2}}/c_k$ with $k=1$ to $d$ and $\tau=5$. Here, we take the instance $(\mathrm{Heisenberg},\mathrm{chain},d=5)$ as an example.}
\label{fig:Gaussian-power}
\end{figure}

\section{Numerical results of the measurement number}
\label{app:practice}

In Theorem \ref{the}, we give a theoretical result about the sufficient measurement number, which is used in later numerical study. In this section, we numerically calculate the necessary measurement number and compare the regularisation method to the thresholding method. 

\subsection{Necessary measurement costs}

Due to the statistical error, the output energy $\hat{E}_{min}$ follows a certain distribution in the vicinity of the true ground-state energy $E_g$. Given the permissible energy error $\epsilon$ and failure probability $\kappa$, a measurement number is said to be sufficient if $\hat{E}_{min}$ is in the range $[E_g-\epsilon,E_g+\epsilon]$ with a probability not lower than $1-\kappa$; and the necessary measurement number is the minimum sufficient measurement number. 

Taking the instance $(\mathrm{Heisenberg},\mathrm{chain},d=5)$ as an example, Fig.~\ref{fig:M-epsilon} shows the errors in the ground-state energy and the corresponding necessary/sufficient measurement numbers. We can find that the sufficient measurement costs are close to the necessary measurement numbers. 

\begin{figure}[htbp]
\centering
\includegraphics[width=\linewidth]{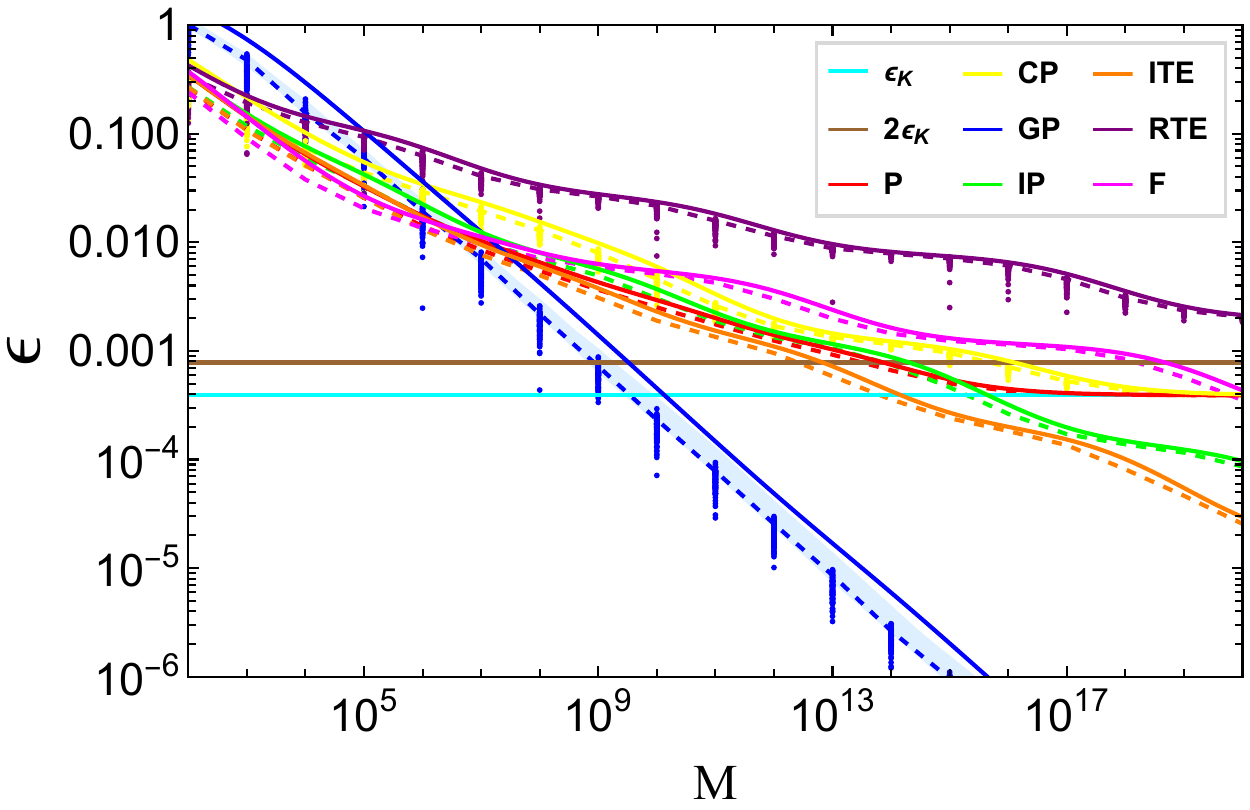}
\caption{The variational ground-state energy errors $\epsilon$ and the corresponding measurement numbers $M$ for the quantum KSD algorithms listed in Table~\ref{table}. The instance is $(\mathrm{Heisenberg},\mathrm{chain},d=5)$ and $E_g=-1$. The solid curves indicate the sufficient measurement numbers computed through Theorem~\ref{the}. The dots denote errors at various measurement numbers computed following Algorithm~\ref{alg:QKSD}. At each measurement number $M$, Algorithm~\ref{alg:QKSD} is implemented for $100$ times (simulated on a classical computer), and $(M,\abs{\hat{E}_{min}-E_g})$ of each implementation is plotted in the figure. The dashed lines are taken such that $10\%$ dots are above them, i.e. the dashed lines corresponding to the necessary measurement number for the failure probability $\kappa=0.1$. For the GP algorithm, the solid and dashed lines represent the result of the GP algorithm with $E_0 = E_g$. The light blue area illustrates the range of necessary measurement cost when we take $E_0\in [E_g-0.1,E_g+0.1]$ in the GP algorithm.}
\label{fig:M-epsilon}
\end{figure}

When estimating the necessary measurement numbers, we need to numerically simulate the estimators $\hat{\bfH}$ and $\hat{\bfS}$, which can be regarded as the summations of the matrices $\bfH$ and $\bfS$ and random noise matrices $\hat{\bfH}-\bfH$ and $\hat{\bfS}-\bfS$, respectively. Considering Gaussian statistical error and the collective measurement protocol, the random matrices $\hat{\bfH}-\bfH$ and $\hat{\bfS}-\bfS$ can be numerically generated. The real and imaginary parts of the random matrix entries obey the Gaussian distribution with the mean value $0$ (assume the estimation is unbiased for all algorithms) and the standard deviations are $C_{\bfH}/\sqrt{M}$ and $C_{\bfS}/\sqrt{M}$, respectively. Under collective measurement protocols, the random matrices have the same structure as $\bfH$ and $\bfS$. 

\subsection{Comparison using the thresholding method}

Ref.~\cite{epperly2022} provides a detailed analysis of the thresholding procedure. The optimal thresholds are roughly close to the noise rate. Here, we set them to $10C_{\bfS}/\sqrt{M}$ to get relatively stable results. With the same instance $(\mathrm{Heisenberg},\mathrm{chain},d=5)$, the results shown in Fig.~\ref{fig:thresholding} are similar to Fig.~\ref{fig:M-epsilon}. This indicates that regardless of which method is used to deal with the ill-conditioned problem, GP algorithm outperforms other algorithms in terms of measurement costs. 

\begin{figure}[htbp]
\centering
\includegraphics[width=\linewidth]{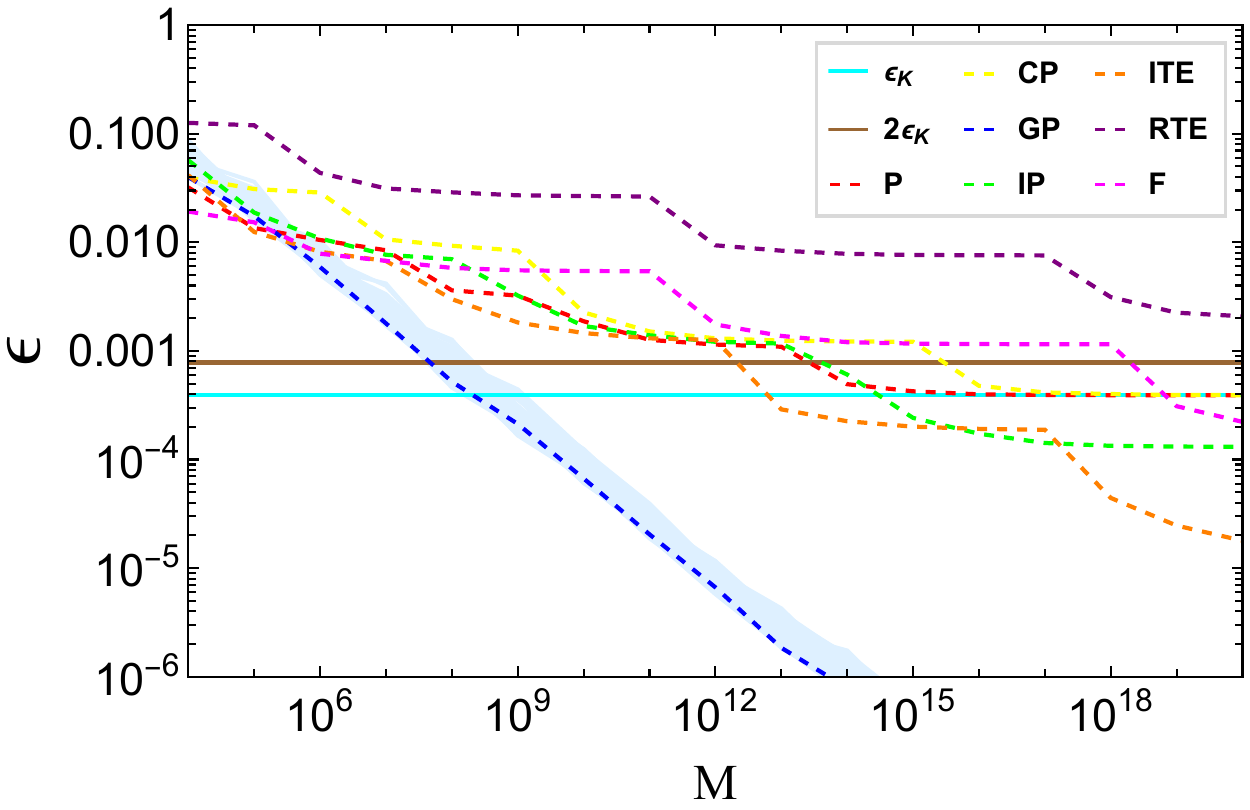}
\caption{The absolute energy error $\epsilon$ and the corresponding necessary measurement numbers $M$ when using the thresholding method. The instance is $(\mathrm{Heisenberg},\mathrm{chain},d=5)$ and $E_g=-1$.
Similar to Fig.~\ref{fig:M-epsilon}, the dashed lines indicate the necessary measurement costs with $\kappa=0.1$. The blue dashed line represents the result of the GP algorithm with $E_0 = E_g$. The light blue area illustrates the range of necessary measurement cost when we take $E_0\in [E_g-0.1,E_g+0.1]$ in the GP algorithm.}
\label{fig:thresholding}
\end{figure}

\begin{figure*}[htbp]
\centering
\includegraphics[width=0.85\linewidth]{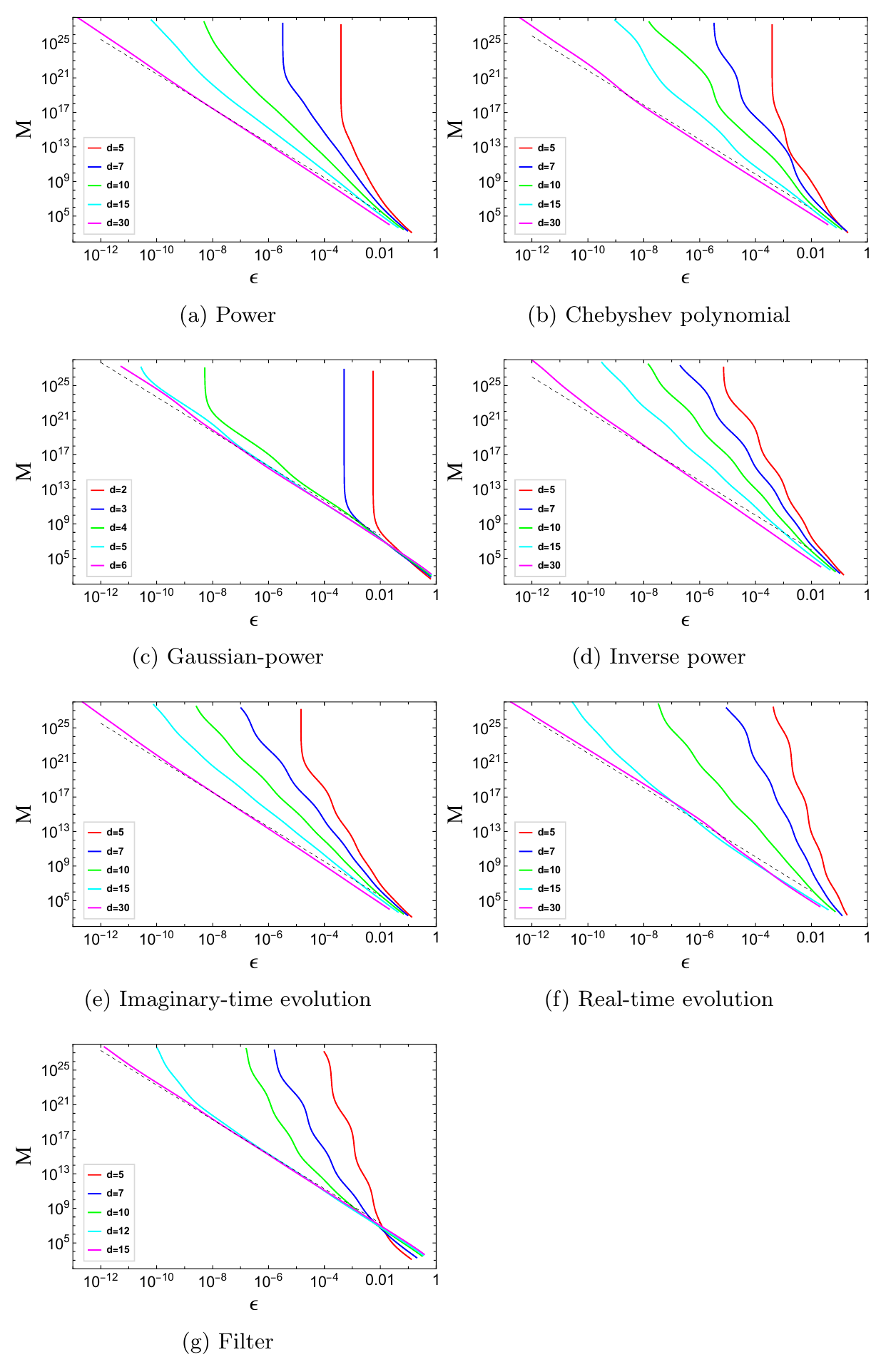}
\caption{The estimated measurement number $M$ and the corresponding energy error $\epsilon$ at different $d$ for quantum Krylov subspace diagonalisation algorithms. We take the $10$-qubit Heisenberg chain model as an example. The dashed lines indicate the $1/\sqrt{M}$ scaling. The failure rate $\kappa=0.1$. For the GP basis, $E_0$ is randomly chosen in $[E_g-0.1, E_g+0.1]$. For other bases, we use optimal parameters and minimum cost factors.}
\label{fig:scaling}
\end{figure*}

\bibliographystyle{quantum}
\bibliography{reference}

\end{document}